\documentclass[10pt,a4paper,english]{article}\usepackage[]{graphicx}\usepackage[]{color}
\makeatletter
\def\maxwidth{ %
  \ifdim\Gin@nat@width>\linewidth
    \linewidth
  \else
    \Gin@nat@width
  \fi
}
\makeatother

\definecolor{fgcolor}{rgb}{0.345, 0.345, 0.345}

\usepackage{framed}
\makeatletter
\newenvironment{kframe}{%
 \def\at@end@of@kframe{}%
 \ifinner\ifhmode%
  \def\at@end@of@kframe{\end{minipage}}%
  \begin{minipage}{\columnwidth}%
 \fi\fi%
 \def\FrameCommand##1{\hskip\@totalleftmargin \hskip-\fboxsep
 \colorbox{shadecolor}{##1}\hskip-\fboxsep
     \hskip-\linewidth \hskip-\@totalleftmargin \hskip\columnwidth}%
 \MakeFramed {\advance\hsize-\width
   \@totalleftmargin\z@ \linewidth\hsize
   \@setminipage}}%
 {\par\unskip\endMakeFramed%
 \at@end@of@kframe}
\makeatother

\definecolor{shadecolor}{rgb}{.97, .97, .97}
\definecolor{messagecolor}{rgb}{0, 0, 0}
\definecolor{warningcolor}{rgb}{1, 0, 1}
\definecolor{errorcolor}{rgb}{1, 0, 0}
\newenvironment{knitrout}{}{} 

\usepackage{alltt}


\usepackage[hyphens]{url}
\usepackage{amsmath}
\usepackage{amsfonts}
\usepackage{amssymb}
\usepackage{amsthm}

\theoremstyle{plain}
\newtheorem{theorem}{Theorem}[section]
\newtheorem{lemma}[theorem]{Lemma}
\newtheorem{corollary}[theorem]{Corollary}

\theoremstyle{definition}
\newtheorem{definition}[theorem]{Definition}

\theoremstyle{remark}

\newtheorem*{caution}{Caution}

\PassOptionsToPackage{hyphens}{url}\usepackage{hyperref}
\usepackage{hyperref}
\usepackage[square,numbers]{natbib}

\usepackage[newitem,newenum,increaseonly]{paralist}

\makeatletter
\makeatother

\usepackage[notheorems]{SharpeR}


    
    
    \usepackage{color} 
    \usepackage{enumerate} 
    \usepackage{amsmath} 
    \usepackage{amssymb} 
		\usepackage{fancyvrb} 
    \usepackage{hyperref}

		\definecolor{white}{rgb}{1.0,1.0,1.0}
    \definecolor{orange}{cmyk}{0,0.4,0.8,0.2}
    \definecolor{darkorange}{rgb}{.71,0.21,0.01}
    \definecolor{darkgreen}{rgb}{.12,.54,.11}
    \definecolor{myteal}{rgb}{.26, .44, .56}
    \definecolor{gray}{gray}{0.45}
    \definecolor{lightgray}{gray}{.95}
    \definecolor{mediumgray}{gray}{.8}
    \definecolor{inputbackground}{rgb}{.95, .95, .85}
    \definecolor{outputbackground}{rgb}{.95, .95, .95}
    \definecolor{traceback}{rgb}{1, .95, .95}
    \definecolor{red}{rgb}{.6,0,0}
    \definecolor{green}{rgb}{0,.65,0}
    \definecolor{brown}{rgb}{0.6,0.6,0}
    \definecolor{blue}{rgb}{0,.145,.698}
    \definecolor{purple}{rgb}{.698,.145,.698}
    \definecolor{cyan}{rgb}{0,.698,.698}
    \definecolor{lightgray}{gray}{0.5}
    
    \definecolor{darkgray}{gray}{0.25}
    \definecolor{lightred}{rgb}{1.0,0.39,0.28}
    \definecolor{lightgreen}{rgb}{0.48,0.99,0.0}
    \definecolor{lightblue}{rgb}{0.53,0.81,0.92}
    \definecolor{lightpurple}{rgb}{0.87,0.63,0.87}
    \definecolor{lightcyan}{rgb}{0.5,1.0,0.83}
    
    
    \DefineShortVerb[commandchars=\\\{\}]{\|}
    \DefineVerbatimEnvironment{Highlighting}{Verbatim}{commandchars=\\\{\}}

    

    
\makeatletter
\def\PY@reset{\let\PY@it=\relax \let\PY@bf=\relax%
    \let\PY@ul=\relax \let\PY@tc=\relax%
    \let\PY@bc=\relax \let\PY@ff=\relax}
\def\PY@tok#1{\csname PY@tok@#1\endcsname}
\def\PY@toks#1+{\ifx\relax#1\empty\else%
    \PY@tok{#1}\expandafter\PY@toks\fi}
\def\PY@do#1{\PY@bc{\PY@tc{\PY@ul{%
    \PY@it{\PY@bf{\PY@ff{#1}}}}}}}
\def\PY#1#2{\PY@reset\PY@toks#1+\relax+\PY@do{#2}}

\expandafter\def\csname PY@tok@gd\endcsname{\def\PY@tc##1{\textcolor[rgb]{0.63,0.00,0.00}{##1}}}
\expandafter\def\csname PY@tok@gu\endcsname{\let\PY@bf=\textbf\def\PY@tc##1{\textcolor[rgb]{0.50,0.00,0.50}{##1}}}
\expandafter\def\csname PY@tok@gt\endcsname{\def\PY@tc##1{\textcolor[rgb]{0.00,0.27,0.87}{##1}}}
\expandafter\def\csname PY@tok@gs\endcsname{\let\PY@bf=\textbf}
\expandafter\def\csname PY@tok@gr\endcsname{\def\PY@tc##1{\textcolor[rgb]{1.00,0.00,0.00}{##1}}}
\expandafter\def\csname PY@tok@cm\endcsname{\let\PY@it=\textit\def\PY@tc##1{\textcolor[rgb]{0.25,0.50,0.50}{##1}}}
\expandafter\def\csname PY@tok@vg\endcsname{\def\PY@tc##1{\textcolor[rgb]{0.10,0.09,0.49}{##1}}}
\expandafter\def\csname PY@tok@m\endcsname{\def\PY@tc##1{\textcolor[rgb]{0.40,0.40,0.40}{##1}}}
\expandafter\def\csname PY@tok@mh\endcsname{\def\PY@tc##1{\textcolor[rgb]{0.40,0.40,0.40}{##1}}}
\expandafter\def\csname PY@tok@go\endcsname{\def\PY@tc##1{\textcolor[rgb]{0.53,0.53,0.53}{##1}}}
\expandafter\def\csname PY@tok@ge\endcsname{\let\PY@it=\textit}
\expandafter\def\csname PY@tok@vc\endcsname{\def\PY@tc##1{\textcolor[rgb]{0.10,0.09,0.49}{##1}}}
\expandafter\def\csname PY@tok@il\endcsname{\def\PY@tc##1{\textcolor[rgb]{0.40,0.40,0.40}{##1}}}
\expandafter\def\csname PY@tok@cs\endcsname{\let\PY@it=\textit\def\PY@tc##1{\textcolor[rgb]{0.25,0.50,0.50}{##1}}}
\expandafter\def\csname PY@tok@cp\endcsname{\def\PY@tc##1{\textcolor[rgb]{0.74,0.48,0.00}{##1}}}
\expandafter\def\csname PY@tok@gi\endcsname{\def\PY@tc##1{\textcolor[rgb]{0.00,0.63,0.00}{##1}}}
\expandafter\def\csname PY@tok@gh\endcsname{\let\PY@bf=\textbf\def\PY@tc##1{\textcolor[rgb]{0.00,0.00,0.50}{##1}}}
\expandafter\def\csname PY@tok@ni\endcsname{\let\PY@bf=\textbf\def\PY@tc##1{\textcolor[rgb]{0.60,0.60,0.60}{##1}}}
\expandafter\def\csname PY@tok@nl\endcsname{\def\PY@tc##1{\textcolor[rgb]{0.63,0.63,0.00}{##1}}}
\expandafter\def\csname PY@tok@nn\endcsname{\let\PY@bf=\textbf\def\PY@tc##1{\textcolor[rgb]{0.00,0.00,1.00}{##1}}}
\expandafter\def\csname PY@tok@no\endcsname{\def\PY@tc##1{\textcolor[rgb]{0.53,0.00,0.00}{##1}}}
\expandafter\def\csname PY@tok@na\endcsname{\def\PY@tc##1{\textcolor[rgb]{0.49,0.56,0.16}{##1}}}
\expandafter\def\csname PY@tok@nb\endcsname{\def\PY@tc##1{\textcolor[rgb]{0.00,0.50,0.00}{##1}}}
\expandafter\def\csname PY@tok@nc\endcsname{\let\PY@bf=\textbf\def\PY@tc##1{\textcolor[rgb]{0.00,0.00,1.00}{##1}}}
\expandafter\def\csname PY@tok@nd\endcsname{\def\PY@tc##1{\textcolor[rgb]{0.67,0.13,1.00}{##1}}}
\expandafter\def\csname PY@tok@ne\endcsname{\let\PY@bf=\textbf\def\PY@tc##1{\textcolor[rgb]{0.82,0.25,0.23}{##1}}}
\expandafter\def\csname PY@tok@nf\endcsname{\def\PY@tc##1{\textcolor[rgb]{0.00,0.00,1.00}{##1}}}
\expandafter\def\csname PY@tok@si\endcsname{\let\PY@bf=\textbf\def\PY@tc##1{\textcolor[rgb]{0.73,0.40,0.53}{##1}}}
\expandafter\def\csname PY@tok@s2\endcsname{\def\PY@tc##1{\textcolor[rgb]{0.73,0.13,0.13}{##1}}}
\expandafter\def\csname PY@tok@vi\endcsname{\def\PY@tc##1{\textcolor[rgb]{0.10,0.09,0.49}{##1}}}
\expandafter\def\csname PY@tok@nt\endcsname{\let\PY@bf=\textbf\def\PY@tc##1{\textcolor[rgb]{0.00,0.50,0.00}{##1}}}
\expandafter\def\csname PY@tok@nv\endcsname{\def\PY@tc##1{\textcolor[rgb]{0.10,0.09,0.49}{##1}}}
\expandafter\def\csname PY@tok@s1\endcsname{\def\PY@tc##1{\textcolor[rgb]{0.73,0.13,0.13}{##1}}}
\expandafter\def\csname PY@tok@sh\endcsname{\def\PY@tc##1{\textcolor[rgb]{0.73,0.13,0.13}{##1}}}
\expandafter\def\csname PY@tok@sc\endcsname{\def\PY@tc##1{\textcolor[rgb]{0.73,0.13,0.13}{##1}}}
\expandafter\def\csname PY@tok@sx\endcsname{\def\PY@tc##1{\textcolor[rgb]{0.00,0.50,0.00}{##1}}}
\expandafter\def\csname PY@tok@bp\endcsname{\def\PY@tc##1{\textcolor[rgb]{0.00,0.50,0.00}{##1}}}
\expandafter\def\csname PY@tok@c1\endcsname{\let\PY@it=\textit\def\PY@tc##1{\textcolor[rgb]{0.25,0.50,0.50}{##1}}}
\expandafter\def\csname PY@tok@kc\endcsname{\let\PY@bf=\textbf\def\PY@tc##1{\textcolor[rgb]{0.00,0.50,0.00}{##1}}}
\expandafter\def\csname PY@tok@c\endcsname{\let\PY@it=\textit\def\PY@tc##1{\textcolor[rgb]{0.25,0.50,0.50}{##1}}}
\expandafter\def\csname PY@tok@mf\endcsname{\def\PY@tc##1{\textcolor[rgb]{0.40,0.40,0.40}{##1}}}
\expandafter\def\csname PY@tok@err\endcsname{\def\PY@bc##1{\setlength{\fboxsep}{0pt}\fcolorbox[rgb]{1.00,0.00,0.00}{1,1,1}{\strut ##1}}}
\expandafter\def\csname PY@tok@kd\endcsname{\let\PY@bf=\textbf\def\PY@tc##1{\textcolor[rgb]{0.00,0.50,0.00}{##1}}}
\expandafter\def\csname PY@tok@ss\endcsname{\def\PY@tc##1{\textcolor[rgb]{0.10,0.09,0.49}{##1}}}
\expandafter\def\csname PY@tok@sr\endcsname{\def\PY@tc##1{\textcolor[rgb]{0.73,0.40,0.53}{##1}}}
\expandafter\def\csname PY@tok@mo\endcsname{\def\PY@tc##1{\textcolor[rgb]{0.40,0.40,0.40}{##1}}}
\expandafter\def\csname PY@tok@kn\endcsname{\let\PY@bf=\textbf\def\PY@tc##1{\textcolor[rgb]{0.00,0.50,0.00}{##1}}}
\expandafter\def\csname PY@tok@mi\endcsname{\def\PY@tc##1{\textcolor[rgb]{0.40,0.40,0.40}{##1}}}
\expandafter\def\csname PY@tok@gp\endcsname{\let\PY@bf=\textbf\def\PY@tc##1{\textcolor[rgb]{0.00,0.00,0.50}{##1}}}
\expandafter\def\csname PY@tok@o\endcsname{\def\PY@tc##1{\textcolor[rgb]{0.40,0.40,0.40}{##1}}}
\expandafter\def\csname PY@tok@kr\endcsname{\let\PY@bf=\textbf\def\PY@tc##1{\textcolor[rgb]{0.00,0.50,0.00}{##1}}}
\expandafter\def\csname PY@tok@s\endcsname{\def\PY@tc##1{\textcolor[rgb]{0.73,0.13,0.13}{##1}}}
\expandafter\def\csname PY@tok@kp\endcsname{\def\PY@tc##1{\textcolor[rgb]{0.00,0.50,0.00}{##1}}}
\expandafter\def\csname PY@tok@w\endcsname{\def\PY@tc##1{\textcolor[rgb]{0.73,0.73,0.73}{##1}}}
\expandafter\def\csname PY@tok@kt\endcsname{\def\PY@tc##1{\textcolor[rgb]{0.69,0.00,0.25}{##1}}}
\expandafter\def\csname PY@tok@ow\endcsname{\let\PY@bf=\textbf\def\PY@tc##1{\textcolor[rgb]{0.67,0.13,1.00}{##1}}}
\expandafter\def\csname PY@tok@sb\endcsname{\def\PY@tc##1{\textcolor[rgb]{0.73,0.13,0.13}{##1}}}
\expandafter\def\csname PY@tok@k\endcsname{\let\PY@bf=\textbf\def\PY@tc##1{\textcolor[rgb]{0.00,0.50,0.00}{##1}}}
\expandafter\def\csname PY@tok@se\endcsname{\let\PY@bf=\textbf\def\PY@tc##1{\textcolor[rgb]{0.73,0.40,0.13}{##1}}}
\expandafter\def\csname PY@tok@sd\endcsname{\let\PY@it=\textit\def\PY@tc##1{\textcolor[rgb]{0.73,0.13,0.13}{##1}}}


\makeatother

    \definecolor{incolor}{rgb}{0.0, 0.0, 0.5}
    \definecolor{outcolor}{rgb}{0.545, 0.0, 0.0}

    \sloppy 
    \hypersetup{
      breaklinks=true,  
      colorlinks=true,
      urlcolor=blue,
      linkcolor=darkorange,
      citecolor=darkgreen,
      }
    

\IfFileExists{upquote.sty}{\usepackage{upquote}}{}
\begin{document}

\title{Asymptotic Distribution of the \txtMwtz Portfolio}
\author{Steven E. Pav \thanks{\email{steven@gilgamath.com}}}

\maketitle

\begin{abstract}
The asymptotic distribution of the \txtMP, \minvAB{\svsig}{\svmu},
is derived, for the general case (assuming fourth moments of returns exist), 
and for the case of multivariate normal returns. 
The derivation allows for inference which is robust to heteroskedasticity 
and autocorrelation of moments up to order four. As a side effect, one 
can estimate the proportion of error in the \txtMP due to 
mis-estimation of the covariance matrix. A likelihood ratio test is given
which generalizes Dempster's Covariance Selection test to allow inference
on linear combinations of the precision matrix and the 
\txtMP. \cite{dempster1972} Extensions of the main method to deal with
hedged portfolios, conditional heteroskedasticity, conditional 
expectation, and constrained estimation are given.
It is shown that the Hotelling-Lawley
statistic generalizes the (squared) \txtSR under the conditional
expectation model.
Asymptotic distributions of all four of the common `MGLH' statistics are found,
assuming random covariates. \cite{shieh2005}  
Examples are given demonstrating the possible uses of these results.
\end{abstract}

\section{Introduction}

Given \nlatf assets with expected return \pvmu and covariance of return \pvsig,
the portfolio defined as 
\begin{equation}
\pportwopt \defeq \lambda \minvAB{\pvsig}{\pvmu}
\end{equation}
plays a special role in modern portfolio 
theory. \cite{markowitz1952portfolio,brandt2009portfolio,GVK322224764}
It is known as the `efficient portfolio', the `tangency portfolio', 
and, somewhat informally, the `\txtMP'. 
It appears, for various $\lambda$, in the solution to numerous
portfolio optimization problems.  
Besides the classic mean-variance formulation,
it solves the (population) \txtSR maximization problem:
\begin{equation}
\max_{\pportw : \qform{\pvsig}{\pportw} \le \Rbuj^2} 
\frac{\trAB{\pportw}{\pvmu} - \rfr}{\sqrt{\qform{\pvsig}{\pportw}}},
\label{eqn:opt_port_I}
\end{equation}
where $\rfr\ge 0$ is the risk-free, or `disastrous', rate of return, and 
$\Rbuj > 0$ is some given `risk budget'. 
The solution to this optimization problem is $\lambda \minvAB{\pvsig}{\pvmu}$,
where $\lambda = \fracc{\Rbuj}{\sqrt{\qiform{\pvsig}{\pvmu}}}.$
\nocite{markowitz1952portfolio}  

In practice, the \txtMP has a somewhat checkered history. 
The population parameters \pvmu and \pvsig are not known and must
be estimated from samples. Estimation error results in a feasible
portfolio, \sportwopt, of dubious value. Michaud went so far as to call
mean-variance optimization, 
``error maximization.'' \cite{michaud1989markowitz}  
It has been suggested that simple portfolio heuristics outperform the
\txtMP in practice.  \cite{demiguel2009optimal}  

This paper focuses on the asymptotic distribution of the sample \txtMP. 
By formulating the problem as a linear regression, Britten-Jones 
very cleverly devised hypothesis tests on elements of \pportwopt,
assuming multivariate Gaussian returns. \cite{BrittenJones1999}  
In a remarkable series of papers, Okhrin and Schmid, and 
Bodnar and Okhrin give the (univariate) density of
the dot product of \pportwopt and a deterministic vector, again for the
case of Gaussian returns. \cite{okhrin2006distributional,SJOS:SJOS729}  
Okhrin and Schmid also show that all moments of
$\fracc{\sportwopt}{\trAB{\vone}{\sportwopt}}$ of order greater than or
equal to one do not exist. \cite{okhrin2006distributional}

Here I derive asymptotic normality of \sportwopt, the sample
analogue of \pportwopt, assuming only that the first four moments
exist. Feasible estimation of the variance of \sportwopt is amenable
to heteroskedasticity and autocorrelation robust 
inference. \cite{Zeileis:2004:JSSOBK:v11i10}
The asymptotic distribution under Gaussian returns is also derived.

After estimating the covariance of \sportwopt, one can compute Wald
test statistics for the elements of \sportwopt, possibly leading one
to drop some assets from consideration (`sparsification'). Having
an estimate of the covariance can also allow portfolio shrinkage.
\cite{demiguel2013size,kinkawa2010estimation}

The derivations in this paper actually solve a more general problem
than the distribution of the sample \txtMP. The covariance of 
\sportwopt and the `precision matrix,' \minv{\svsig} are derived.
This allows one, for example, to estimate the proportion of error
in the \txtMP attributable to mis-estimation of the covariance
matrix. According to lore, the error in portfolio weights is
mostly attributable to mis-estimation of \pvmu, 
not of \pvsig. \cite{chopra1993effect,NBERw0444}

Finally, assuming Gaussian returns, a likelihood ratio test for
performing inference on linear combinations of elements of
the \txtMP and the precision matrix is derived. This test
generalizes a procedure by Dempster for inference on
the precision matrix alone. \cite{dempster1972}


\section{The augmented second moment}

Let \vreti be an array of returns of \nlatf assets, with mean \pvmu, and
covariance \pvsig.  Let \avreti be \vreti prepended with a 1:
$\avreti = \asvec{1,\tr{\vreti}}$. Consider the second
moment of \avreti:
\begin{equation}
\pvsm \defeq \E{\ogram{\avreti}} = 
	\twobytwo{1}{\tr{\pvmu}}{\pvmu}{\pvsig + \ogram{\pvmu}}.
\label{eqn:pvsm_def}
\end{equation}
By inspection one can confirm that the
inverse of \pvsm is
\begin{equation}
\minv{\pvsm} 
= \twobytwo{1 + \qiform{\pvsig}{\pvmu}}{-\tr{\pvmu}\minv{\pvsig}}{-\minv{\pvsig}\pvmu}{\minv{\pvsig}}
= \twobytwo{1 + \psnrsqopt}{-\tr{\pportwopt}}{-\pportwopt}{\minv{\pvsig}},
\label{eqn:trick_inversion}
\end{equation}
where $\pportwopt=\minvAB{\pvsig}{\svmu}$ is the \txtMP,
and $\psnropt=\sqrt{\qiform{\pvsig}{\pvmu}}$ is the \txtSR of
that portfolio. The matrix \pvsm contains the first and second
moment of \vreti, but is also the uncentered second moment of
\avreti, a fact which makes it amenable to analysis via the
central limit theorem.

The relationships above are merely facts of linear algebra, and so
hold for sample estimates as well:
\begin{equation*}
\minv{\twobytwo{1}{\tr{\svmu}}{\svmu}{\svsig + \ogram{\svmu}}}
= {\twobytwo{1 +
\ssrsqopt}{-\tr{\sportwopt}}{-\sportwopt}{\minv{\svsig}}},
\end{equation*}
where \svmu, \svsig are some sample estimates of \pvmu and \pvsig, and 
$\sportwopt = \minvAB{\svsig}{\svmu}, \ssrsqopt = \qiform{\svsig}{\svmu}$.

Given \ssiz \iid observations \vreti[i], let \amreti be the matrix
whose rows are the vectors \tr{\avreti[i]}. The na\"{i}ve sample estimator
\begin{equation}
\svsm \defeq \oneby{\ssiz}\gram{\amreti}
\end{equation}
is an unbiased estimator since $\pvsm = \E{\gram{\avreti}}$.

\subsection{Matrix Derivatives}

\label{subsec:matrix_derivatives}

Some notation and technical results concerning matrices are required.
\begin{definition}[Matrix operations]
For matrix \Mtx{A},
let \fvec{\Mtx{A}}, and \fvech{\Mtx{A}} be the vector and half-space
vector operators.  The former turns an \bby{\nlatf}{\nlatf} matrix into
an $\nlatf^2$ vector of its columns stacked on top of each other; the 
latter vectorizes a symmetric (or lower triangular) matrix into a vector
of the non-redundant elements.  
Let \Elim be the `Elimination Matrix,' a matrix of zeros and ones with the
property that 
$\fvech{\Mtx{A}} = \Elim\fvec{\Mtx{A}}.$ The `Duplication Matrix,' \Dupp,
is the matrix of zeros and ones that reverses this operation: 
$\Dupp \fvech{\Mtx{A}} = \fvec{\Mtx{A}}.$ \cite{magnus1980elimination} 
Note that this implies that 
$$\Elim\Dupp = \eye \wrapParens{\ne \Dupp\Elim}.$$

We will let \Komm be the 'commutation matrix', \ie the matrix whose rows are
a permutation of the rows of the identity matrix such that 
$\Komm\fvec{\Mtx{A}} = \fvec{\tr{\Mtx{A}}}$ for square matrix \Mtx{A}.


\end{definition}
\begin{definition}[Derivatives]
For $m$-vector \vect{x}, and $n$-vector \vect{y}, let the derivative
\dbyd{\vect{y}}{\vect{x}} be the \bby{n}{m} matrix whose first column
is the partial derivative of \vect{y} with respect to $x_1$.  
This follows the so-called `numerator layout' convention. 
For matrices \Mtx{Y} and \Mtx{X}, define
\begin{equation*}
\dbyd{\Mtx{Y}}{\Mtx{X}} \defeq \dbyd{\fvec{\Mtx{Y}}}{\fvec{\Mtx{X}}}.
\end{equation*}
\end{definition}
\begin{lemma}[Miscellaneous Derivatives]
\label{lemma:misc_derivs}
For symmetric matrices \Mtx{Y} and \Mtx{X}, 
\begin{equation}
\dbyd{\fvech{\Mtx{Y}}}{\fvec{\Mtx{X}}} = \Elim \dbyd{\Mtx{Y}}{\Mtx{X}},\quad
\dbyd{\fvec{\Mtx{Y}}}{\fvech{\Mtx{X}}} = \dbyd{\Mtx{Y}}{\Mtx{X}}\Dupp,\quad
\dbyd{\fvech{\Mtx{Y}}}{\fvech{\Mtx{X}}} = \EXD{\dbyd{\Mtx{Y}}{\Mtx{X}}}.
\end{equation}
\end{lemma}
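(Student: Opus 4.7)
The plan is to treat each of the three identities as a direct application of the chain rule together with the defining linear relations $\fvech{\Mtx{Y}} = \Elim \fvec{\Mtx{Y}}$ and $\fvec{\Mtx{X}} = \Dupp \fvech{\Mtx{X}}$ (the latter valid precisely because \Mtx{X} is symmetric). Recall that the definition $\dbyd{\Mtx{Y}}{\Mtx{X}} \defeq \dbyd{\fvec{\Mtx{Y}}}{\fvec{\Mtx{X}}}$ is the object we want to relate to the three variants on the left-hand sides.

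For the first identity, I would observe that $\fvech{\Mtx{Y}} = \Elim \fvec{\Mtx{Y}}$ holds identically in the entries of \Mtx{Y}, so differentiation with respect to $\fvec{\Mtx{X}}$ is linear: $\dbyd{\fvech{\Mtx{Y}}}{\fvec{\Mtx{X}}} = \Elim \, \dbyd{\fvec{\Mtx{Y}}}{\fvec{\Mtx{X}}} = \Elim \, \dbyd{\Mtx{Y}}{\Mtx{X}}$. For the second identity, I would use that \Mtx{X} is symmetric so $\fvec{\Mtx{X}} = \Dupp \fvech{\Mtx{X}}$ identically, whence $\dbyd{\fvec{\Mtx{X}}}{\fvech{\Mtx{X}}} = \Dupp$; by the chain rule,
\begin{equation*}
\dbyd{\fvec{\Mtx{Y}}}{\fvech{\Mtx{X}}} = \dbyd{\fvec{\Mtx{Y}}}{\fvec{\Mtx{X}}} \, \dbyd{\fvec{\Mtx{X}}}{\fvech{\Mtx{X}}} = \dbyd{\Mtx{Y}}{\Mtx{X}} \, \Dupp.
\end{equation*}
The third identity follows immediately from composing the first two (or equivalently by applying both relations at once), giving $\Elim \, \dbyd{\Mtx{Y}}{\Mtx{X}} \, \Dupp$, which is the meaning of the bracketed operator $\EXD{\cdot}$.

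The only point that needs care, and which I expect to be the main (minor) obstacle, is making sure the role of symmetry is used consistently. In particular, $\fvec{\Mtx{X}} = \Dupp \fvech{\Mtx{X}}$ is not merely a relabeling but a constraint that identifies off-diagonal pairs as the same underlying variable; without symmetry one would have $\Elim \Dupp = \eye$ but not $\Dupp \Elim = \eye$, so the chain-rule composition must proceed in the stated order. Once this is noted, the three identities are immediate and require no further calculation.
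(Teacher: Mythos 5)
Your proof is correct and follows essentially the same route as the paper: the paper establishes the first identity from $\fvech{\Mtx{Y}} = \Elim\fvec{\Mtx{Y}}$ by linearity of the derivative and dismisses the other two as following ``similarly,'' which is exactly the chain-rule composition with $\fvec{\Mtx{X}} = \Dupp\fvech{\Mtx{X}}$ that you spell out. Your explicit remark on where symmetry of \Mtx{X} is needed (and on the order of composition, since $\Dupp\Elim \ne \eye$) is a minor refinement of detail, not a different argument.
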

\begin{proof}
For the first equation, note that 
$\fvech{\Mtx{Y}} = \Elim\fvec{\Mtx{Y}}$, thus by the chain rule:
$$
\dbyd{\fvech{\Mtx{Y}}}{\fvec{\Mtx{X}}} = 
\dbyd{\Elim \fvec{\Mtx{Y}}}{\fvec{\Mtx{Y}}} = \Elim \dbyd{\Mtx{Y}}{\Mtx{X}},
$$
by linearity of the derivative. The other identities follow similarly.
\end{proof}
\begin{lemma}[Derivative of matrix inverse]
For invertible matrix \Mtx{A}, 
\begin{equation}
\dbyd{\minv{\Mtx{A}}}{\Mtx{A}} 
= - \wrapParens{\trminv{\Mtx{A}}\kron\minv{\Mtx{A}}}
= - \minvParens{\tr{\Mtx{A}}\kron\Mtx{A}}.
\label{eqn:deriv_vec_matrix_inverse}
\end{equation}
For \emph{symmetric} \Mtx{A}, the derivative with respect to the
non-redundant part is 
\begin{equation}
\dbyd{\fvech{\minv{\Mtx{A}}}}{\fvech{\Mtx{A}}} 
= - \EXD{\wrapParens{\minv{\Mtx{A}}\kron\minv{\Mtx{A}}}}.
\label{eqn:deriv_vech_matrix_inverse}
\end{equation}

\label{lemma:deriv_vech_matrix_inverse}
\end{lemma}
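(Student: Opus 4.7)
The plan is to start from the defining identity $\Mtx{A}\minv{\Mtx{A}}=\eye$, differentiate, and vectorize. First I would take the differential of both sides: since the right-hand side is constant, the product rule yields
$$
(d\Mtx{A})\minv{\Mtx{A}} + \Mtx{A}\,d\minv{\Mtx{A}} = 0,
$$
so that $d\minv{\Mtx{A}} = -\minv{\Mtx{A}}\,(d\Mtx{A})\,\minv{\Mtx{A}}$. Next I would apply the vec operator and invoke the standard Kronecker identity $\fvec{\Mtx{B}\Mtx{C}\Mtx{D}} = (\tr{\Mtx{D}}\kron\Mtx{B})\fvec{\Mtx{C}}$ with $\Mtx{B}=\Mtx{D}=\minv{\Mtx{A}}$ and $\Mtx{C}=d\Mtx{A}$, which immediately gives
$$
\fvec{d\minv{\Mtx{A}}} = -\wrapParens{\trminv{\Mtx{A}}\kron\minv{\Mtx{A}}}\fvec{d\Mtx{A}},
$$
and reading off the Jacobian establishes the first equality of \eqref{eqn:deriv_vec_matrix_inverse}. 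The second equality there is then a one-line consequence of the Kronecker identities $\minv{\Mtx{P}\kron\Mtx{Q}}=\minv{\Mtx{P}}\kron\minv{\Mtx{Q}}$ and $\tr{\Mtx{P}\kron\Mtx{Q}}=\tr{\Mtx{P}}\kron\tr{\Mtx{Q}}$, so that $\trminv{\Mtx{A}}\kron\minv{\Mtx{A}} = \minvParens{\tr{\Mtx{A}}\kron\Mtx{A}}$.

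For the symmetric case \eqref{eqn:deriv_vech_matrix_inverse}, I would reduce to the first part via the last identity of Lemma~\ref{lemma:misc_derivs}: applied with $\Mtx{Y}=\minv{\Mtx{A}}$ (which is symmetric whenever $\Mtx{A}$ is) and $\Mtx{X}=\Mtx{A}$, it gives
$$
\dbyd{\fvech{\minv{\Mtx{A}}}}{\fvech{\Mtx{A}}} = \EXD{\dbyd{\minv{\Mtx{A}}}{\Mtx{A}}} = -\EXD{\trminv{\Mtx{A}}\kron\minv{\Mtx{A}}}.
$$
Finally I would replace $\trminv{\Mtx{A}}$ by $\minv{\Mtx{A}}$, using symmetry of $\Mtx{A}$, to obtain the stated form $-\EXD{\minv{\Mtx{A}}\kron\minv{\Mtx{A}}}$.

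There is no real obstacle here; every step is a direct appeal to a standard identity or to Lemma~\ref{lemma:misc_derivs}. The only place one has to be mildly careful is the bookkeeping around transposes in the Kronecker formula, since in the general (nonsymmetric) case the first factor is $\trminv{\Mtx{A}}$, not $\minv{\Mtx{A}}$, and it is only in the symmetric case invoked in the second half that the transpose can be suppressed.
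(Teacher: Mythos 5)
Your proposal is correct and matches the paper's route: the paper simply cites the first identity as a known result and then obtains the vech version via Lemma~\ref{lemma:misc_derivs}, exactly as you do, with your differential-of-$\Mtx{A}\minv{\Mtx{A}}=\eye$ argument being the standard derivation underlying the cited result. Your care with the transpose in $\trminv{\Mtx{A}}\kron\minv{\Mtx{A}}$, suppressed only once symmetry is invoked, is the right bookkeeping.
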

Note how this result generalizes the scalar derivative:
$\dbyd{x^{-1}}{x} = - \wrapParens{x^{-1} x^{-1}}.$
\begin{proof}
\eqnref{deriv_vec_matrix_inverse} is a known 
result. \cite{facklernotes,magnus1999matrix}
\eqnref{deriv_vech_matrix_inverse} then follows using \lemmaref{misc_derivs}.
\end{proof}

\subsection{Asymptotic distribution of the \txtMP}

\label{subsec:dist_markoport}
\nocite{BrittenJones1999}

Collecting the mean and covariance into the second moment matrix 
gives the asymptotic distribution of the sample \txtMP
without much work. In some sense, this computation
generalizes the `standard' asymptotic analysis of Sharpe ratio of
multiple assets. \cite{jobsonkorkie1981,lo2002,Ledoit2008850,Leung2008} 

\begin{theorem}
\label{theorem:inv_distribution}
Let \svsm be the unbiased sample estimate of 
\pvsm, based on \ssiz \iid samples of \vreti.
Let \pvvar be the variance of $\fvech{\ogram{\avreti}}$.
Then, asymptotically in \ssiz, 
\begin{equation}
\sqrt{\ssiz}\wrapParens{\fvech{\minv{\svsm}} - \fvech{\minv{\pvsm}}} 
\rightsquigarrow \normlaw{0,\qoform{\pvvar}{\Mtx{H}}},
\label{eqn:mvclt_isvsm}
\end{equation}
where
\begin{equation}
\Mtx{H} = -\EXD{\wrapParens{\AkronA{\minv{\pvsm}}}}.
\end{equation}
Furthermore, we may replace \pvvar in this equation with an asymptotically
consistent estimator, \svvar.
\end{theorem}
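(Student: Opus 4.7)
The plan is to realize this as a standard application of the multivariate central limit theorem followed by the delta method, with the computation of the Jacobian handled entirely by \lemmaref{deriv_vech_matrix_inverse}.

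First I would observe that, since $\ogram{\avreti[i]}$ is symmetric, the vectors $\fvech{\ogram{\avreti[i]}}$ are \iid with mean $\fvech{\pvsm}$ and, by assumption, finite covariance $\pvvar$ (this is where the fourth-moment hypothesis on \vreti enters). The na\"ive estimator \svsm satisfies $\fvech{\svsm} = \oneby{\ssiz}\sum_i \fvech{\ogram{\avreti[i]}}$, so the multivariate CLT gives
\begin{equation*}
\sqrt{\ssiz}\wrapParens{\fvech{\svsm} - \fvech{\pvsm}} \rightsquigarrow \normlaw{0,\pvvar}.
\end{equation*}

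Next I would apply the multivariate delta method to the map $f : \fvech{\Mtx{A}} \mapsto \fvech{\minv{\Mtx{A}}}$, which is continuously differentiable in a neighborhood of $\fvech{\pvsm}$ because $\pvsm$ is invertible (its block form in \eqnref{trick_inversion} is nonsingular as long as $\pvsig$ is). The Jacobian is supplied directly by \eqnref{deriv_vech_matrix_inverse} in \lemmaref{deriv_vech_matrix_inverse}, which evaluated at \pvsm gives exactly the matrix $\Mtx{H} = -\EXD{\wrapParens{\AkronA{\minv{\pvsm}}}}$. The delta method then yields
\begin{equation*}
\sqrt{\ssiz}\wrapParens{\fvech{\minv{\svsm}} - \fvech{\minv{\pvsm}}} \rightsquigarrow \normlaw{0,\Mtx{H}\pvvar\tr{\Mtx{H}}},
\end{equation*}
and since $\Mtx{H}$ is symmetric this is $\normlaw{0,\qoform{\pvvar}{\Mtx{H}}}$, matching \eqnref{mvclt_isvsm}.

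For the final claim, I would invoke Slutsky's theorem: if \svvar is any consistent estimator of \pvvar, then since \minv{\svsm} is itself a consistent estimator of \minv{\pvsm} (by the continuous mapping theorem), the plug-in sandwich $\Mtx{\hat H}\svvar\tr{\Mtx{\hat H}}$ converges in probability to $\qoform{\pvvar}{\Mtx{H}}$, so the same limiting law holds with \pvvar replaced by \svvar.

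I do not anticipate a real obstacle here: the argument is a packaged CLT-plus-delta-method computation. The only points that require a moment of care are verifying that \pvvar is finite (fourth moments of returns translate to second moments of entries of $\ogram{\avreti}$) and that the delta method is being applied to the half-vectorization rather than the full vectorization, so that the correct derivative from \lemmaref{deriv_vech_matrix_inverse} is used; using \dbyd{\fvec{\minv{\Mtx{A}}}}{\fvec{\Mtx{A}}} instead would introduce spurious redundancy that the \Elim and \Dupp matrices are designed to eliminate.
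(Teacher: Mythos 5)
Your proposal follows exactly the paper's route: the multivariate CLT applied to $\fvech{\svsm}$, followed by the delta method with the Jacobian supplied by \lemmaref{deriv_vech_matrix_inverse}. The additional remarks on finiteness of \pvvar and the Slutsky step for the plug-in estimator are correct elaborations of details the paper leaves implicit.
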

\begin{proof}
Under the multivariate central limit theorem \cite{wasserman2004all}
\begin{equation}
\sqrt{\ssiz}\wrapParens{\fvech{\svsm} - \fvech{\pvsm}} 
\rightsquigarrow 
\normlaw{0,\pvvar},
\label{eqn:mvclt_svsm}
\end{equation}
where \pvvar is the variance of $\fvech{\ogram{\avreti}}$, which, in general, 
is unknown. 
By the delta method \cite{wasserman2004all},
\begin{equation*}
\sqrt{\ssiz}\wrapParens{\fvech{\minv{\svsm}} - \fvech{\minv{\pvsm}}} 
\rightsquigarrow 
\normlaw{0,\qoform{\pvvar}{\wrapBracks{\dbyd{\fvech{\minv{\pvsm}}}{\fvech{\pvsm}}}}}.
\end{equation*}
The derivative is given by \lemmaref{deriv_vech_matrix_inverse}, and
the result follows.
\end{proof}

To estimate the covariance of $\fvech{\minv{\svsm}}$,
plug in \svsm for \pvsm in the covariance computation, and use some
consistent estimator for \pvvar, call 
it \svvar. 
One way to compute \svvar is to via the sample covariance of the
vectors $\fvech{\ogram{\avreti[i]}} =
\asvec{1,\tr{\vreti[i]},\tr{\fvech{\ogram{\vreti[i]}}}}$. 
More elaborate covariance estimators can be used, for example, to deal with
violations of the \iid assumptions. \cite{Zeileis:2004:JSSOBK:v11i10}
\nocite{magnus1999matrix,magnus1980elimination}
\nocite{BrittenJones1999}
Note that because the first element of 
$\fvech{\ogram{\avreti[i]}}$ is a deterministic $1$, the first row and
column of \pvvar is all zeros, and we need not estimate it.

\subsection{The \txtSR optimal portfolio}

\begin{lemma}[\txtSR optimal portfolio]
\label{lemma:sr_optimal_portfolio}
Assuming $\pvmu \ne \vzero$, and \pvsig is invertible,
the portfolio optimization problem
\begin{equation}
\argmax_{\pportw :\, \qform{\pvsig}{\pportw} \le \Rbuj^2} 
\frac{\trAB{\pportw}{\pvmu} - \rfr}{\sqrt{\qform{\pvsig}{\pportw}}},
\label{eqn:sr_optimal_portfolio_problem}
\end{equation}
for $\rfr \ge 0, \Rbuj > 0$ is solved by
\begin{equation}
\pportwoptR \defeq \frac{\Rbuj}{\sqrt{\qiform{\pvsig}{\pvmu}}}
\minvAB{\pvsig}{\pvmu}.
\label{eqn:pportwoptR_def}
\end{equation}
Moreover, this is the unique solution whenever $\rfr > 0$.
The maximal objective achieved by this portfolio is
$\sqrt{\qiform{\pvsig}{\pvmu}} - \fracc{\rfr}{\Rbuj} = 
\psnropt - \fracc{\rfr}{\Rbuj}$.
\end{lemma}
\begin{proof}
By the Lagrange multiplier technique, the optimal portfolio
solves the following equations:
\begin{equation*}
\begin{split}
0 &= c_1 \pvmu - c_2 \pvsig \pportw - \gamma \pvsig \pportw,\\
\qform{\pvsig}{\pportw} &\le \Rbuj^2,
\end{split}
\end{equation*}
where $\gamma$ is the Lagrange multiplier, and $c_1, c_2$ are 
scalar constants.
Solving the first equation gives us
$$
\pportw = c\,\minvAB{\pvsig}{\pvmu}.
$$
This reduces the problem to the univariate optimization
\begin{equation}
\max_{c :\, c^2 \le \fracc{\Rbuj^2}{\psnrsqopt}} 
\sign{c} \psnropt - \frac{\rfr}{\abs{c}\psnropt},
\end{equation}
where $\psnrsqopt = \qiform{\pvsig}{\pvmu}.$ The optimum
occurs for $c = \fracc{\Rbuj}{\psnropt}$, moreover the optimum
is unique when $\rfr > 0$.
\end{proof}

Note that the first element of \fvech{\minv{\pvsm}} is $1 +
\qiform{\pvsig}{\pvmu}$, and elements 2 through $\nlatf+1$ are 
$-\pportwopt$. Thus, \pportwoptR, the portfolio that maximizes the \txtSR, 
is some transformation of \fvech{\minv{\pvsm}}, and another 
application of the delta method gives its asymptotic distribution,
as in the following corollary to \theoremref{inv_distribution}.

\begin{corollary}
\label{corollary:portwoptR_dist}
Let 
\begin{equation}
\pportwoptR = \frac{\Rbuj}{\sqrt{\qiform{\pvsig}{\pvmu}}}
\minvAB{\pvsig}{\pvmu},
\end{equation}
and similarly, let \sportwoptR be the sample analogue, where \Rbuj is
some risk budget. Then
\begin{equation}
\sqrt{\ssiz}\wrapParens{\sportwoptR - \pportwoptR} 
\rightsquigarrow 
\normlaw{0,\qoform{\pvvar}{\Mtx{H}}},
\label{eqn:mvclt_portfolio}
\end{equation}
where
\begin{equation}
\begin{split}
\Mtx{H} &= \wrapParens{- \asrowvec{\oneby{2\psnrsqopt} \pportwoptR,
\frac{\Rbuj}{\psnropt}\eye[\nlatf],\mzero}} 
\wrapParens{-\EXD{\wrapParens{\AkronA{\minv{\pvsm}}}}},\\
\psnrsqopt &\defeq \qiform{\pvsig}{\pvmu}.
\end{split}
\end{equation}

Moreover, we may express \Mtx{H} as
\begin{equation}
\label{eqn:shorter_H_form}
	- \wrapParens{\wrapParens{\trAB{\basev[1]}{\minv{\pvsm}}} \kron
	\asrowvec{\frac{1 - \psnrsqopt}{2\psnropt}\pportwoptR,
	\frac{\Rbuj}{\psnropt}\wrapParens{\minv{\pvsig} -
	\frac{\minv{\pvsig}\ogram{\pvmu}\minv{\pvsig}}{2\psnrsqopt}}}}\Dupp.
\end{equation}
\end{corollary}
\begin{proof}
By the delta method, and \theoremref{inv_distribution}, it suffices
to show that 
$$\dbyd{\pportwoptR}{\fvech{\minv{\pvsm}}} 
= - \asrowvec{\oneby{2\psnrsqopt} \pportwoptR,
\frac{\Rbuj}{\psnropt}\eye[\nlatf],\mzero}.
$$
To show this, note that \pportwoptR is $-\Rbuj$ times elements
2 through $\nlatf+1$ of \fvech{\minv{\pvsm}} divided by 
$\psnropt = \sqrt{\trAB{\basev[1]}{\fvech{\minv{\pvsm}}} - 1}$, where
$\basev[i]$ is the \kth{i} column of the identity matrix. The result
follows from basic calculus.

To establish \eqnref{shorter_H_form}, note that only the first $\nlatf+1$
columns of $\dbyd{\pportwoptR}{\fvech{\minv{\pvsm}}}$ have non-zero entries,
thus the elimination matrix, \Elim, can be ignored in the term on the
right, and we could write the derivative as
\begin{equation*}
	\dbyd{\pportwoptR}{\fvech{\minv{\pvsm}}} = \tr{\basev[1]} \kron -
	\asrowvec{\oneby{2\psnrsqopt} \pportwoptR,
	\frac{\Rbuj}{\psnropt}\eye[\nlatf]}.
\end{equation*}
And we can write the product as
\begin{equation*}
\begin{split}
	\Mtx{H} &= 
		- \wrapParens{\wrapParens{\tr{\basev[1]}\minv{\pvsm}}\kron
	\wrapParens{\asrowvec{\oneby{2\psnrsqopt} \pportwoptR,
	\frac{\Rbuj}{\psnropt}\eye[\nlatf]}\minv{\pvsm}}}\Dupp.
\end{split}
\end{equation*}
Perform the matrix multiplication to find
\begin{equation*}
\begin{split}
\asrowvec{\oneby{2\psnrsqopt} \pportwoptR, \frac{\Rbuj}{\psnropt}\eye[\nlatf]} \minv{\pvsm} &=
	\asrowvec{\frac{1 + \psnrsqopt}{2\psnrsqopt} \pportwoptR - \frac{\Rbuj}{\psnropt} \minv{\pvsig}\pvmu, 
	-\oneby{2\psnrsqopt} \pportwoptR \tr{\pvmu}\minv{\pvsig} +
	\frac{\Rbuj}{\psnropt}\minv{\pvsig}},\\
	&= 
	\asrowvec{\frac{1 - \psnrsqopt}{2\psnrsqopt} \pportwoptR ,
	-\oneby{2\psnrsqopt} \pportwoptR \tr{\pvmu}\minv{\pvsig} +
	\frac{\Rbuj}{\psnropt}\minv{\pvsig}},\\
\end{split}
\end{equation*}
which then further simplifies to the form given.


\end{proof}

The sample statistic \ssrsqopt is, up to scaling involving \ssiz, just
Hotelling's \Tstat statistic.  \cite{anderson2003introduction} 
One can perform inference on \psnrsqopt via this statistic, at least under
Gaussian returns, where the distribution of \Tstat takes a (noncentral)
\Fstat{}-distribution. Note, however, that \psnropt is the maximal 
population \txtSR of \emph{any} portfolio, so it is an upper bound of the
\txtSR of the sample portfolio \sportwoptR. It is of little comfort
to have an estimate of \psnropt when the sample portfolio may have 
a small, or even negative, \txtSR.

Because \psnropt is an upper bound on the \txtSR of a portfolio, it
seems odd to claim that the \txtSR of the sample portfolio might be
asymptotically normal with mean \psnropt. In fact, the delta method
will fail because the gradient of \psnropt with respect to the portfolio
is zero at \pportwoptR. One solution to this puzzle
is to estimate the `\txtSNR,' incorporating a strictly positive \rfr. 
In this case a portfolio may achieve a higher value than 
$\psnropt - \fracc{\rfr}{\Rbuj}$, which is achieved by \pportwoptR,
by violating the risk budget.  To push this argument forward,
we construct a quadratic approximation to the \txtSNR function.

Suppose $\rfr > 0, \Rbuj > 0$, and assume the population parameters, \pvsm
are fixed.  Define the \txtSNR as
\begin{equation}
\pSNR{\pportw ; \pvsm, \rfr} \defeq \frac{\trAB{\pportw}{\pvmu} -
\rfr}{\sqrt{\qform{\pvsig}{\pportw}}}.
\label{eqn:snr_def}
\end{equation}
We will usually drop the dependence on \pvsm and \rfr and simply write
\pSNR{\pportw}. Defining \pportwoptR as in 
\eqnref{pportwoptR_def}, note that
\begin{equation*}
\pSNR{\pportwoptR} = \psnropt - \fracc{\rfr}{\Rbuj}.
\end{equation*}

\begin{lemma}[Quadratic Taylor expansion of \txtSNR]
\label{lemma:snr_quadratic_taylor}
Let \pportwoptR be defined in \eqnref{pportwoptR_def}, 
and let \pSNR{\pportw} be defined as in \eqnref{snr_def}. Then
\begin{multline*}
\pSNR{\pportwoptR + \vect{\epsilon}} =
\pSNR{\pportwoptR} + 
\frac{\rfr}{\Rbuj^2\psnropt}\tr{\pvmu} \vect{\epsilon} + \\ 
	\half \oneby{\Rbuj^2} 
\qform{\wrapParens{\wrapParens{\pSNR{\pportwoptR} -
	2\frac{\rfr}{\Rbuj}}\frac{\ogram{\pvmu}}{\psnrsqopt} -
\pSNR{\pportwoptR}{\pvsig}}}{%
	\vect{\epsilon}} + 
	\ldots
\end{multline*}
\end{lemma}
\begin{proof}
By Taylor's theorem,
\begin{multline*}
\pSNR{\pportwoptR + \vect{\epsilon}} =
\pSNR{\pportwoptR} + 
\wrapParens{\evalat{\dbyd{\pSNR{\vect{x}}}{\vect{x}}}{\vect{x} = \pportwoptR}}
\vect{\epsilon}
+ \\ 
\half \qform{\wrapParens{\evalat{\Hessof[\vect{x}]{\pSNR{\vect{x}}}}{\vect{x} =
\pportwoptR}}}{\vect{\epsilon}} + \ldots
\end{multline*}
By simple calculus,
\begin{equation}
\label{eqn:deriv_snr}
\dbyd{\pSNR{\pportw}}{\pportw} 
= \frac{\sqrt{\qform{\pvsig}{\pportw}}\pvmu 
- \frac{\trAB{\pportw}{\pvmu} -
  \rfr}{\sqrt{\qform{\pvsig}{\pportw}}}\pvsig\pportw}{\qform{\pvsig}{\pportw}}
= \frac{\sqrt{\qform{\pvsig}{\pportw}}\pvmu 
- \pSNR{\pportw}\pvsig\pportw}{\qform{\pvsig}{\pportw}}
\end{equation}
To compute the Hessian, take the derivative of this gradient:
\begin{equation}
\label{eqn:hess_snr}
\Hessof{\pSNR{}} = 
\frac{%
- \wrapBracks{%
\pvmu\tr{\sportw}\pvsig + \pvsig\sportw\tr{\pvmu}
- 3 \pSNR{\sportw}\frac{\pvsig\sportw\tr{\sportw}\pvsig}{\sqrt{\qform{\pvsig}{\sportw}}}
+ \pSNR{\sportw}\sqrt{\qform{\pvsig}{\sportw}}\pvsig}}{%
\wrapParens{\qform{\pvsig}{\sportw}}^{3/2}} 
\end{equation}
At $\sportw = \pportwoptR =
\wrapParens{\fracc{\Rbuj}{\sqrt{\qiform{\pvsig}{\pvmu}}}}
\minvAB{\pvsig}{\pvmu},$ the derivative takes value
$$
\evalat{\dbyd{\pSNR{\vect{x}}}{\vect{x}}}{\vect{x} = \pportwoptR} = 
\frac{\Rbuj\pvmu - \wrapParens{\psnropt -
\fracc{\rfr}{\Rbuj}}\frac{\Rbuj}{\psnropt}\pvmu}{\Rbuj^2}
= \frac{\rfr}{\Rbuj^2\psnropt}\pvmu,
$$
and the Hessian takes value
$$
\evalat{\Hessof[\vect{x}]{\pSNR{\vect{x}}}}{\vect{x}=\pportwoptR} = 
\frac{
	\oneby{\psnrsqopt}\wrapParens{\psnropt - 3\frac{\rfr}{\Rbuj}}\ogram{\pvmu} -
	\wrapParens{\psnropt - \frac{\rfr}{\Rbuj}}\pvsig}{\Rbuj^2},
$$
completing the proof.
\end{proof}

Combining \lemmaref{snr_quadratic_taylor} and 
\corollaryref{portwoptR_dist}, we get the following:
\begin{corollary}
\label{corollary:portwoptR_hc_dist}
Let \pportwoptR and \sportwoptR be defined as in 
\corollaryref{portwoptR_dist}. As per \lemmaref{sr_optimal_portfolio},
$\pSNR{\pportwoptR} = \psnropt - \fracc{\rfr}{\Rbuj}$.
Let \pvvar be the variance of $\fvech{\ogram{\avreti}}$.

Then, asymptotically in \ssiz,
\begin{equation}
\pSNR{\sportwoptR} 
\rightsquigarrow 
\normlaw{\pSNR{\pportwoptR},\oneby{\ssiz}\qform{\pvvar}{\vect{h}}},
\label{eqn:mvclt_snr}
\end{equation}
where
\begin{equation}
\tr{\vect{h}} 
 = - \frac{\rfr}{\Rbuj\psnrsqopt}
\asrowvec{\oneby{2},\tr{\pvmu},\vzero}
\wrapParens{-\EXD{\wrapParens{\AkronA{\minv{\pvsm}}}}}.
\end{equation}

Moreover, we may express \tr{\vect{h}} as
\begin{equation}
\label{eqn:shorter_h_form}
\tr{\vect{h}} =
- \frac{\rfr}{2\Rbuj\psnrsqopt}
\wrapParens{\asrowvec{1 + \psnrsqopt, -\tr{\pvmu}\minv{\pvsig}} \kron
	\asrowvec{1 - \psnrsqopt, {\tr{\pvmu}\minv{\pvsig}}}}\Dupp.
\end{equation}

\end{corollary}
\begin{proof}
By the delta method, and the chain rule,
$$
\pSNR{\sportwoptR} 
\rightsquigarrow 
\normlaw{\pSNR{\pportwoptR},\oneby{\ssiz}\qform{\pvvar}{\vect{h}}},
\,\,\,
\tr{\vect{h}} = \tr{\dbyd{\pSNR{\pportwoptR}}{\pportwoptR}}
\dbyd{\pportwoptR}{\fvech{\pvsm}}. 
$$
From \corollaryref{portwoptR_dist}, we have
$$
\tr{\vect{h}} = {\tr{\dbyd{\pSNR{\pportwoptR}}{\pportwoptR}} 
{\asrowvec{-\oneby{2\psnrsqopt} \pportwoptR,
-\frac{\Rbuj}{\psnropt}\eye[\nlatf],\mzero}}}
\wrapParens{-\EXD{\wrapParens{\AkronA{\minv{\pvsm}}}}}.
$$
Taking the derivative of \pSNR{\cdot} from \lemmaref{snr_quadratic_taylor}, 
\begin{equation}
\begin{split}
{\tr{\dbyd{\pSNR{\pportwoptR}}{\pportwoptR}} 
{\asrowvec{-\oneby{2\psnrsqopt} \pportwoptR,
-\frac{\Rbuj}{\psnropt}\eye[\nlatf],\mzero}}} 
 &= - \frac{\rfr}{\Rbuj^2\psnropt}\tr{\pvmu}
\asrowvec{\oneby{2\psnrsqopt} \pportwoptR,
\frac{\Rbuj}{\psnropt}\eye[\nlatf],\mzero},\\
 &= - \frac{\rfr}{\Rbuj\psnrsqopt}
\asrowvec{\oneby{2},\tr{\pvmu},\vzero}.
\end{split}
\end{equation}

To establish \eqnref{shorter_h_form}, one proceeds as in the proof of
\corollaryref{portwoptR_dist}.
\end{proof}
\begin{caution}
Since \pvmu and \pvsig are population parameters, \pSNR{\sportwoptR} is
an unobserved quantity. Nevertheless, we can estimate the variance of
\pSNR{\sportwoptR}, and possibly construct confidence intervals on it
using sample statistics.
\end{caution}
This corollary is useless in the case where $\rfr=0$, and
gives somewhat puzzling results when one considers 
$\rfr\searrow 0$, since it suggest that the variance of
\pSNR{\sportwoptR} goes to zero. This is \emph{not} the
case, because the rate of convergence in the corollary
is a function of \rfr. To consider the $\rfr=0$ case,
one must take the quadratic Taylor expansion of the 
\txtSNR function.
\begin{corollary}
\label{corollary:portwoptR_hc_dist_two}
Suppose $\Rbuj > 0$, and $\rfr=0$, thus
\begin{equation}
\pSNR{\sportw} \defeq \frac{\trAB{\sportw}{\pvmu}}{%
\sqrt{\qform{\pvsig}{\sportw}}}.
\label{eqn:snr_def}
\end{equation}
Let \pportwoptR and \sportwoptR be defined as in 
\corollaryref{portwoptR_dist}. As per \lemmaref{sr_optimal_portfolio},
$\pSNR{\pportwoptR} = \psnropt$, where we take $\rfr=0$.
Let \pvvar be the variance of $\fvech{\ogram{\avreti}}$.

Then, asymptotically in \ssiz,
\begin{equation}
\ssiz\wrapBracks{\pSNR{\sportwoptR} - \pSNR{\pportwoptR}} 
\rightsquigarrow 
\half 
\trace{\qform{\Mtx{F}}{\wrapParens{{\Mtx{H}}\chol{\pvvar}}} \ogram{\vect{z}}},
\end{equation}
where $z\sim\normlaw{\vzero,\eye[\nlatf]}$, and where
\begin{align*}
\Mtx{F}
&= \oneby{\Rbuj^2}
\wrapParens{\frac{\ogram{\pvmu}}{\psnropt} - \psnropt{\pvsig}},&\\
\Mtx{H} &= {- \asrowvec{\oneby{2\psnrsqopt} \pportwoptR,
\frac{\Rbuj}{\psnropt}\eye[\nlatf],\mzero}} 
\wrapParens{-\EXD{\wrapParens{\AkronA{\minv{\pvsm}}}}},&
\end{align*}
as in \corollaryref{portwoptR_dist}.
\end{corollary}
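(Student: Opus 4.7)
The plan is to combine \corollaryref{portwoptR_dist} with a \emph{second-order} delta method, since the usual first-order argument degenerates here. The gradient computation at the end of the proof of \corollaryref{portwoptR_hc_dist} gives $\left.\dbyd{\pSNR{\pportw}}{\pportw}\right\vert_{\pportw=\pportwoptR}=(\rfr/(\Rbuj^2\psnropt))\pvmu$, which vanishes identically when $\rfr=0$. The disappearance of the linear Taylor term both forces the rate $\ssiz$ in place of $\sqrt{\ssiz}$ and accounts for the quadratic-in-$\vect{z}$ limit.

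The first step is to compute $\Mtx{F}$, the Hessian of $\pSNR{\pportw}$ in $\pportw$, evaluated at $\pportwoptR$. Starting from
$$\dbyd{\pSNR{\pportw}}{\pportw}=\frac{\pvmu}{\sqrt{\qform{\pvsig}{\pportw}}}-\frac{\trAB{\pportw}{\pvmu}}{\wrapParens{\qform{\pvsig}{\pportw}}^{3/2}}\pvsig\pportw,$$
I would differentiate once more using the product and chain rules, then substitute the optimality identities $\pvsig\pportwoptR=(\Rbuj/\psnropt)\pvmu$, $\qform{\pvsig}{\pportwoptR}=\Rbuj^2$, and $\trAB{\pportwoptR}{\pvmu}=\Rbuj\psnropt$. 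Four terms arise; after substitution three of them are proportional to $\ogram{\pvmu}$ with coefficients $-1,-1,+3$ (each divided by $\Rbuj^2\psnropt$), while the fourth reduces to $-(\psnropt/\Rbuj^2)\pvsig$. Summing yields the asserted $\Mtx{F}=(1/\Rbuj^2)\wrapBracks{\ogram{\pvmu}/\psnropt-\psnropt\pvsig}$.

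The second step is the expansion itself. With $\Mtx{F}$ in hand and the gradient vanishing, Taylor's theorem gives
$$\pSNR{\sportwoptR}-\pSNR{\pportwoptR}=\half\,\qform{\Mtx{F}}{(\sportwoptR-\pportwoptR)}+O_p\wrapParens{\ssiz^{-3/2}},$$
since $\sportwoptR-\pportwoptR=O_p(\ssiz^{-1/2})$ by \corollaryref{portwoptR_dist}; scaling by $\ssiz$ therefore annihilates the remainder. From the same corollary, $\sqrt{\ssiz}(\sportwoptR-\pportwoptR)\rightsquigarrow\Mtx{H}\chol{\pvvar}\vect{z}$ with $\vect{z}\sim\normlaw{\vzero,\eye[\nlatf]}$, whence the continuous mapping theorem together with the cyclic trace identity delivers
$$\ssiz\wrapBracks{\pSNR{\sportwoptR}-\pSNR{\pportwoptR}}\rightsquigarrow\half\,\trace{\qform{\Mtx{F}}{\wrapParens{\Mtx{H}\chol{\pvvar}}}\ogram{\vect{z}}},$$
as claimed. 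The main obstacle is bookkeeping in the Hessian step, where the several rank-one contributions and the $\pvsig$ term must be collected cleanly; both the second-order delta-method passage and the trace rearrangement are routine once $\Mtx{F}$ is identified.
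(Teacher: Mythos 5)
Your proposal is correct and follows essentially the same route as the paper: a second-order Taylor expansion of \pSNR{\cdot} about \pportwoptR (where the gradient vanishes when $\rfr=0$), evaluation of the Hessian there to obtain \Mtx{F}, and substitution of the asymptotic normal limit of $\sportwoptR - \pportwoptR$ from \corollaryref{portwoptR_dist} followed by the trace rearrangement. Your Hessian bookkeeping (coefficients $-1,-1,+3$ on the $\ogram{\pvmu}$ terms plus the single $\pvsig$ term) agrees with the paper's \eqnref{hess_snr} evaluated at the optimum.
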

\begin{proof}
From \lemmaref{snr_quadratic_taylor} 
\begin{align*}
\pSNR{\pportwoptR + \vect{\epsilon}} &=
	\pSNR{\pportwoptR} + \half \qform{\Mtx{F}}{\vect{\epsilon}} + \ldots,\\
\Mtx{F} &= \oneby{\Rbuj^2} \wrapParens{\frac{\ogram{\pvmu}}{\psnropt} - \psnropt{\pvsig}}.
\end{align*}

By \corollaryref{portwoptR_dist}, 
$$
\vect{\epsilon} = \sportwoptR - \pportwoptR 
\rightsquigarrow 
\normlaw{0,\oneby{\ssiz}\qoform{\pvvar}{\Mtx{H}}},
$$
so, asymptotically $\vect{\epsilon}\rightsquigarrow \oneby{\sqrt{\ssiz}} 
\wrapParens{{\Mtx{H}}\chol{\pvvar}}\vect{z},$ where
$\vect{z}\sim\normlaw{\vzero,\eye[\nlatf]},$ and
the result follows.
\end{proof}



We now seek to link the sample estimate of the optimal achievable \txtSR with
the achieved \txtSNR of the sample \txtMP. The magnitude of
\ssrsqopt (along with \ssiz) is the only information on which we might estimate
whether the sample \txtMP is any good. If we view them both as functions of
\minv{\pvsm}, we can find their expected values and any covariance between
them. Somewhat surprisingly, the two quantities are asymptotically almost
uncorrelated. 

Thus for the following theorem, let us abuse notation to express the \txtSNR
function, defined in \eqnref{snr_def} as a function of some vector:
\begin{equation}
\pSNR{\vect{x} ; \pvsm, \rfr} =
	\frac{\trAB{\wrapParens{\asrowvec{\vzero,\eye[\nlatf],\mzero}\vect{x}}}{\pvmu} -
	\rfr}{\sqrt{\qform{\pvsig}{\wrapParens{\asrowvec{\vzero,\eye[\nlatf],\mzero} \vect{x}}}}}.
\label{eqn:snr_def_II}
\end{equation}
Then \pSNR{\fvech{\minv{\svsm}} ; \pvsm, \rfr} corresponds to the definition in
\eqnref{snr_def}.
Moreover, define the ``optimal \txtSNR'' function also as a function of a
vector as
\begin{equation}
	\pSNROPT{\vect{x}} = \sqrt{\trAB{\basev[1]}{\vect{x}} - 1}.
\label{eqn:snropt_def}
\end{equation}
We have $\pSNROPT{\fvech{\minv{\pvsm}}} = \psnropt$, as desired.
In an abuse of notation we will simply write $\pSNR{\minv{\svsm}; \pvsm, \rfr}$ and
$\pSNROPT{\minv{\svsm}}$ instead of writing out the vector function.

\begin{theorem}
\label{theorem:psnr_and_ssropt_moments}
Let \svsm be the unbiased sample estimate of 
\pvsm, based on \ssiz \iid samples of \vreti.
Assume $\psnropt > 0$.
Let \pvvar be the variance of $\fvech{\ogram{\avreti}}$.
	Define \pSNR{\cdot} and \pSNROPT{\cdot} as in 
\eqnref{snr_def_II} and \eqnref{snropt_def}.
Then, asymptotically in \ssiz, 
\begin{equation}
\begin{split}
\pSNR{\minv{\svsm} ; \pvsm, 0} &\rightsquigarrow \psnropt + 
	  \oneby{2\ssiz} \trace{\qform{\Mtx{F}}{\wrapParens{{\Mtx{H}}\chol{\pvvar}}} \ogram{\vect{z}}},\\
\ssropt =	\pSNROPT{\minv{\svsm}} &\rightsquigarrow \psnropt +
	\frac{\trAB{\vect{h}}{\chol{\pvvar}\vect{z}}}{2\psnropt\sqrt{\ssiz}}
	 - \frac{ \trace{\gram{\wrapParens{\trAB{\vect{h}}{\chol{\pvvar}}}}
	 \ogram{\vect{z}}} }{%
	 8\psnropt^3 \ssiz},\\
\ssropt^{-1} = \pSNROPT{\minv{\svsm}}^{-1} 
	&\rightsquigarrow \psnropt^{-1} \wrapParens{1 -
	\frac{\trAB{\vect{h}}{\chol{\pvvar}\vect{z}}}{2\psnropt^2\sqrt{\ssiz}}
	 + \frac{3 \trace{\gram{\wrapParens{\trAB{\vect{h}}{\chol{\pvvar}}}}
	 \ogram{\vect{z}}} }{%
	 8\psnropt^4 \ssiz}},
\end{split}
\end{equation}
where $z\sim\normlaw{\vzero,\eye[\nlatf]}$, where the matrices
\Mtx{H} and \Mtx{F} are as given in \corollaryref{portwoptR_hc_dist_two}, and
where
\begin{equation*}
	\tr{\vect{h}} = - \wrapParens{\AkronA{\asrowvec{1 + \psnrsqopt,-\trAB{\pvmu}{\minv{\pvsig}}}}}\Dupp.
\end{equation*}
\end{theorem}
\begin{proof}
The distribution of \pSNR{\vect{x} ; \pvsm, 0} is a restatement of 
\corollaryref{portwoptR_hc_dist_two}. 
To prove the distribution of \ssropt, perform a Taylor's series expansion of the
function \pSNROPT{\cdot}:
\begin{equation*}
\begin{split}
\pSNROPT{\vect{x} + \vect{\epsilon}} 
 &= \pSNROPT{\vect{x}} 
  + \dbyd{\pSNROPT{\vect{x}}}{\vect{x}}\vect{\epsilon} 
  + \half\qform{\Hessof[\vect{x}]{\pSNROPT{\vect{x}}}}{\vect{\epsilon}} 
  + \ldots\\
 &= \pSNROPT{\vect{x}} 
  + \frac{\trAB{\basev[1]}{\vect{\epsilon}}}{2 \pSNROPT{x}} 
  - \half\qform{\frac{\ogram{\basev[1]}}{4 \pSNROPT{\vect{x}}^3}}{\vect{\epsilon}}
  + \ldots
\end{split}
\end{equation*}
By similar reasoning,
\begin{equation*}
\begin{split}
\oneby{\pSNROPT{\vect{x} + \vect{\epsilon}}}
&= \oneby{\pSNROPT{\vect{x}}}
  - \frac{\trAB{\basev[1]}{\vect{\epsilon}}}{2 \pSNROPT{x}^3} 
  + \half\qform{\frac{3\ogram{\basev[1]}}{8 \pSNROPT{\vect{x}}^5}}{\vect{\epsilon}}
  + \ldots
\end{split}
\end{equation*}

Now by \theoremref{inv_distribution}, we know the asymptotic distribution of
\minv{\pvsm}, and thus we get the claimed distributional form for some
\vect{h}, with
\begin{equation*}
\vect{h} = \trAB{\basev[1]}{\wrapParens{-\EXD{\wrapParens{\AkronA{\minv{\pvsm}}}}}}.
\end{equation*}
We can eliminate the \Elim and write the \basev[1] as a
$\wrapParens{\nlatf+1}^2$ length vector, or rather as
$\AkronA{\tr{\basev[1]}}$, where this \basev[1] is a $\wrapParens{\nlatf+1}$
length vector. The product of Kronecker products is the Kronecker product of
matrix products so 
\begin{equation*}
	\vect{h} = - \AkronA{\wrapParens{\tr{\basev[1]}\minv{\pvsm}}} \Dupp,
\end{equation*}
establishing the identity of \vect{h}.
\end{proof}

This theorem suggests the use of the observable quantity \ssropt to perform
inference on the unknown quantity, \pSNR{\minv{\svsm} ; \pvsm , 0}.
Asymptotically they have low correlation, and small error between them,
which we quantify in the following corollaries.

\begin{corollary}
The covariance of these two quantities is asymptotically \bigo{\ssiz^{-2}}:
\begin{equation}
	\COV{\pSNR{\minv{\svsm} ; \pvsm, 0}}{\pSNROPT{\minv{\svsm}}} \rightsquigarrow \bigo{\ssiz^{-2}},
\end{equation}
and their correlation is asymptotically \bigo{\ssiz^{-\halff}}.
\end{corollary}
\begin{proof}
To find the moments of \pSNR{\minv{\svsm} ; \pvsm, 0} and \ssropt, we take
\vect{z} to be multivariate standard normal, and thus odd powered products
have zero expectation. Their covariance comes entirely from the product
of the quadratic (in \vect{z}) terms. By the same logic, the asymptotic 
standard error of \pSNR{\minv{\svsm} ; \pvsm, 0} is \bigo{\ssiz^{-1}},
while that of \ssropt is \bigo{\ssiz^{-\halff}}.
\end{proof}

\begin{corollary}
\label{corollary:psnr_minus_ssropt_moments}
The difference $\pSNR{\minv{\svsm} ; \pvsm, 0} - \ssropt$ has the
following asymptotic mean and variance:
\begin{align}
\label{eqn:psnr_minus_ssropt_moments}
\E{\pSNR{\minv{\svsm} ; \pvsm, 0} - \ssropt} &\rightsquigarrow 
	  \oneby{2\ssiz} \trace{%
\wrapParens{ \qform{\Mtx{F}}{\Mtx{H}} + \frac{ \ogram{\vect{h}} }{4\psnropt^3}
} \pvvar }.\\
\VAR{\pSNR{\minv{\svsm} ; \pvsm, 0} - \ssropt} &\rightsquigarrow 
	\oneby{4\psnrsqopt\ssiz} \qform{\pvvar}{\vect{h}},
\end{align}
where \Mtx{H}, \Mtx{F}, and \vect{h} are given in the theorem.
Their \emph{ratio} has the 
following asymptotic mean and variance:
\begin{align}
\label{eqn:psnr_div_ssropt_moments}
\E{\frac{\pSNR{\minv{\svsm} ; \pvsm, 0}}{\ssropt}} &\rightsquigarrow 
	  1 + \oneby{2\ssiz\psnropt} \trace{%
\wrapParens{ \qform{\Mtx{F}}{\Mtx{H}} + \frac{ 3\ogram{\vect{h}} }{4\psnropt^3}
} \pvvar }.\\
\VAR{\frac{\pSNR{\minv{\svsm} ; \pvsm, 0}}{\ssropt}} &\rightsquigarrow 
	\oneby{4\psnropt^4\ssiz} \qform{\pvvar}{\vect{h}}.
\end{align}

\end{corollary}

This corollary gives a recipe for building confidence intervals on
\pSNR{\minv{\svsm} ; \pvsm, 0} via the observed \psnropt, namely by
plugging in sample estimates for \pvvar and \Mtx{H}, \Mtx{F} and \vect{h}.
This result is comparable to the ``Sharpe Ratio Information Criterion''
estimator for \pSNR{\minv{\svsm} ; \pvsm, 0}.  \cite{paulsen2016noise}

To use this corollary, 
one may need a compact expression for \qform{\Mtx{F}}{\Mtx{H}}.
We start with \eqnref{shorter_H_form}, and write
\begin{equation*}
\begin{split}
	\qform{\Mtx{F}}{\Mtx{H}} &=
	\tr{\Mtx{H}} 
	\wrapParens{\wrapParens{\trAB{\basev[1]}{\minv{\pvsm}}} \kron
	\Mtx{F}\asrowvec{\frac{1 - \psnrsqopt}{2\psnropt}\pportwoptR,
	\frac{\Rbuj}{\psnropt}\wrapParens{\minv{\pvsig} -
	\frac{\minv{\pvsig}\ogram{\pvmu}\minv{\pvsig}}{2\psnrsqopt}}}}\Dupp.
\end{split}
\end{equation*}
But note that $\Mtx{F} \minv{\pvsig}\pvmu = \vzero$ (and so $\Mtx{F}
\pportwoptR = \vzero$) so we may write
\begin{align}
	\qform{\Mtx{F}}{\Mtx{H}} &=
	\tr{\Mtx{H}} 
	\wrapParens{\wrapParens{\trAB{\basev[1]}{\minv{\pvsm}}} \kron
	\asrowvec{\vzero, \frac{\Rbuj}{\psnropt}\Mtx{F}\minv{\pvsig}}}\Dupp,\nonumber\\
	&=
	\oneby{\Rbuj} \tr{\Mtx{H}} 
	\wrapParens{\wrapParens{\trAB{\basev[1]}{\minv{\pvsm}}} \kron
	\asrowvec{\vzero, \frac{\ogram{\pvmu}\minv{\pvsig}}{\psnrsqopt} - \eye}} \Dupp,\nonumber\\
	&=
	\oneby{\psnropt}\qform{\wrapParens{%
		\wrapParens{\minv{\pvsm}\ogram{\basev[1]}\minv{\pvsm}} \kron
		\twobytwo{0}{\tr{\vzero}}{\vzero}{\frac{\minv{\pvsig}\ogram{\pvmu}\minv{\pvsig}}{\psnrsqopt}
		- \minv{\pvsig}} }}{\Dupp}.
\label{eqn:shorter_HFH_form}
\end{align}
Similarly, a compact form for \ogram{\vect{h}}:
\begin{align}
	\ogram{\vect{h}} &= 
	\qform{\wrapParens{%
		\AkronA{\wrapParens{\trAB{\basev[1]}{\minv{\pvsm}}\minv{\pvsm}\basev[1]}}}}{\Dupp}.
\label{eqn:shorter_hh_form}
\end{align}





\section{Distribution under Gaussian returns}

The goal of this section is to derive a variant of 
\theoremref{inv_distribution} for the case where \vreti follow a 
multivariate Gaussian distribution. First, 
assuming $\vreti\sim\normlaw{\pvmu,\pvsig}$, we can express the density
of \vreti, and of \svsm, in terms of \nlatf, \ssiz, and \pvsm.

\begin{lemma}[Gaussian sample density]
Suppose $\vreti\sim\normlaw{\pvmu,\pvsig}$. Letting 
$\avreti = \asvec{1,\tr{\vreti}}$, and $\pvsm = \E{\ogram{\avreti}}$,
then the negative log likelihood of \vreti is
\begin{equation}
- \log\normpdf{\vreti}{\pvmu,\pvsig} = 
  c_{\nlatf} 
+ \half \logdet{\pvsm} 
+ \half \trace{\minv{\pvsm}\ogram{\avreti}},
\end{equation}
for the constant 
$c_{\nlatf} = -\half + \half[\nlatf]\log\wrapParens{2\pi}.$
\label{lemma:x_dist_gaussian}
\end{lemma}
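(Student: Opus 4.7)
The plan is to start from the familiar form of the Gaussian negative log-likelihood,
$$-\log\normpdf{\vreti}{\pvmu,\pvsig} = \frac{\nlatf}{2}\log(2\pi) + \half \logdet{\pvsig} + \half (\vreti-\pvmu)^T \minv{\pvsig}(\vreti-\pvmu),$$
and massage the right-hand side into the claimed form by replacing $\logdet{\pvsig}$ with $\logdet{\pvsm}$ and by absorbing the quadratic form plus a constant into $\half\trace{\minv{\pvsm}\ogram{\avreti}}$.

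First I would show that $\logdet{\pvsm} = \logdet{\pvsig}$. This follows immediately from the block form in \eqnref{pvsm_def} and the Schur determinant identity: the $(1,1)$ block is the scalar $1$, so $\det\pvsm = \det\wrapParens{\pvsig + \ogram{\pvmu} - \pvmu\cdot 1\cdot\tr{\pvmu}} = \det\pvsig$.

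Next I would compute $\trace{\minv{\pvsm}\ogram{\avreti}}$ by expanding both matrices in block form. Using the explicit inverse from \eqnref{trick_inversion} and $\ogram{\avreti} = \twobytwo{1}{\tr{\vreti}}{\vreti}{\ogram{\vreti}}$, the trace of the product is the sum of traces of the two diagonal blocks. The $(1,1)$ diagonal block contributes $1 + \qiform{\pvsig}{\pvmu} - \tr{\pvmu}\minv{\pvsig}\vreti$, and the $(2,2)$ diagonal block contributes $-\tr{\vreti}\minv{\pvsig}\pvmu + \trace{\minv{\pvsig}\ogram{\vreti}} = -\tr{\vreti}\minv{\pvsig}\pvmu + \qform{\minv{\pvsig}}{\vreti}$ (using the cyclic property of trace). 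Adding and completing the square gives
$$\trace{\minv{\pvsm}\ogram{\avreti}} = 1 + (\vreti-\pvmu)^T \minv{\pvsig}(\vreti-\pvmu).$$

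Substituting these two identities into the target expression $c_\nlatf + \half\logdet{\pvsm} + \half\trace{\minv{\pvsm}\ogram{\avreti}}$ produces
$$-\half + \frac{\nlatf}{2}\log(2\pi) + \half\logdet{\pvsig} + \half + \half(\vreti-\pvmu)^T\minv{\pvsig}(\vreti-\pvmu),$$
in which the $-\half$ hiding inside $c_\nlatf$ cancels the $+\half$ coming from the ``$1+\ldots$'' in the trace computation, recovering the standard Gaussian negative log density exactly. There is no real obstacle here; the only care required is keeping signs and cross-terms straight in the block multiplication, and verifying that the odd-looking constant $-\half$ in $c_\nlatf$ is precisely what is needed to cancel the deterministic contribution from the first coordinate of $\avreti$.
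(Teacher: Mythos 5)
Your proposal is correct and follows essentially the same route as the paper: both rest on the two identities $\det{\pvsm}=\det{\pvsig}$ (via the block/Schur determinant formula) and $\trace{\minv{\pvsm}\ogram{\avreti}} = 1 + \qiform{\pvsig}{\wrapParens{\vreti-\pvmu}}$, the latter being exactly the paper's statement $\qiform{\pvsm}{\avreti} - 1 = \qiform{\pvsig}{\wrapParens{\vreti-\pvmu}}$ since a quadratic form equals the trace of the matrix times the outer product. The only cosmetic difference is that you verify the trace identity by explicit block multiplication and run the substitution from the claimed expression back to the standard density, whereas the paper asserts the identity and works forward from the density; both are complete.
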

\begin{proof}
By the block determinant formula,
\begin{equation*}
\det{\pvsm} 
= \det{1}\det{\wrapParens{\pvsig + \ogram{\pvmu}} - \pvmu 1^{-1} \tr{\pvmu}}
= \det{\pvsig}.
\end{equation*}
Note also that
\begin{equation*}
\qiform{\pvsig}{\wrapParens{\vreti - \pvmu}} = 
\qiform{\pvsm}{\avreti} - 1.
\end{equation*}
These relationships hold without assuming a particular distribution for 
\vreti. 

The density of \vreti is then
\begin{equation*}
\begin{split}
\normpdf{\vreti}{\pvmu,\pvsig} &= \frac{1}{\sqrt{\wrapParens{2\pi}^{\nlatf}\det{\pvsig}}} 
\longexp{-\half \qiform{\pvsig}{\wrapParens{\vreti - \pvmu}}},\\
 &= \frac{\detpow{\pvsig}{-\half}}{\wrapParens{2\pi}^{\nlatf/2}}
\longexp{-\half \wrapParens{\qiform{\pvsm}{\avreti} - 1}},\\
 &= \wrapParens{2\pi}^{-\nlatf/2} \detpow{\pvsm}{-\half}
\longexp{-\half \wrapParens{\qiform{\pvsm}{\avreti} - 1}},\\
 &= \wrapParens{2\pi}^{-\nlatf/2}
\longexp{\half - \half \logdet{\pvsm} - \half \trace{\minv{\pvsm}\ogram{\avreti}}},
\end{split}
\end{equation*}
and the result follows.
\end{proof}

\begin{lemma}[Gaussian second moment matrix density]
Let $\vreti\sim\normlaw{\pvmu,\pvsig}$, 
$\avreti = \asvec{1,\tr{\vreti}}$, 
and $\pvsm = \E{\ogram{\avreti}}$. 
Given \ssiz \iid samples \vreti[i], let 
Let $\svsm = \oneby{\ssiz}\sum_i \ogram{\avreti[i]}$.
Then the density of \svsm is 
\begin{equation}
\FOOpdf{}{\svsm}{\pvsm} =
\longexp{c'_{\ssiz,\nlatf}}\frac{\det{\svsm}^{\half[\ssiz-\nlatf-2]}}{\det{\pvsm}^{\half[\ssiz]}}
\longexp{-\half[\ssiz]\trace{\minv{\pvsm}\svsm}},
\label{eqn:theta_dist_gaussian}
\end{equation}
for some $c'_{\ssiz,\nlatf}.$
\label{lemma:theta_dist_gaussian}
\end{lemma}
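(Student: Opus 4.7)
The plan is to reduce the density of \svsm to the well-known joint density of the sufficient statistics $(\bar{\vect{r}},\Mtx{S})$, where $\bar{\vect{r}} = \frac{1}{n}\sum_i \vect{r}_i$ is the sample mean and $\Mtx{S} = \sum_i (\vect{r}_i-\bar{\vect{r}})(\vect{r}_i-\bar{\vect{r}})^\top$ is the centered scatter. By the classical Fisher--Cochran decomposition for i.i.d.\ multivariate Gaussians, $\bar{\vect{r}}\sim\normlaw{\pvmu,\pvsig/n}$ and $\Mtx{S}\sim W_{\nlatf}(n-1,\pvsig)$ are independent, so their joint density is the standard product of a Gaussian pdf and a Wishart pdf.

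Next I would set up the change of variables. The augmented sample moment can be rewritten block-wise as
\begin{equation*}
\svsm = \twobytwo{1}{\bar{\vect{r}}^\top}{\bar{\vect{r}}}{\Mtx{S}/n + \bar{\vect{r}}\bar{\vect{r}}^\top},
\end{equation*}
which gives a smooth bijection between $(\bar{\vect{r}},\Mtx{S})$ and the non-redundant entries of \svsm (the first row/column entry is a deterministic~$1$). Holding $\bar{\vect{r}}$ fixed, the map $\Mtx{S}\mapsto \Mtx{S}/n+\bar{\vect{r}}\bar{\vect{r}}^\top$ is linear with Jacobian determinant $n^{-\nlatf(\nlatf+1)/2}$; this introduces only an $n,\nlatf$-dependent constant that gets absorbed into $c'_{\ssiz,\nlatf}$.

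The technical heart of the proof is translating the Wishart/Gaussian expression into the claimed block form. I would use three identities: (i) $\det\pvsm=\det\pvsig$, already established in the proof of \lemmaref{x_dist_gaussian} via the block determinant formula; (ii) applying the same block determinant formula to \svsm gives $\det\Mtx{S}=\ssiz^{\nlatf}\det\svsm$, which converts the Wishart factor $\det(\Mtx{S})^{(\ssiz-1-\nlatf-1)/2}$ into $\det(\svsm)^{(\ssiz-\nlatf-2)/2}$ up to powers of $\ssiz$; and, crucially, (iii) the block identity
\begin{equation*}
\ssiz\,\trace\wrapParens{\minv{\pvsm}\svsm} = \ssiz + \ssiz(\bar{\vect{r}}-\pvmu)^\top\minv{\pvsig}(\bar{\vect{r}}-\pvmu) + \trace\wrapParens{\minv{\pvsig}\Mtx{S}},
\end{equation*}
which is obtained by plugging the block form of $\minv{\pvsm}$ from \eqnref{trick_inversion} and of \svsm into the trace, expanding, and using $\Mtx{S}=\ssiz(\svsm_{22}-\bar{\vect{r}}\bar{\vect{r}}^\top)$. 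This identity is exactly what glues the Gaussian exponent $-\frac{\ssiz}{2}(\bar{\vect{r}}-\pvmu)^\top\minv{\pvsig}(\bar{\vect{r}}-\pvmu)$ and the Wishart exponent $-\frac{1}{2}\trace(\minv{\pvsig}\Mtx{S})$ into the single exponent $-\frac{\ssiz}{2}\trace(\minv{\pvsm}\svsm)$ that appears in the conclusion.

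The main obstacle is the block-trace identity (iii); everything else is bookkeeping. Once it is in hand, the three identities plug into the Wishart--Gaussian product; all powers of $\ssiz$, the $(2\pi)$ factors, the constant $\exp(\ssiz/2)$ arising from the ``$+\ssiz$'' term in (iii), and the Wishart normalizing constant $2^{(\ssiz-1)\nlatf/2}\Gamma_\nlatf((\ssiz-1)/2)$ are purely functions of $\ssiz$ and $\nlatf$ and thus collapse into the single constant $\exp(c'_{\ssiz,\nlatf})$, yielding \eqnref{theta_dist_gaussian}. \qed
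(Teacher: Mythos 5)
Your proposal is correct, but it reaches the density by a genuinely different route than the paper. The paper stays entirely inside the $\pvsm$-parametrization: it takes the negative log-likelihood of the whole data matrix from \lemmaref{x_dist_gaussian}, uses linearity of the trace to express it through \svsm, and then invokes Lemma (5.1.1) of Press to convert a density on the data matrix into a density on \svsm; that citation is what supplies the $\det{\svsm}^{\half[\ssiz-\nlatf-2]}$ volume factor and the normalizing constants. You instead decompose into the sufficient pair $(\bar{\vect{r}},\Mtx{S})$, use the classical independence and Normal--Wishart joint density, and change variables explicitly; your block-trace identity (iii) is the sample-level aggregate of the single-observation identity $\qiform{\pvsig}{\wrapParens{\vreti-\pvmu}} = \qiform{\pvsm}{\avreti}-1$ that the paper proves inside \lemmaref{x_dist_gaussian}, and your determinant and Jacobian bookkeeping all checks out (the change of variables is block-triangular, so the cross-derivative of the lower block with respect to $\bar{\vect{r}}$ is harmless). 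What each approach buys: the paper's argument is shorter and never leaves the augmented-moment coordinates, but it outsources the essential Jacobian computation to Press's lemma, which must be applied with some care since the augmented vector \avreti is not itself Gaussian (its first coordinate is a deterministic $1$); your argument is longer but self-contained modulo the textbook Fisher--Cochran result, sidesteps that subtlety entirely, and makes transparent why $\ssiz\svsm$ is a Wishart variate conditioned on its $(1,1)$ entry, which is exactly the content of the corollary that follows the lemma.
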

\begin{proof}
Let \amreti be the matrix
whose rows are the vectors \tr{\vreti[i]}. 
From \lemmaref{x_dist_gaussian}, and using linearity of the trace, 
the negative log density of \amreti is
\begin{equation*}
\begin{split}
- \log\normpdf{\amreti}{\pvsm} &=
  \ssiz c_{\nlatf} 
+ \half[\ssiz] \logdet{\pvsm} 
+ \half \trace{\minv{\pvsm}\gram{\amreti}},\\
\therefore
\frac{- 2\log\normpdf{\amreti}{\pvsm}}{\ssiz} &= 
  2 c_{\nlatf} 
+ \logdet{\pvsm} 
+ \trace{\minv{\pvsm}\svsm}.
\end{split}
\end{equation*}

By Lemma (5.1.1) of Press \cite{press2012applied}, this can be expressed
as a density on \svsm:
\begin{equation*}
\begin{split}
\frac{- 2\log\FOOpdf{}{\svsm}{\pvsm}}{\ssiz} 
&= \frac{- 2\log\normpdf{\amreti}{\pvsm}}{\ssiz}
	-\frac{2}{\ssiz}\wrapParens{\half[\ssiz-\nlatf-2]\logdet{\svsm}}\\
&\phantom{=}\,
	-\frac{2}{\ssiz}\wrapParens{\half[\nlatf+1]\wrapParens{\ssiz - \half[\nlatf]} \log\pi -
	\sum_{j=1}^{\nlatf+1} \log\funcit{\Gamma}{\half[\ssiz +1-j]}},\\
&= \wrapBracks{2c_{\nlatf} - \frac{\nlatf+1}{\ssiz}\wrapParens{\ssiz - \half[\nlatf]} \log\pi 
	- \frac{2}{\ssiz} \sum_{j=1}^{\nlatf+1}
	\log\funcit{\Gamma}{\half[\ssiz +1-j]}}\\
&\phantom{=}\,
	+ \logdet{\pvsm} - \frac{\ssiz-\nlatf-2}{\ssiz}\logdet{\svsm}
	+ \trace{\minv{\pvsm}\svsm},\\
&= c'_{\ssiz,\nlatf} 
	- \log\frac{\det{\svsm}^{\frac{\ssiz-\nlatf-2}{\ssiz}}}{\det{\pvsm}}
	+ \trace{\minv{\pvsm}\svsm},
\end{split}
\end{equation*}
where $c'_{\ssiz,\nlatf}$ is the term in brackets on the third line.
Factoring out $\fracc{-2}{\ssiz}$ and taking an exponent gives the
result.
\end{proof}
\begin{corollary}
The random variable $\ssiz\svsm$ has the same density, up to a constant 
in \nlatf and \ssiz, as a 
$\nlatf+1$-dimensional Wishart random variable with \ssiz degrees of freedom
and scale matrix \pvsm. Thus $\ssiz\svsm$ is a \emph{conditional} Wishart,
conditional on $\svsm_{1,1} = 1$.  \cite{press2012applied,anderson2003introduction}
\end{corollary}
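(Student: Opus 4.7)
The plan is to verify the claim by direct comparison of two densities and then interpret the resulting equality as a conditioning statement. First, I would transform the density of \svsm given in \lemmaref{theta_dist_gaussian} into the density of $\Mtx{W} \defeq \ssiz\svsm$ by the standard change of variables on the space of symmetric matrices. Since the map $\svsm \mapsto \ssiz\svsm$ is linear, its Jacobian (on the $\fvech$-vectorization) equals $\ssiz^{k}$ for $k=(\nlatf+1)(\nlatf+2)/2$, which is a constant depending only on \ssiz and \nlatf. The resulting density is
\[
\FOOpdf{}{\Mtx{W}}{\pvsm} \propto \frac{\det{\Mtx{W}}^{\half[\ssiz-\nlatf-2]}}{\det{\pvsm}^{\half[\ssiz]}}\longexp{-\half\trace{\minv{\pvsm}\Mtx{W}}},
\]
the proportionality constant absorbing $\ssiz^{-k}$ and $\longexp{c'_{\ssiz,\nlatf}}$ together with the factor $\ssiz^{(\ssiz-\nlatf-2)(\nlatf+1)/2}$ that comes from pulling \ssiz out of $\det{\Mtx{W}/\ssiz}$.

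Second, I would write down the standard Wishart density on $(\nlatf+1)\times(\nlatf+1)$ positive definite matrices with \ssiz degrees of freedom and scale matrix \pvsm,
\[
f_{\mathrm{Wish}}(\Mtx{W}) = \frac{\det{\Mtx{W}}^{\half[\ssiz-\nlatf-2]}\longexp{-\half\trace{\minv{\pvsm}\Mtx{W}}}}{2^{\ssiz(\nlatf+1)/2}\det{\pvsm}^{\half[\ssiz]}\,\funcit{\Gamma_{\nlatf+1}}{\ssiz/2}},
\]
and observe that this matches the transformed density above as a function of \Mtx{W}, the two expressions differing only in a normalizing constant that depends on \ssiz and \nlatf alone. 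That establishes the first sentence of the corollary.

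For the conditioning statement, the key observation is that the $(1,1)$ entry of $\ogram{\avreti[i]}$ is deterministically equal to $1$, so $\svsm_{1,1}\equiv 1$ and $\Mtx{W}_{1,1}\equiv \ssiz$. Thus the law of \Mtx{W} is supported on the affine slice $\{\Mtx{W}:\Mtx{W}_{1,1}=\ssiz\}$, on which its density (with respect to the induced Lebesgue measure on the non-redundant entries below and right of the corner) agrees with $f_{\mathrm{Wish}}$ up to the normalization constant. By the standard disintegration of the Wishart law along the $(1,1)$-coordinate, this is precisely the conditional density of a Wishart given $\Mtx{W}_{1,1}=\ssiz$, i.e.\ given $\svsm_{1,1}=1$.

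The only step requiring care is the bookkeeping of the two constants — the Jacobian of the $\ssiz$-scaling and the $\det{\svsm}^{\half[\ssiz-\nlatf-2]}$ factor — so that one is certain the mismatch between the density in \lemmaref{theta_dist_gaussian} and the Wishart density depends only on \ssiz and \nlatf, not on \pvsm or \Mtx{W}. Everything else is immediate from the explicit formulas already in hand.
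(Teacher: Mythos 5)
Your proposal is correct and follows the same route the paper intends: the corollary is read off directly from the density in \lemmaref{theta_dist_gaussian} by substituting $\ssiz\svsm$ and recognizing the Wishart kernel, with all scaling and Jacobian factors depending only on \ssiz and \nlatf. Your extra care about the degenerate $(1,1)$ entry and the disintegration along that coordinate is a welcome sharpening of the conditioning claim, which the paper asserts without elaboration.
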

\begin{corollary}
The derivatives of log likelihood are given by 
\begin{equation}
\begin{split}
\drbydr{\log\FOOpdf{}{\svsm}{\pvsm}}{\fvec{\pvsm}} 
	&= - \half[\ssiz]\tr{\wrapBracks{\fvec{\minv{\pvsm} -
\minv{\pvsm}\svsm\minv{\pvsm}}}},\\
\drbydr{\log\FOOpdf{}{\svsm}{\pvsm}}{\fvec{\minv{\pvsm}}} 
	&= -\half[\ssiz]\tr{\wrapBracks{\fvec{\pvsm - \svsm}}}.
\end{split}
\label{eqn:deriv_gauss_loglik}
\end{equation}
\end{corollary}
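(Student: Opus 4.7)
The plan is to differentiate the closed-form log density supplied by \lemmaref{theta_dist_gaussian} term by term. Writing it out,
\begin{equation*}
\log\FOOpdf{}{\svsm}{\pvsm} = c'_{\ssiz,\nlatf} + \half[\ssiz-\nlatf-2]\logdet{\svsm} - \half[\ssiz]\logdet{\pvsm} - \half[\ssiz]\trace{\minv{\pvsm}\svsm},
\end{equation*}
so only the last two summands depend on the parameter, and each of the two identities follows by differentiating those two terms in the appropriate parametrization.

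For the first identity I would differentiate directly in $\pvsm$ using the standard identities $\dbyd{\logdet{\Mtx{A}}}{\Mtx{A}} = \trminv{\Mtx{A}}$ and, via $d\minv{\Mtx{A}} = -\minv{\Mtx{A}}(d\Mtx{A})\minv{\Mtx{A}}$, the identity $\dbyd{\trace{\minv{\Mtx{A}}\Mtx{B}}}{\Mtx{A}} = -\trminv{\Mtx{A}}\tr{\Mtx{B}}\trminv{\Mtx{A}}$. Specializing to symmetric $\pvsm$ with $\Mtx{B} = \svsm$ and adding the two contributions yields the matrix $-\half[\ssiz]\wrapParens{\minv{\pvsm} - \minv{\pvsm}\svsm\minv{\pvsm}}$; vectorizing and transposing for the numerator-layout convention produces the claim.

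For the second identity it is cleanest to reparametrize by $\Mtx{A} \defeq \minv{\pvsm}$, under which $\logdet{\pvsm} = -\logdet{\Mtx{A}}$ and $\trace{\minv{\pvsm}\svsm} = \trace{\Mtx{A}\svsm}$. The log density becomes affine-plus-log-determinant in \Mtx{A}, and only the two elementary identities $\dbyd{\logdet{\Mtx{A}}}{\Mtx{A}} = \minv{\Mtx{A}} = \pvsm$ and $\dbyd{\trace{\Mtx{A}\svsm}}{\Mtx{A}} = \svsm$ are needed to read off the answer. As a cross-check one can instead apply the chain rule to the first identity, using $\dbyd{\fvec{\pvsm}}{\fvec{\minv{\pvsm}}} = -\wrapParens{\pvsm \kron \pvsm}$ from \lemmaref{deriv_vech_matrix_inverse}; the key simplification is that $\wrapParens{\pvsm \kron \pvsm}\fvec{\minv{\pvsm} - \minv{\pvsm}\svsm\minv{\pvsm}} = \fvec{\pvsm - \svsm}$ telescopes on the two sides.

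No step here is delicate: the only bookkeeping care required is the numerator-layout convention, responsible for the outer transposes in the statement, and the sign coming from differentiating the matrix inverse. Everything else is routine application of the identities already compiled in \lemmaref{misc_derivs} and \lemmaref{deriv_vech_matrix_inverse}.
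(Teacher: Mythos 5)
Your method is the same as the paper's: write out the log density from \lemmaref{theta_dist_gaussian}, discard the terms free of the parameter, and differentiate $\logdet{\pvsm}$ and $\trace{\minv{\pvsm}\svsm}$ by standard matrix calculus (the paper's own proof is exactly this, stated in one line). Your derivation of the first identity is correct.

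For the second identity, however, the computation you describe does not produce the stated sign. In the reparametrization $\Mtx{A}=\minv{\pvsm}$ the parameter-dependent part of the log density is $\half[\ssiz]\logdet{\Mtx{A}} - \half[\ssiz]\trace{\Mtx{A}\svsm}$ (note $\logdet{\pvsm}=-\logdet{\Mtx{A}}$), whose gradient is $\half[\ssiz]\wrapParens{\minv{\Mtx{A}} - \svsm} = +\half[\ssiz]\wrapParens{\pvsm-\svsm}$, the \emph{negative} of the displayed right-hand side. Your chain-rule cross-check gives the same thing once the signs are tracked: the first identity contributes a factor $-\half[\ssiz]$, and $\dbyd{\fvec{\pvsm}}{\fvec{\minv{\pvsm}}}=-\wrapParens{\pvsm\kron\pvsm}$ contributes a second minus, so the telescoped answer is $+\half[\ssiz]\tr{\wrapBracks{\fvec{\pvsm-\svsm}}}$. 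So either you silently dropped a sign to land on the stated formula, or the statement itself carries a sign typo; the latter seems likely, since the paper's subsequent Fisher-information computation uses the Hessian $-\half[\ssiz]\wrapParens{\AkronA{\pvsm}}$, which is the derivative with respect to $\fvec{\minv{\pvsm}}$ of $+\half[\ssiz]\wrapParens{\pvsm-\svsm}$, i.e.\ is consistent with the sign-corrected gradient. Either way, a careful write-up should flag the discrepancy rather than assert that the computation "reads off" the displayed formula.
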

\begin{proof}
Plugging in the log likelihood gives
\begin{equation*}
\drbydr{\log\FOOpdf{}{\svsm}{\pvsm}}{\fvec{\pvsm}}
 = - \half[\ssiz]\wrapBracks{\drbydr{\logdet{\pvsm}}{\fvec{\pvsm}} +
\drbydr{\trace{\minv{\pvsm}\svsm}}{\fvec{\pvsm}}},
\end{equation*}
and then standard matrix calculus gives the first 
result. \cite{magnus1999matrix,petersen2012matrix}
Proceeding similarly gives the second.
\end{proof}

This immediately gives us the Maximum Likelihood Estimator.

\begin{corollary}[MLE]
\svsm is the maximum likelihood estimator of \pvsm. \label{corollary:theta_mle}
\end{corollary}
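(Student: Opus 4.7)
The plan is to find the stationary point of the log likelihood using the derivative formula already established in the preceding corollary, then verify it corresponds to a global maximum over the admissible parameter set.

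First, I would use the second equation of \eqnref{deriv_gauss_loglik}, namely
\begin{equation*}
\drbydr{\log\FOOpdf{}{\svsm}{\pvsm}}{\fvec{\minv{\pvsm}}}
= -\half[\ssiz]\tr{\wrapBracks{\fvec{\pvsm - \svsm}}}.
\end{equation*}
Since the map $\pvsm \mapsto \minv{\pvsm}$ is a diffeomorphism on the cone of positive-definite matrices, stationarity of the log likelihood in $\pvsm$ is equivalent to stationarity in $\minv{\pvsm}$. Setting the above expression to zero yields $\pvsm = \svsm$, identifying $\svsm$ as the unique stationary point.

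Second, I would check that $\svsm$ is actually in the admissible parameter space. Any valid $\pvsm$ must have the block form of \eqnref{pvsm_def}, i.e. a $1$ in the top-left entry. By construction $\svsm = \oneby{\ssiz}\gram{\amreti}$, and since the first coordinate of each $\avreti[i]$ is deterministically $1$, the $(1,1)$ entry of $\svsm$ equals $1$ identically; the remaining block partitions into sample mean and (uncentered) sample second moment in exactly the pattern required. Thus $\svsm$ lies in the parameter space, and no boundary adjustment is needed.

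Finally, I would verify that this stationary point is the global maximum. The cleanest route is to observe, from \lemmaref{theta_dist_gaussian}, that
\begin{equation*}
-\frac{2}{\ssiz}\log\FOOpdf{}{\svsm}{\pvsm}
= \logdet{\pvsm} + \trace{\minv{\pvsm}\svsm} + \text{(terms not depending on }\pvsm\text{)}.
\end{equation*}
The function $\pvsm \mapsto \logdet{\pvsm} + \trace{\minv{\pvsm}\svsm}$ is strictly convex on the positive-definite cone (a standard fact; equivalently, one can substitute $\Mtx{K} = \minv{\pvsm}$ and note $-\logdet{\Mtx{K}} + \trace{\Mtx{K}\svsm}$ is strictly convex in $\Mtx{K}$), so its unique stationary point is the global minimum. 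Hence $\svsm$ maximizes the log likelihood.

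The only mild obstacle is the constrained parametrization (the $(1,1)=1$ condition). I expect no real difficulty because $\svsm$ automatically satisfies it, so the unconstrained stationary point coincides with the constrained maximizer; if a referee wanted more, one could reparametrize in $(\pvmu,\pvsig)$ and reapply the same convexity argument to recover the classical Gaussian MLE formulas $\svmu = \oneby{\ssiz}\sum_i \vreti[i]$ and $\svsig + \ogram{\svmu} = \oneby{\ssiz}\sum_i \ogram{\vreti[i]}$.
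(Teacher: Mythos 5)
Your proposal is correct and follows the same route the paper intends: the paper presents this corollary as an immediate consequence of the derivative formula in \eqnref{deriv_gauss_loglik}, i.e.\ setting the gradient with respect to $\fvec{\minv{\pvsm}}$ to zero to obtain $\pvsm = \svsm$. Your additional checks (that \svsm lies in the constrained parameter set with unit $(1,1)$ entry, and that convexity of $-\logdet{\Mtx{K}} + \trace{\Mtx{K}\svsm}$ in $\Mtx{K}=\minv{\pvsm}$ makes the stationary point a global maximum) are sound and merely make explicit what the paper leaves implicit.
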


To compute the covariance of \fvech{\pvsm}, \pvvar, in the Gaussian case, one
can compute the Fisher Information, then appeal to the fact that
\pvsm is the MLE. However, because the first element of \fvech{\pvsm}
is a deterministic $1$, the first row and column of \pvvar are all
zeros. This is an unfortunate wrinkle. One solution would be to
to compute the Fisher Information with respect to the nonredundant variables;
however, a direct brute-force approach is also possible,
and gives a slightly more general result, as in the following section.

\subsection{Distribution under elliptical returns}

\providecommand{\kurty}[1][]{\mathSUB{\kappa}{#1}}

We pursue a slightly more general result on the distribution of
\svsm, assuming that \vreti are drawn independently from 
an \emph{elliptical} distribution, with mean \pvmu, covariance
\pvsig, and `kurtosis factor' \kurty, by which we mean the
excess kurtosis of each \vreti[i] is $3 \wrapParens{\kurty - 1}$.
In the Gaussian case, $\kurty=1$.

To be concrete, we suppose $\vreti = \pvmu + \frac{a \chol{\Mtx{\Lambda}}
\vect{z}}{\norm{\vect{z}}}$, where $\vect\sim\normlaw{\vzero,\eye[n]}$,
and where $a$ is a scalar random variable independent of \vect{z}. The
covariance \pvsig is related to the matrix \Mtx{\Lambda} via
$$
\pvsig = \frac{\E{a^2}}{n}\Mtx{\Lambda}.
$$
The kurtosis parameter is then defined as
$$
\kurty = \frac{n}{n+2}\frac{\E{a^4}}{\E{a^2}}.
$$
An extension of Isserlis' Theorem to the elliptical distribution gives moments
of products of elements of \vreti.  \cite{vignat2007extension,KAN2008542}
This result is comparable to the covariance of the centered second moments
given as Equation (2.1) of Iwashita and Siotani, but is applicable to
the uncentered second moment. \cite{CJS:CJS273}

\begin{theorem}
\label{theorem:theta_covar_elliptical}
Let \svsm be the unbiased sample estimate of 
\pvsm, based on \ssiz \iid samples of \vreti, assumed to take an
elliptical distribution with kurtosis parameter \kurty.
In analogue to how \avreti are built from \vreti, define
\begin{equation}
\label{eqn:apvmu_apvsig}
	\apvmu \defeq \asvec{1,\tr{\pvmu}},\quad\mbox{and}\quad
	\apvsig \defeq \twobytwo{0}{\tr{\vzero}}{\vzero}{\pvsig}.
\end{equation}
Note that $\pvsm = \apvsig + \ogram{\apvmu}.$
Then
\begin{align}
\label{eqn:theta_covar_elliptical}
	{\ssiz}\VAR{\fvec{\svsm}} = \pvvar[0] &= \wrapParens{\kurty-1}
	\wrapBracks{\ogram{\fvec{\apvsig}} + \wrapParens{\eye + \Komm}\AkronA{\apvsig}}\\
	&\phantom{=}\,+ \wrapParens{\eye + \Komm}\wrapBracks{\AkronA{\pvsm} -
	\AkronA{\ogram{\apvmu}}}.\nonumber
\end{align}
\end{theorem}
As it is cumbersome and unenlightening, we relegate the proof to
the appendix.
The central limit theorem then gives us the following corollary.
\begin{corollary}
\label{corollary:theta_asym_var_elliptical}
Under the conditions of the previous theorem, 
asymptotically in \ssiz, 
\begin{equation}
\sqrt{\ssiz}\wrapParens{\fvech{\svsm} - \fvech{\pvsm}} 
\rightsquigarrow 
\normlaw{0,\qoform{\pvvar[0]}{\Elim}},
	\label{eqn:mvclt_isvsm_elliptical}
\end{equation}
where \pvvar[0] is defined in \eqnref{theta_covar_elliptical}.
\end{corollary}
Using \theoremref{inv_distribution} we also immediately get the following
\begin{corollary}
\label{corollary:invtheta_asym_var_elliptical}
Under the conditions of the previous theorem, 
asymptotically in \ssiz, 
\begin{equation}
\sqrt{\ssiz}\wrapParens{\fvech{\minv{\svsm}} - \fvech{\minv{\pvsm}}} 
	\rightsquigarrow \normlaw{0,\qoform{\qoform{\pvvar[0]}{\Elim}}{\Mtx{H}}},
\label{eqn:mvclt_isvsm_elliptic}
\end{equation}
where \pvvar[0] is defined in \eqnref{theta_covar_elliptical}, and
\begin{equation}
\Mtx{H} = -\EXD{\wrapParens{\AkronA{\minv{\pvsm}}}}.
\end{equation}
\end{corollary}
An uglier form of the same corollary gives the covariance
explicitly. See the appendix for the proof.
\begin{corollary}
\label{corollary:invtheta_asym_var_elliptical_explicit}
Under the conditions of the previous theorem, 
asymptotically in \ssiz, 
\begin{equation}
\sqrt{\ssiz}\wrapParens{\fvech{\minv{\svsm}} - \fvech{\minv{\pvsm}}} 
	\rightsquigarrow \normlaw{0,\Mtx{B}},
\label{eqn:mvclt_isvsm_elliptic_explicit}
\end{equation}
where
\begin{align*}
	\Mtx{B} &= \wrapParens{\kurty-1}\Elim\wrapBracks{\ogram{\fvec{\minv{\pvsm} -
	\ogram{\basev[1]}}}}
		\tr{\Elim}\\
		&\phantom{=}\,+ 2\wrapParens{\kurty-1}\Elim\Simm
		\wrapBracks{\AkronA{\wrapParens{\minv{\pvsm} - \ogram{\basev[1]}}}}
		\tr{\Simm}
		\tr{\Elim}\\
		&\phantom{=}\,+ 2\Elim\Simm
		\wrapBracks{\AkronA{\minv{\pvsm}} - \AkronA{\ogram{\basev[1]}}}
		\tr{\Simm}
		\tr{\Elim}.
\end{align*}
\end{corollary}
We are often concerned with the \txtSNR and sample \txtMP, whose joint
asymptotic distribution we can pick out from the previous corollary:
\begin{corollary}
\label{corollary:invtheta_bits_asym_var_elliptical_explicit}
Under the conditions of the previous theorem, 
asymptotically in \ssiz, 
\begin{equation}
\sqrt{\ssiz}\wrapParens{
	\twobyone{1+\ssrsqopt}{-\sportwopt} -
	\twobyone{1+\psnrsqopt}{-\pportwopt}}
	\rightsquigarrow \normlaw{0,\Mtx{C}},
\label{eqn:mvclt_isvsm_bits_elliptic_explicit}
\end{equation}
where
\begin{equation*}
\Mtx{C} = \twobytwo{%
	2\psnrsqopt\wrapParens{2+\psnrsqopt} + 3 \wrapParens{\kurty-1}\psnropt^4 }{%
	-\wrapParens{2\wrapParens{1+\psnrsqopt} + 3\wrapParens{\kurty-1}\psnrsqopt}\tr{\pportwopt}}{%
	-\wrapParens{2\wrapParens{1+\psnrsqopt} + 3\wrapParens{\kurty-1}\psnrsqopt}\pportwopt}{%
	\wrapParens{1+\kurty\psnrsqopt}\minv{\pvsig}
	+ \wrapParens{2\kurty-1}\ogram{\pportwopt} }.
\end{equation*}

Furthermore, if \Mtx{Q} is an orthogonal matrix ($\ogram{\Mtx{Q}} = \eye$) such
that 
$$
\Mtx{Q} \minv{\wrapParens{\chol{\pvsig}}}\pvmu = \psnropt\basev[1],
$$
where $\chol{\pvsig}$ is the lower triangular Cholesky factor of \pvsig, then
asymptotically in \ssiz, 
\begin{equation}
\sqrt{\ssiz}\wrapParens{
	\twobyone{1+\ssrsqopt}{\Mtx{Q}\trchol{\pvsig}\sportwopt} -
	\twobyone{1+\psnrsqopt}{\psnropt\basev[1]}}
	\rightsquigarrow \normlaw{0,\Mtx{D}},
\label{eqn:mvclt_isvsm_qbits_elliptic_explicit}
\end{equation}
where
\begin{equation*}
\Mtx{D} = \twobytwo{%
	2\psnrsqopt\wrapParens{2+\psnrsqopt} + 3 \wrapParens{\kurty-1}\psnropt^4 }{%
	\wrapParens{2\wrapParens{1+\psnrsqopt} +
	3\wrapParens{\kurty-1}\psnrsqopt}\psnropt\tr{\basev[1]}}{%
	\wrapParens{2\wrapParens{1+\psnrsqopt} +
	3\wrapParens{\kurty-1}\psnrsqopt}\psnropt\basev[1]}{%
	\wrapParens{1+\kurty\psnrsqopt}\eye
	+ \wrapParens{2\kurty-1}\psnrsqopt\ogram{\basev[1]} }.
\end{equation*}
\end{corollary}

We note that the asymptotic variance of \ssrsqopt we find here for the case of Gaussian returns 
($\kurty=1$) is consistent with the exact variance one computes from the 
non-central \flaw{} distribution, via the connection to Hotelling's \Tstat{}.
That variance (assuming, as we do here, that \svsig is estimated with \ssiz in the numerator,
not $\ssiz-1$) is 
$$
2\frac{\ssiz^2\wrapParens{\psrUL{4}{*} + 2\psnrsqopt} + \ssiz\wrapParens{\nlatf - 4\psnrsqopt} -2\nlatf}{%
	\wrapParens{\ssiz-\nlatf-2}^2\wrapParens{\ssiz-\nlatf-4}} 
	= 2\psnrsqopt\frac{2+\psnrsqopt}{\ssiz} + \bigo{\ssiz^{-2}}.
$$
Our asymptotic variance captures only the leading term, as one would expect.

The choice to rescale by \Mtx{Q} and \trchol{\pvsig} in
\eqnref{mvclt_isvsm_qbits_elliptic_explicit} is worthy of explanation.
First note that the true expected return of \sportwopt is equal to
\begin{align*}
\tr{\pvmu}\sportwopt 
	&= \tr{\pvmu}\tr{\wrapParens{\minv{\wrapParens{\chol{\pvsig}}}}} \trchol{\pvsig}\sportwopt\\
  &= \tr{\wrapParens{\minv{\wrapParens{\chol{\pvsig}}}\pvmu}} \gram{\Mtx{Q}} \trchol{\pvsig}\sportwopt 
	= \psnropt \tr{\basev[1]} \Mtx{Q} \trchol{\pvsig}\sportwopt.
\end{align*}
That is the expected return is determined entirely by the first element of
$\Mtx{Q}\trchol{\pvsig}\sportwopt$.
Now note that the volatility of \sportwopt is equal to the Euclidian norm of
that vector:
\begin{align*}
\qform{\pvsig}{\sportwopt} 
	&= \gram{\wrapParens{\trchol{\pvsig}\sportwopt}}\\
	&= \qform{\gram{\Mtx{Q}}}{\wrapParens{\trchol{\pvsig}\sportwopt}}
	= \norm{\Mtx{Q}\trchol{\pvsig}\sportwopt}.
\end{align*}
The rotation also gives a diagonal asymptotic covariance. That is, the
matrix $\wrapParens{1+\kurty\psnrsqopt}\eye +
\wrapParens{2\kurty-1}\psnrsqopt\ogram{\basev[1]}$ is diagonal, 
so we treat the errors in the vector $\Mtx{Q}\trchol{\pvsig}\sportwopt$ as 
asymptotically uncorrelated.

We can use these facts to arrive at an approximate asymptotic distribution
of the \txtSNR of the \txtMP, defined via the function
$$
\pSNR{\pportwopt} \defeq \frac{\trAB{\pportwopt}{\pvmu}}{\sqrt{\qform{\pvsig}{\pportwopt}}}.
$$
So, by the above
$$
\pSNR{\sportwopt} = \psnropt \frac{\trAB{\basev[1]}{\Mtx{Q} \trchol{\pvsig}\sportwopt}}{%
	\norm{\Mtx{Q}\trchol{\pvsig}\sportwopt}}.
$$

Asymptotically we can think of this as
$$
\pSNR{\sportwopt} = \psnropt \frac{\psnropt + \lambda_1
z_1}{\sqrt{\wrapParens{\psnropt + \lambda_1 z_1}^2 + \lambda_p^2
\wrapParens{z_2^2 + \ldots + z_p^2}}},
$$
where 
$$
\lambda_1 = \ssiz^{-\halff}\sqrt{\wrapParens{1+\kurty\psnrsqopt} + \wrapParens{2\kurty - 1}\psnrsqopt},\quad\mbox{and}\quad
\lambda_p = \ssiz^{-\halff}\sqrt{\wrapParens{1+\kurty\psnrsqopt}},
$$
and where the $z_i$ are independent standard normals.

Now consider the Tangent of Arcsine, or ``tas,'' transform defined as $\fntas{x} = x / \sqrt{1-x^2}$. 
\cite{pav2014qbounds}
Applying this transformation to the rescaled \txtSNR, one arrives at
$$
\fntas{\frac{\pSNR{\sportwopt}}{\psnropt}} =
\frac{\psnropt + \lambda_1 z_1}{\lambda_p\sqrt{z_2^2 + \ldots + z_p^2}},
$$
which looks a lot like a non-central $t$ random variable, up to scaling.
So write 
\begin{equation}
\fntas{\frac{\pSNR{\sportwopt}}{\psnropt}} 
= \frac{\lambda_1}{\lambda_p \sqrt{p-1}} \frac{\frac{\psnropt}{\lambda_1} + z_1}{\sqrt{z_2^2 + \ldots + z_p^2}/\sqrt{p-1}} 
= \frac{\lambda_1}{\lambda_p \sqrt{p-1}} t,
\label{eqn:eta_form_psnr_mp}
\end{equation}
where $t$ is a non-central $t$ random variable with $p-1$ degrees of freedom
and non-centrality parameter $\psnropt/\lambda_1$.
See \secref{ellip_sims}, however, for simulations which indicate that an
unreasonably large sample size is required for this approximation to be of any use.

We can perform one more transformation and use the delta method once again
on the map $x \mapsto \sqrt{x-1}$ to 
convert \eqnref{mvclt_isvsm_qbits_elliptic_explicit} into
\begin{equation}
\sqrt{\ssiz}\wrapParens{
	\twobyone{\ssropt}{\Mtx{Q}\trchol{\pvsig}\sportwopt} -
	\twobyone{\psnropt}{\psnropt\basev[1]}}
	\rightsquigarrow \normlaw{0,\Mtx{D_1}},
\label{eqn:mvclt_isvsm_qbits_elliptic_explicit_II}
\end{equation}
where
\begin{equation*}
\Mtx{D_1} = \twobytwo{%
	1 + \frac{2 + 3 \wrapParens{\kurty-1}}{4}\psnropt^2 }{%
	\wrapParens{\wrapParens{1+\psnrsqopt} +
	\frac{3}{2}\wrapParens{\kurty-1}\psnrsqopt}\tr{\basev[1]}}{%
	\wrapParens{2\wrapParens{1+\psnrsqopt} +
	\frac{3}{2}\wrapParens{\kurty-1}\psnrsqopt}\basev[1]}{%
	\wrapParens{1+\kurty\psnrsqopt}\eye
	+ \wrapParens{2\kurty-1}\psnrsqopt\ogram{\basev[1]} }.
\end{equation*}
Note that the variance for \psnropt given here is just Mertens' form of 
the standard error of the \txtSR, given that elliptical distributions
have zero skew and excess kurtosis of $3\wrapParens{\kurty-1}$. 
\cite{mertens2002comments}

We can take this a step further by swapping in $\ssropt$ for $\psnropt$ to arrive
at
\begin{equation}
\sqrt{\ssiz}\wrapParens{\Mtx{Q}\trchol{\pvsig}\sportwopt - \ssropt\basev[1]}
	\rightsquigarrow \normlaw{0,\Mtx{D_2}},
\label{eqn:mvclt_isvsm_qbits_elliptic_explicit_III}
\end{equation}
where
\begin{equation*}
\Mtx{D_2} = 
	\wrapParens{1+\kurty\psnrsqopt}\eye
	- \wrapParens{1 + \frac{2 + \wrapParens{\kurty-1}}{4}\psnrsqopt} \ogram{\basev[1]}.
\end{equation*}

Now perform the same transformation to a $t$ random variable to claim that
\begin{equation}
\fntas{\frac{\pSNR{\sportwopt}}{\psnropt}} 
= \frac{\lambda_1'}{\lambda_p \sqrt{p-1}} t,
\label{eqn:eta_form_psnr_mp_II}
\end{equation}
where $t$ is a non-central $t$ random variable with $p-1$ degrees of freedom
and non-centrality parameter $\ssropt/\lambda_1'$, and 
$$
\lambda_1' = \ssiz^{-\halff}\sqrt{\frac{2 + 3\wrapParens{\kurty-1}}{4}\psnrsqopt},\quad\mbox{and}\quad
\lambda_p = \ssiz^{-\halff}\sqrt{\wrapParens{1+\kurty\psnrsqopt}}.
$$
This suggests another confidence limit, namely take $t$ to be the $\typeI$ quantile
of the non-central $t$ distribution with $p-1$ degrees of freedom
and non-centrality parameter $\ssropt/\lambda_1'$, where $\lambda_1'$ and $\lambda_p$
plug in \ssropt for \psnropt wherever needed, then invert \eqnref{eta_form_psnr_mp_II}
to get a confidence limit on \pSNR{\sportwopt}. This confidence limit is also of
dubious value, however, see \secref{normal_sims}.

Note that the relation in \eqnref{eta_form_psnr_mp} requires one to know
\psnropt. To perform inference on \pSNR{\sportwopt} given the observed data,
we adapt \corollaryref{psnr_minus_ssropt_moments} to the case of elliptical
returns.

\begin{theorem}
\label{theorem:mp_snr_ci_elliptical}
Let \svsm be the unbiased sample estimate of 
\pvsm, based on \ssiz \iid samples of \vreti, assumed to take an
elliptical distribution with kurtosis parameter \kurty.
Let \sportwopt be the sample \txtMP, and \ssropt be the sample
\txtSR of that portfolio. Define the \txtSNR of \sportwopt as
\begin{equation*}
\pSNR{\pportw ; \pvsm, 0} \defeq \frac{\trAB{\pportw}{\pvmu}
}{\sqrt{\qform{\pvsig}{\pportw}}}.
\end{equation*}

Then, asymptotically in \ssiz, the difference $\pSNR{\minv{\svsm} ; \pvsm, 0} - \ssropt$ 
has the following mean and variance:
\begin{align}
\label{eqn:psnr_minus_ssropt_moments_elliptical}
\E{\pSNR{\minv{\svsm} ; \pvsm, 0} - \ssropt} &\rightsquigarrow 
\frac{ \wrapParens{\kurty \psnrsqopt + 1}\wrapParens{1 - \nlatf} 
	+ \frac{\wrapParens{3\kurty-1}}{4}\psnrsqopt + 1
	}{2\ssiz\psnropt}.\\
\VAR{\pSNR{\minv{\svsm} ; \pvsm, 0} - \ssropt} &\rightsquigarrow 
	\frac{\frac{\wrapParens{3\kurty-1}}{4} \psnrsqopt + 1}{\ssiz}.
\end{align}
	And the \emph{ratio} has the asymptotic mean and variance
\begin{align}
\label{eqn:psnr_div_ssropt_moments_elliptical}
\E{\frac{\pSNR{\minv{\svsm} ; \pvsm, 0}}{\ssropt}} &\rightsquigarrow 
1 + \frac{ \wrapParens{\kurty \psnrsqopt + 1}\wrapParens{1 - \nlatf} 
	+ 3\wrapParens{\frac{\wrapParens{3\kurty-1}}{4}\psnrsqopt + 1}
	}{2\ssiz\psnropt^2}.\\
\VAR{\frac{\pSNR{\minv{\svsm} ; \pvsm, 0}}{\ssropt}} &\rightsquigarrow 
	\frac{\frac{\wrapParens{3\kurty-1}}{4} \psnrsqopt + 1}{\ssiz\psnropt^2}.
\end{align}
\end{theorem}
We relegate the long proof to the Appendix. 
Note that a similar line of reasoning
should produce the asymptotic distribution of Hotelling's $T^2$ under 
elliptical returns, which could be compared to form given by Iwashita. \cite{IWASHITA199785}
Also of note is the result of Paulsen and S\"{o}hl \cite{paulsen2016noise}, who
show that for Gaussian returns,
$$
\E{\pSNR{\minv{\svsm} ; \pvsm, 0} - \wrapParens{\ssropt + \frac{1 - \nlatf}{\ssiz \ssropt}}} = 0.
$$

The theorem suggests the following \typeI confidence intervals for 
\pSNR{\minv{\svsm} ; \pvsm, 0}, by plugging in \ssropt for the unknown
quantity \psnropt:
\begin{equation}
\label{eqn:psnr_ci_practical_elliptical}
	\ssropt + 
\frac{ \wrapParens{\kurty \ssrsqopt + 1}\wrapParens{1 - \nlatf} 
	+ c\wrapParens{\frac{\wrapParens{3\kurty-1}}{4}\ssrsqopt  + 1}
	}{2\ssiz\ssropt} \pm Z_{1 - \typeI/2} 
	\sqrt{ \frac{\frac{\wrapParens{3\kurty-1}}{4} \ssrsqopt + 1}{\ssiz} },
\end{equation}
where one can take $c=1$ for the difference formula, and $c=3$ for the ratio
formula.  However, these confidence intervals are \emph{not} well supported
by simulations, in the sense that they require very large \ssiz to 
give near nominal coverage, \cf \secref{normal_sims}.
\subsection{Distribution under matrix normal returns}

Now we consider the case where the (augmented) returns follow a matrix normal distribution.
That is, we suppose that there is a \bby{\ssiz}{\wrapParens{1 + \nlatf}} matrix \pmmu
and symmetric positive semi-definite matrices \pmsigr and \pmsigl,
respectively of size \sbby{\wrapParens{1+\nlatf}} and \sbby{\ssiz} such that
$$
\fvec{\amreti} \sim \normlaw{\fvec{\pmmu},\pmsigr\kron\pmsigl}.
$$
This form allows us to consider deviations from the \iid assumption
by allowing \eg the \pvmu to change over time, autocorrelation in returns
and so on\footnote{We do not consider the elliptical distribution in this case, as it can
impose long term dependence among returns even when they are uncorrelated.}.

We now seek the moments of
$$
\svsm = \oneby{\ssiz}\gram{\amreti}.
$$
\begin{lemma}
\label{lemma:gram_moments}
For matrix normal returns $\fvec{\amreti} \sim \normlaw{\fvec{\pmmu},\pmsigr\kron\pmsigl}$, 
the mean and covariance of the gram are
\begin{equation}
\Eof{\gram{\amreti}} = \gram{\pmmu} + \trace{\pmsigl}\pmsigr,
\end{equation}
and 
\begin{equation}
	\VAR{\fvec{\gram{\amreti}}} = \wrapParens{\eye+\Komm}\wrapBracks{
	\pmsigr\kron\pmsigr\trace{\pmsigl^2}
	+ \wrapParens{\tr{\pmmu}\pmsigl\pmmu} \kron\pmsigr
	+ \pmsigr\kron\wrapParens{\tr{\pmmu}\pmsigl\pmmu}}.
\end{equation}
\end{lemma}

As a check we note this result is consistent with 
\theoremref{theta_covar_elliptical} in the \iid case, which corresponds to
$\pmmu=\vone\tr{\apvmu}$, $\pmsigl=\eye$, $\pmsigr=\apvsig$.

\begin{lemma}
\label{lemma:inv_gram_moments}
Let $\fvec{\amreti} \sim \normlaw{\fvec{\pmmu},\pmsigr\kron\pmsigl}$,
where $\pmsigl$ and $\pmsigr$ are rescaled such that $\trace{\pmsigl}=\ssiz$.
Let
$$
\svsm = \oneby{\ssiz}\gram{\amreti}
$$
Define 
$$
	\pvsm = \lim_{\ssiz\to\infty}\oneby{\ssiz}\gram{\pmmu} + \pmsigr.
$$
Then asymptotically in \ssiz,
\begin{align}
	\sqrt{\ssiz}\wrapParens{\minv{\svsm}-\minv{\pvsm}} 
	&\rightsquigarrow \normlaw{0,\Mtx{B}},
\end{align}
with
{\small
\begin{align*}
	\Mtx{B} &= 2\Simm\wrapBracks{
		\AkronA{\wrapParens{\minv{\pvsm}\pmsigr\minv{\pvsm}}} \trace{\pmsigl^2}}\\
	&\phantom{=}\, 
	+ 2\Simm\wrapBracks{
		\wrapParens{\minv{\pvsm}\tr{\pmmu}\pmsigl\pmmu\minv{\pvsm}} \kron \wrapParens{\minv{\pvsm}\pmsigr\minv{\pvsm}}
	  + \wrapParens{\minv{\pvsm}\pmsigr\minv{\pvsm}}\kron\wrapParens{\minv{\pvsm}\tr{\pmmu}\pmsigl\pmmu\minv{\pvsm}}
		}.
\end{align*}%
}%

In particular, letting $\psnrsqopt = \trbasev[1]\minv{\pvsm}\basev[1] - 1$, and 
$\ssrsqopt=\trbasev[1]\minv{\svsm}\basev[1] - 1$, then
\begin{align}
	\sqrt{\ssiz}\wrapParens{\ssrsqopt - \psnrsqopt} 
	&\rightsquigarrow \normlaw{0,b^2},
\end{align}
for $b^2=\wrapParens{\AkronA{\trbasev[1]}}\Mtx{B}\wrapParens{\AkronA{\basev[1]}}.$
\end{lemma}

The proof follows along the lines of \corollaryref{invtheta_bits_asym_var_elliptical_explicit} and is omitted.
This corollary could be useful in finding the asymptotic distribution of \ssrsqopt (and Hotellings \Tstat{})
under certain divergences from \iid normality. 
For example, one could impose a general autocorrelation by making $\pmsigl[i,j] = \rho^{\abs{i-j}}$ for some $\rho \in \oointerval{-1}{1}$.
One could impose heteroskedasticity structure where $\pmmu=\vect{l}\tr{\pvmu}$ and $\pmsigl=\diag{\vect{l}^\lambda}$ for some scalar lambda.
It has been shown that the \txtSR is relatively robust to deviations from assumptions under similar configurations. \cite{pav_ssc}


\subsection{Likelihood ratio test on \txtMP}

\providecommand{\lrtA}[1][i]{\mathSUB{\Mtx{A}}{#1}}
\providecommand{\lrta}[1][i]{\mathSUB{a}{#1}}


Let us again consider \reti to take a Gaussian distribution, rather than a
general elliptical distribution.
Consider the null hypothesis
\begin{equation}
\Hyp[0]: \trace{\lrtA[i]\minv{\pvsm}} = \lrta[i],\,i=1,\ldots,m.
\label{eqn:lrt_null_back}
\end{equation}
The constraints have to be sensible. 
For example, they cannot
violate the positive definiteness of 
\minv{\pvsm}, symmetry, \etc 
Without loss of generality, we can assume
that the \lrtA[i] are symmetric, since \pvsm is symmetric, and for
symmetric \Mtx{G} and square \Mtx{H}, 
$\trace{\Mtx{G}\Mtx{H}} = \trace{\Mtx{G}\half\wrapParens{\Mtx{H} +
\tr{\Mtx{H}}}}$, and so we could replace any non-symmetric \lrtA[i] with
$\half\wrapParens{\lrtA[i] + \tr{\lrtA[i]}}$.

Employing the Lagrange multiplier technique, the maximum likelihood
estimator under the null hypothesis, call it \pvsm[0], solves the
following equation
\begin{equation}
\begin{split}
0 &= \drbydr{\log\FOOpdf{}{\svsm}{\pvsm}}{\minv{\pvsm}}
- \sum_i \lambda_i
\drbydr{\trace{\lrtA[i]\minv{\pvsm}}}{\minv{\pvsm}},\notag \\
&= - \pvsm[0] + \svsm - \sum_i \lambda_i \lrtA[i],\notag.
\end{split}
\end{equation}
Thus the MLE under the null is
\begin{equation}
\pvsm[0] = \svsm - \sum_i \lambda_i \lrtA[i].
\label{eqn:lrt_mle_soln}
\end{equation}
The maximum likelihood estimator under the constraints has to be
found numerically by solving for the $\lambda_i$, subject
to the constraints in \eqnref{lrt_null_back}.

This framework slightly generalizes Dempster's 
``Covariance Selection,'' \cite{dempster1972} which
reduces to the case where each \lrta[i] is zero, and
each \lrtA[i] is a matrix of all zeros except two (symmetric) ones 
somewhere in the lower right \sbby{\nlatf} sub-matrix. In
all other respects, however, the solution here follows Dempster.

\providecommand{\vitrlam}[2]{\vectUL{\lambda}{\wrapNeParens{#1}}{#2}}
\providecommand{\sitrlam}[2]{\mathUL{\lambda}{\wrapNeParens{#1}}{#2}}
\providecommand{\vitrerr}[2]{\vectUL{\epsilon}{\wrapNeParens{#1}}{#2}}
\providecommand{\sitrerr}[2]{\mathUL{\epsilon}{\wrapNeParens{#1}}{#2}}

An iterative technique for finding the MLE based on a Newton step 
would proceed as follow.  \cite{nocedal2006numerical} 
Let \vitrlam{0}{} be some initial estimate of the
vector of $\lambda_i$. (A good initial estimate can likely be had 
by abusing the asymptotic normality 
result from \subsecref{dist_markoport}.)
The residual of the \kth{k} estimate, \vitrlam{k}{} is
\begin{equation}
\vitrerr{k}{i} \defeq 
\trace{\lrtA[i]\minv{\wrapBracks{\svsm - \sum_j \sitrlam{k}{j} \lrtA[j]}}} - \lrta[i].
\end{equation}
The Jacobian of this residual with respect to the \kth{l} element of \vitrlam{k}
is
\begin{equation}
\begin{split}
\drbydr{\vitrerr{k}{i}}{\sitrlam{k}{l}} &= 
\trace{\lrtA[i]\minv{\wrapBracks{\svsm - \sum_j \sitrlam{k}{j} \lrtA[j]}}
\lrtA[l] \minv{\wrapBracks{\svsm - \sum_j \sitrlam{k}{j} \lrtA[j]}}},\\
&= \tr{\fvec{\lrtA[i]}} \wrapParens{\AkronA{\minv{\wrapBracks{\svsm - \sum_j
\sitrlam{k}{j}\lrtA[j]}}}} \fvec{\lrtA[l]}.
\end{split}
\end{equation}

Newton's method is then the iterative scheme
\begin{equation}
\vitrlam{k+1}{} \leftarrow \vitrlam{k}{} -
\minvParens{\drbydr{\vitrerr{k}{}}{\vitrlam{k}{}}} \vitrerr{k}.
\end{equation}

When (if?) the iterative scheme converges on the optimum, plugging in
\vitrlam{k}{} into \eqnref{lrt_mle_soln} gives the MLE under the null.
The likelihood ratio test statistic is 
\begin{equation}
\begin{split}
-2\log\Lambda &\defeq
-2\log\wrapParens{\frac{\FOOlik{}{\svsm}{\pvsm[0]}}{\FOOlik{}{\svsm}{\pvsm[\mbox{unrestricted }\txtMLE]}}},\\
&= \ssiz\wrapParens{\logdet{\pvsm[0]\minv{\svsm}} + 
\trace{\wrapBracks{\minv{\pvsm[0]} - \minv{\svsm}}\svsm}},\\
&= \ssiz\wrapParens{\logdet{\pvsm[0]\minv{\svsm}} + 
\trace{\minv{\pvsm[0]}\svsm} - \wrapBracks{\nlatf + 1}},
\end{split}
\label{eqn:wilks_lambda_def}
\end{equation}
using the fact that \svsm is the unrestricted MLE, per 
\corollaryref{theta_mle}.
By Wilks' Theorem, under the null 
hypothesis, $-2\log\Lambda$ is, asymptotically in \ssiz, distributed as 
a chi-square with $m$ degrees of freedom. \cite{wilkstheorem1938}
However, most `interesting' null tests posit \pvsm to be somewhere
on the boundary of acceptable values; for such tests, asymptotic
convergence is to some other distribution. \cite{andrews2001testing}


\section{Extensions}

\label{sec:extensions}

For large samples, Wald statistics of the elements of the Markowitz
portfolio computed using the procedure outlined above tend to be 
very similar to the t-statistics produced by the procedure of 
Britten-Jones. \cite{BrittenJones1999} 
However, the technique proposed here admits a number of interesting
extensions.

The script for each of these extensions is the same: 
define, then solve, some portfolio optimization problem; 
show that the solution can be defined in terms of some transformation 
of $\minv{\pvsm}$, giving an implicit recipe for constructing the 
sample portfolio based on the same transformation of $\minv{\svsm}$; 
find the asymptotic distribution of the sample portfolio in terms of \pvvar.

\subsection{Subspace Constraint} 
\label{subsec:subspace_constraint}

Consider the \emph{constrained} portfolio optimization problem
\begin{equation}
\max_{\substack{\pportw : \zerJc \pportw = \vzero,\\
\qform{\pvsig}{\pportw} \le \Rbuj^2}}
\frac{\trAB{\pportw}{\pvmu} - \rfr}{\sqrt{\qform{\pvsig}{\pportw}}},
\label{eqn:portopt_zer}
\end{equation}
where $\zerJc$ is a \bby{\wrapParens{\nlatf - \nlatfzer}}{\nlatf}
matrix of rank $\nlatf - \nlatfzer$, 
\rfr is the disastrous rate, and $\Rbuj > 0$ is the
risk budget. Let the rows of \zerJ span the null space of the rows of
\zerJc; that is, $\zerJc \tr{\zerJ} = \mzero$, and $\ogram{\zerJ} = \eye$.
We can interpret the orthogonality constraint $\zerJc \pportw = \vzero$ as
stating that \pportw must be a linear combination of the columns of
\tr{\zerJ}, thus $\pportw = \trAB{\zerJ}{\pportx}$. The columns of
\tr{\zerJ} may be considered `baskets' of assets to which our investments
are restricted.

We can rewrite the portfolio optimization problem in terms of solving
for \pportx, but then find the asymptotic distribution of the resultant
\pportw. Note that the expected return and covariance of the portfolio
\pportx are, respectively, \trAB{\pportx}{\zerJ\pvmu} and 
\qform{\qoform{\pvsig}{\zerJ}}{\pportx}. Thus we can plug in
$\zerJ\pvmu$ and \qoform{\pvsig}{\zerJ} into 
\lemmaref{sr_optimal_portfolio} to get the following analogue.

\begin{lemma}[subspace constrained \txtSR optimal portfolio]
\label{lemma:subsp_cons_sr_optimal_portfolio}
Assuming the rows of \zerJ span the null space of the rows of \zerJc,
$\zerJ\pvmu \ne \vzero$, and \pvsig is invertible,
the portfolio optimization problem
\begin{equation}
\max_{\substack{\pportw : \zerJc \pportw = \vzero,\\
\qform{\pvsig}{\pportw} \le \Rbuj^2}} 
\frac{\trAB{\pportw}{\pvmu} - \rfr}{\sqrt{\qform{\pvsig}{\pportw}}},
\label{eqn:portopt_zer_I}
\end{equation}
for $\rfr \ge 0, \Rbuj > 0$ is solved by
\begin{equation*}
\begin{split}
\pportwoptFoo{\Rbuj,\zerJ,} 
 &\defeq c \wrapProj{\pvsig}{\zerJ}\pvmu,\\
 c &= \frac{\Rbuj}{\sqrt{\qform{\wrapProj{\pvsig}{\zerJ}}{\pvmu}}}.
\end{split}
\end{equation*}
When $\rfr > 0$ the solution is unique.
\end{lemma}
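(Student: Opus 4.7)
The plan is to reduce the constrained problem to the unconstrained one already solved in \lemmaref{sr_optimal_portfolio} via the change of variables suggested in the preceding paragraph. Since the rows of \zerJ span the null space of the rows of \zerJc, the feasible set $\{\pportw : \zerJc\pportw = \vzero\}$ is exactly the column space of \tr{\zerJ}, so every feasible \pportw can be written uniquely as $\pportw = \trAB{\zerJ}{\pportx}$ for some \pportx in $\mathbb{R}^{\nlatf - \nlatfzer}$; the condition $\ogram{\zerJ} = \eye$ (rows of \zerJ orthonormal) guarantees this parametrization is one-to-one.

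First I would substitute $\pportw = \trAB{\zerJ}{\pportx}$ into the objective and constraint of \eqnref{portopt_zer_I}. The numerator becomes $\trAB{\pportx}{\wrapParens{\zerJ\pvmu}} - \rfr$, and the denominator becomes $\sqrt{\qform{\wrapParens{\zerJ\pvsig\tr{\zerJ}}}{\pportx}}$, while the risk budget constraint $\qform{\pvsig}{\pportw} \le \Rbuj^2$ becomes $\qform{\wrapParens{\zerJ\pvsig\tr{\zerJ}}}{\pportx} \le \Rbuj^2$. Thus the constrained problem in \pportw is equivalent to the \emph{unconstrained} \txtSR maximization problem in \pportx, with ``mean vector'' $\zerJ\pvmu$ and ``covariance matrix'' $\zerJ\pvsig\tr{\zerJ}$.

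Next I would verify the hypotheses of \lemmaref{sr_optimal_portfolio} for this reduced problem: the assumption $\zerJ\pvmu \ne \vzero$ supplies the analogue of $\pvmu \ne \vzero$, and invertibility of $\zerJ\pvsig\tr{\zerJ}$ follows from \pvsig being positive definite together with \zerJ having full row rank (equivalently, from $\ogram{\zerJ} = \eye$). Applying \lemmaref{sr_optimal_portfolio} then yields the unique maximizer
\begin{equation*}
\pportx^\star = \frac{\Rbuj}{\sqrt{\qiform{\wrapParens{\zerJ\pvsig\tr{\zerJ}}}{\wrapParens{\zerJ\pvmu}}}}
\minv{\wrapParens{\zerJ\pvsig\tr{\zerJ}}}\zerJ\pvmu,
\end{equation*}
with uniqueness whenever $\rfr > 0$.

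Finally, I would map back via $\pportw^\star = \trAB{\zerJ}{\pportx^\star}$ and identify the resulting expression with the stated answer by recognizing that $\wrapProj{\pvsig}{\zerJ}$ denotes the (oblique) operator $\tr{\zerJ}\minv{\wrapParens{\zerJ\pvsig\tr{\zerJ}}}\zerJ$, so that $\qform{\wrapProj{\pvsig}{\zerJ}}{\pvmu} = \qiform{\wrapParens{\zerJ\pvsig\tr{\zerJ}}}{\wrapParens{\zerJ\pvmu}}$. Nothing here is deep; the only point that requires care is pinning down the notational conventions for the projection operator and confirming that the bijection $\pportx \leftrightarrow \tr{\zerJ}\pportx$ really does turn the constrained problem into a clean instance of \lemmaref{sr_optimal_portfolio}, which is the main (and only) obstacle.
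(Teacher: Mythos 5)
Your proposal is correct and follows essentially the same route as the paper, which likewise reparametrizes the feasible set as $\pportw = \tr{\zerJ}\pportx$, observes that the reduced problem has mean $\zerJ\pvmu$ and covariance $\qoform{\pvsig}{\zerJ}$, and plugs these into \lemmaref{sr_optimal_portfolio} (the paper gives no further proof beyond this observation). The only quibble is a dimension slip: since \zerJc has rank $\nlatf-\nlatfzer$, its null space --- and hence the row space of \zerJ --- has dimension \nlatfzer, so \pportx lives in $\mathbb{R}^{\nlatfzer}$ rather than $\mathbb{R}^{\nlatf-\nlatfzer}$; this does not affect the argument.
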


We can easily find the asymptotic distribution of
\sportwoptFoo{\Rbuj,\zerJ,}, the sample analogue of the optimal portfolio
in \lemmaref{subsp_cons_sr_optimal_portfolio}. First define the
subspace second moment.

\begin{definition}
\label{definition:subspace_second_moment}
Let \zerJt be the \bby{\wrapParens{1+\nlatfzer}}{\wrapParens{\nlatf+1}}
matrix,
$$
\zerJt \defeq \twobytwossym{1}{0}{\zerJ}.
$$
\end{definition}

Simple algebra proves the following lemma.

\begin{lemma}
The elements of $\wrapProj{\pvsm}{\zerJt}$
are 
\begin{equation*}
\wrapProj{\pvsm}{\zerJt}
=
\twobytwo{ 1 + \qform{\wrapProj{\pvsig}{\zerJ}}{\pvmu} }{
-\tr{\pvmu}\wrapProj{\pvsig}{\zerJ}}{
-\wrapProj{\pvsig}{\zerJ}\pvmu}{
\wrapProj{\pvsig}{\zerJ}}.
\end{equation*}
In particular, elements $2$ through $\nlatf+1$ of 
$-\fvech{\wrapProj{\pvsm}{\zerJt}}$ are the portfolio
$\sportwoptFoo{\Rbuj,\zerJ,}$
defined in \lemmaref{subsp_cons_sr_optimal_portfolio}, up to the scaling
constant $c$ which is the ratio of \Rbuj to the square root of the first
element of $\fvech{\wrapProj{\pvsm}{\zerJt}}$ minus one.
\end{lemma}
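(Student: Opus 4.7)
The plan is to observe that $\zerJt\pvsm\tr{\zerJt}$ is itself an augmented second moment matrix, and then push $\tr{\zerJt}$ and $\zerJt$ through the trick inversion formula in \eqnref{trick_inversion}. Concretely, the proof has three short steps.

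First, I will carry out the block multiplication. Writing $\pvsm$ and $\zerJt$ in block form from their definitions and multiplying gives
\begin{equation*}
\zerJt \pvsm \tr{\zerJt} = \twobytwo{1}{\tr{(\zerJ\pvmu)}}{\zerJ\pvmu}{\zerJ\pvsig\tr{\zerJ} + \ogram{(\zerJ\pvmu)}},
\end{equation*}
which one recognizes as the augmented second moment matrix for the derived $(1+\nlatfzer)$-vector $\asvec{1,\tr{(\zerJ\vreti)}}$, whose mean and covariance are $\zerJ\pvmu$ and $\zerJ\pvsig\tr{\zerJ}$, respectively. Invertibility of this block is guaranteed because $\zerJ$ has full row rank and \pvsig is invertible, so $\zerJ\pvsig\tr{\zerJ}$ is nonsingular.

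Second, I invoke \eqnref{trick_inversion} applied to this derived augmented second moment, which immediately yields
\begin{equation*}
\minv{\wrapParens{\zerJt\pvsm\tr{\zerJt}}}
= \twobytwo{1 + \qiform{(\zerJ\pvsig\tr{\zerJ})}{(\zerJ\pvmu)}}{-\tr{\pvmu}\tr{\zerJ}\minv{(\zerJ\pvsig\tr{\zerJ})}}{-\minv{(\zerJ\pvsig\tr{\zerJ})}\zerJ\pvmu}{\minv{(\zerJ\pvsig\tr{\zerJ})}}.
\end{equation*}
Then I sandwich this between $\tr{\zerJt}$ and $\zerJt$, again by block multiplication. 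The ``1'' in the upper left of $\zerJt$ leaves the top-left scalar entry alone, while the $\zerJ$ block maps $\minv{(\zerJ\pvsig\tr{\zerJ})}$ to $\tr{\zerJ}\minv{(\zerJ\pvsig\tr{\zerJ})}\zerJ$, which is precisely $\wrapProj{\pvsig}{\zerJ}$, and similarly collapses the off-diagonal blocks. This gives the claimed form of $\wrapProj{\pvsm}{\zerJt}$, and identifies $\qiform{(\zerJ\pvsig\tr{\zerJ})}{(\zerJ\pvmu)} = \qform{\wrapProj{\pvsig}{\zerJ}}{\pvmu}$ in the top-left entry.

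Third, for the portfolio identification, I just read off the \fvech: elements $2$ through $\nlatf+1$ of $\fvech{\wrapProj{\pvsm}{\zerJt}}$ are the below-diagonal entries of the first column, i.e.\ $-\wrapProj{\pvsig}{\zerJ}\pvmu$, so negating and rescaling by $\Rbuj$ over the square root of (the $(1,1)$ entry minus one) reproduces the portfolio and constant $c$ from \lemmaref{subsp_cons_sr_optimal_portfolio}. There is no real obstacle here; the only thing to be careful about is the bookkeeping around $\ogram{\zerJ} = \eye$ (which is what makes the top-left block of $\zerJt\pvsm\tr{\zerJt}$ land on a ``1'' rather than on $\zerJ\tr{\zerJ}$), and the verification that $\qiform{(\zerJ\pvsig\tr{\zerJ})}{(\zerJ\pvmu)}$ equals $\qform{\wrapProj{\pvsig}{\zerJ}}{\pvmu}$, both of which are immediate.
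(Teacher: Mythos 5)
Your proof is correct and is essentially the computation the paper waves at with ``Simple algebra proves the following lemma'': recognizing $\zerJt\pvsm\tr{\zerJt}$ as the augmented second moment of the transformed returns $\zerJ\vreti$, reusing \eqnref{trick_inversion}, and sandwiching with $\tr{\zerJt}$ and $\zerJt$ is exactly the right bookkeeping. One small correction to your closing parenthetical: the top-left entry of $\zerJt\pvsm\tr{\zerJt}$ equals $1$ simply because the $(1,1)$ block of \zerJt is the scalar $1$, not because $\ogram{\zerJ}=\eye$; the orthonormality of the rows of \zerJ plays no role anywhere in this lemma.
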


The asymptotic distribution of $\fvech{\wrapProj{\pvsm}{\zerJt}}$
is given by the following theorem, which is the analogue 
of \theoremref{inv_distribution}.

\begin{theorem}
\label{theorem:subzer_inv_distribution}
Let \svsm be the unbiased sample estimate of 
\pvsm, based on \ssiz \iid samples of \vreti. 
Let \zerJt 
be defined as in \definitionref{subspace_second_moment}.
Let \pvvar be the variance of $\fvech{\ogram{\avreti}}$.  
Then, asymptotically in \ssiz, 
\begin{equation}
\sqrt{\ssiz}\wrapParens{\fvech{\wrapProj{\svsm}{\zerJt}} -
\fvech{\wrapProj{\pvsm}{\zerJt}}}
\rightsquigarrow 
\normlaw{0,\qoform{\pvvar}{\Mtx{H}}},
\label{eqn:mvclt_zer_isvsm}
\end{equation}
where
\begin{equation*}
\Mtx{H} = - \EXD{\wrapParens{\AkronA{\tr{\zerJt}}} 
\wrapParens{\AkronA{\minvParens{\qoform{\pvsm}{\zerJt}}}} 
\wrapParens{\AkronA{\zerJt}}}.
\end{equation*}
\end{theorem}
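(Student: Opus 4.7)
The plan is to mirror the proof of Theorem \ref{theorem:inv_distribution} exactly, just with a more elaborate delta-method Jacobian. The starting point is the same multivariate CLT, $\sqrt{\ssiz}\wrapParens{\fvech{\svsm} - \fvech{\pvsm}} \rightsquigarrow \normlaw{0,\pvvar}$, and the only real task is to compute
$$
\dbyd{\fvech{\wrapProj{\pvsm}{\zerJt}}}{\fvech{\pvsm}},
$$
where $\wrapProj{\pvsm}{\zerJt} = \tr{\zerJt}\minvParens{\qoform{\pvsm}{\zerJt}}\zerJt$ (read off the preceding lemma). The claimed asymptotic covariance $\qoform{\pvvar}{\Mtx{H}}$ then falls out of the delta method.

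To compute the Jacobian, I would factor $\pvsm \mapsto \wrapProj{\pvsm}{\zerJt}$ as a composition of three maps and chain them: first $\pvsm \mapsto \Mtx{Y} \defeq \zerJt\pvsm\tr{\zerJt}$, then $\Mtx{Y} \mapsto \Mtx{Z} \defeq \minv{\Mtx{Y}}$, and finally $\Mtx{Z} \mapsto \tr{\zerJt}\Mtx{Z}\zerJt$. Using the standard identity $\fvec{\Mtx{A}\Mtx{X}\Mtx{B}} = \wrapParens{\tr{\Mtx{B}}\kron\Mtx{A}}\fvec{\Mtx{X}}$, the outer two maps contribute derivatives $\AkronA{\zerJt}$ and $\AkronA{\tr{\zerJt}}$ respectively (in numerator layout). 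The middle map contributes $-\AkronA{\minv{\Mtx{Y}}} = -\AkronA{\minvParens{\qoform{\pvsm}{\zerJt}}}$ by \eqnref{deriv_vec_matrix_inverse} of \lemmaref{deriv_vech_matrix_inverse}. Multiplying the three factors in the right order gives the expression inside the $\EXD{\cdot}$ operator in the theorem statement.

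The last step is to convert from $\fvec$- to $\fvech$-level derivatives. Because $\wrapProj{\pvsm}{\zerJt}$ is symmetric and \pvsm is symmetric, \lemmaref{misc_derivs} lets me sandwich the $\fvec$-level Jacobian on the left by \Elim and on the right by \Dupp, which is exactly the $\EXD{\cdot}$ operator, producing the stated $\Mtx{H}$. Then the delta method \cite{wasserman2004all} yields \eqnref{mvclt_zer_isvsm}.

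I do not expect any significant obstacle; this is really a bookkeeping exercise in the chain rule and Kronecker algebra. The one spot that requires care is being consistent about the numerator-layout convention and the ordering of factors in Kronecker products (so that $\AkronA{\tr{\zerJt}}$ ends up on the left and $\AkronA{\zerJt}$ on the right of the inverse-derivative factor), and making sure the outer $\Elim,\Dupp$ wrap correctly so that no nontrivial contribution is lost by restricting to the non-redundant half-vectorization. Everything beyond that is a direct repackaging of the argument in \theoremref{inv_distribution}.
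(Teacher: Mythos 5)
Your proposal is correct and follows essentially the same route as the paper's own proof: the same three-map factorization $\pvsm \mapsto \qoform{\pvsm}{\zerJt} \mapsto \minvParens{\qoform{\pvsm}{\zerJt}} \mapsto \wrapProj{\pvsm}{\zerJt}$, chained via the identity $\fvec{\Mtx{A}\Mtx{B}\Mtx{C}} = \wrapParens{\tr{\Mtx{C}}\kron\Mtx{A}}\fvec{\Mtx{B}}$ and \lemmaref{deriv_vech_matrix_inverse}, followed by the \Elim/\Dupp sandwich from \lemmaref{misc_derivs} and the delta method. No gaps.
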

\begin{proof}
By the multivariate delta method, it suffices to prove that
$$
\Mtx{H} = \dbyd{\fvech{\wrapProj{\svsm}{\zerJt}}}{\fvech{\pvsm}}.
$$
Via \lemmaref{misc_derivs}, it suffices to prove that
$$
\dbyd{\wrapProj{\pvsm}{\zerJt}}{\pvsm}
= 
- \wrapParens{\AkronA{\tr{\zerJt}}} \wrapParens{\AkronA{\minvParens{\qoform{\pvsm}{\zerJt}}}}
\wrapParens{\AkronA{\zerJt}}.
$$

A well-known fact regarding matrix manipulation \cite{magnus1999matrix} is
$$
\fvec{\Mtx{A}\Mtx{B}\Mtx{C}} = \wrapParens{\tr{\Mtx{C}}\kron\Mtx{A}}
\fvec{\Mtx{B}},\quad\mbox{therefore,}\quad
\dbyd{\Mtx{A}\Mtx{B}\Mtx{C}}{\Mtx{B}} = \tr{\Mtx{C}}\kron\Mtx{A}.
$$
Using this, and the chain rule, we have:
\begin{equation*}
\begin{split}
\dbyd{\wrapProj{\pvsm}{\zerJt}}{\pvsm}
&=
\dbyd{\wrapProj{\pvsm}{\zerJt}}{\minvParens{\qoform{\pvsm}{\zerJt}}} 
\dbyd{\minvParens{\qoform{\pvsm}{\zerJt}}}{\qoform{\pvsm}{\zerJt}}
\dbyd{\qoform{\pvsm}{\zerJt}}{\pvsm}\\
&= 
\wrapParens{\AkronA{\tr{\zerJt}}} 
\dbyd{\minvParens{\qoform{\pvsm}{\zerJt}}}{\qoform{\pvsm}{\zerJt}}
\wrapParens{\AkronA{\zerJt}}.
\end{split}
\end{equation*}
\lemmaref{deriv_vech_matrix_inverse} gives the middle term, completing the
proof.
\end{proof}
An analogue of \corollaryref{portwoptR_dist} gives the asymptotic 
distribution of $\pportwoptFoo{\Rbuj,\zerJ,}$
defined in \lemmaref{subsp_cons_sr_optimal_portfolio}.

\subsection{Hedging Constraint} 
\label{subsec:hedging_constraint}

Consider, now, the constrained portfolio optimization problem,
\begin{equation}
\max_{\substack{\pportw : \hejG\pvsig \pportw = \vzero,\\
\qform{\pvsig}{\pportw} \le \Rbuj^2}}
\frac{\trAB{\pportw}{\pvmu} - \rfr}{\sqrt{\qform{\pvsig}{\pportw}}},
\label{eqn:portopt_hej}
\end{equation}
where $\hejG$ is now a \bby{\nlatfhej}{\nlatf} matrix of 
rank \nlatfhej.
We can interpret the \hejG constraint as stating that the covariance of 
the returns of a feasible portfolio with the returns of a portfolio 
whose weights are in a given row of \hejG shall equal zero.
In the garden variety application of this problem, \hejG consists of 
\nlatfhej rows of the identity matrix;
in this case, feasible portfolios are `hedged' with respect 
to the \nlatfhej assets selected by \hejG
(although they may hold some position in the hedged assets).

\begin{lemma}[constrained \txtSR optimal portfolio]
\label{lemma:cons_sr_optimal_portfolio}
Assuming $\pvmu \ne \vzero$, and \pvsig is invertible,
the portfolio optimization problem
\begin{equation}
\max_{\substack{\pportw : \hejG\pvsig \pportw = \vzero,\\ 
\qform{\pvsig}{\pportw} \le \Rbuj^2}} 
\frac{\trAB{\pportw}{\pvmu} - \rfr}{\sqrt{\qform{\pvsig}{\pportw}}},
\label{eqn:cons_port_prob}
\end{equation}
for $\rfr \ge 0, \Rbuj > 0$ is solved by
\begin{equation*}
\begin{split}
\pportwoptFoo{\Rbuj,\hejG,} &\defeq c \wrapParens{\minv{\pvsig}{\pvmu} -
	\wrapProj{\pvsig}{\hejG}\pvmu},\\
 c &= \frac{\Rbuj}{\sqrt{\qiform{\pvsig}{\pvmu} - 
	\qform{\wrapProj{\pvsig}{\hejG}}{\pvmu}}}.
\end{split}
\end{equation*}
When $\rfr > 0$ the solution is unique.
\end{lemma}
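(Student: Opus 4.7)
The plan is to mirror the Lagrange--multiplier proof of \lemmaref{sr_optimal_portfolio}, now introducing an \nlatfhej-vector $\vect{\eta}$ of multipliers for the hedging constraints $\hejG\pvsig\pportw = \vzero$ alongside a scalar $\gamma \ge 0$ for the budget. Stationarity in \pportw reads
\begin{equation*}
c_1 \pvmu - (c_2 + \gamma)\pvsig\pportw - \pvsig\tr{\hejG}\vect{\eta} = \vzero,
\end{equation*}
for scalar constants $c_1, c_2$. Since \pvsig is positive definite and \hejG has full row rank \nlatfhej, the matrix $\hejG\pvsig\tr{\hejG}$ is invertible; left-multiplying by \hejG and enforcing $\hejG\pvsig\pportw = \vzero$ determines $\vect{\eta}$ (up to the overall scale) as $\minvParens{\hejG\pvsig\tr{\hejG}}\hejG\pvmu$. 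Substituting back yields
\begin{equation*}
\pportw = c\wrapBracks{\minvAB{\pvsig}{\pvmu} - \tr{\hejG}\minvParens{\hejG\pvsig\tr{\hejG}}\hejG\pvmu}
= c\wrapBracks{\minvAB{\pvsig}{\pvmu} - \wrapProj{\pvsig}{\hejG}\pvmu},
\end{equation*}
which matches the lemma once one identifies $\wrapProj{\pvsig}{\hejG}$ with the oblique projector $\tr{\hejG}\minvParens{\hejG\pvsig\tr{\hejG}}\hejG$, exactly as in \subsecref{subspace_constraint}.

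The crux of the reduction is the one-line idempotence identity
\begin{equation*}
\wrapProj{\pvsig}{\hejG}\,\pvsig\,\wrapProj{\pvsig}{\hejG} = \wrapProj{\pvsig}{\hejG},
\end{equation*}
which collapses the cross term in $\qform{\pvsig}{\pportw}$. Using it, a short computation gives $\trAB{\pportw}{\pvmu} = c\,\tilde\psi^2$ and $\qform{\pvsig}{\pportw} = c^2\,\tilde\psi^2$, where
\begin{equation*}
\tilde\psi^2 \defeq \qiform{\pvsig}{\pvmu} - \qform{\wrapProj{\pvsig}{\hejG}}{\pvmu}.
\end{equation*}
Nonnegativity of $\tilde\psi^2$ is automatic since, by the same idempotence, $\tilde\psi^2 = \qform{\pvsig}{(\minvAB{\pvsig}{\pvmu} - \wrapProj{\pvsig}{\hejG}\pvmu)}$.

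With this, the optimization in \eqnref{cons_port_prob} collapses to the univariate problem
\begin{equation*}
\max_{c :\, c^2\tilde\psi^2 \le \Rbuj^2}\; \sign{c}\,\tilde\psi - \frac{\rfr}{\abs{c}\,\tilde\psi},
\end{equation*}
which is precisely the scalar problem already solved at the end of the proof of \lemmaref{sr_optimal_portfolio}. Its optimum is attained at $c = \Rbuj/\tilde\psi$, uniquely when $\rfr > 0$, giving the closed form in the statement. The main obstacle is the \pvsig-idempotence identity, which is what allows the constrained problem to reduce back to the unconstrained scalar one; everything else is a direct transcription of the proof of \lemmaref{sr_optimal_portfolio}, after verifying the rank condition that makes $\hejG\pvsig\tr{\hejG}$ invertible.
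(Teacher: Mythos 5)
Your proposal is correct and follows essentially the same route as the paper: Lagrange stationarity, elimination of the hedging multipliers via invertibility of $\hejG\pvsig\tr{\hejG}$, and reduction to the same univariate problem solved in \lemmaref{sr_optimal_portfolio}. The only difference is cosmetic — you spell out the $\pvsig$-idempotence of $\wrapProj{\pvsig}{\hejG}$ to justify the collapse of the objective, a step the paper's proof leaves implicit under ``plugging this into the objective.''
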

\begin{proof}
By the Lagrange multiplier technique, the optimal portfolio
solves the following equations:
\begin{equation*}
\begin{split}
0 &= c_1 \pvmu - c_2 \pvsig \pportw - \gamma_1 \pvsig \pportw -
\pvsig\trAB{\hejG}{\vect{\gamma_2}},\\
\qform{\pvsig}{\pportw} &\le \Rbuj^2,\\
\hejG\pvsig \pportw &= \vzero,
\end{split}
\end{equation*}
where $\gamma_i$ are Lagrange multipliers, and $c_1, c_2$ are 
scalar constants.

Solving the first equation gives
$$
\pportw = c_3\wrapBracks{\minvAB{\pvsig}{\pvmu} -
\trAB{\hejG}{\vect{\gamma_2}}}.
$$
Reconciling this with the hedging equation we have
$$
\vzero 
= \hejG\pvsig \pportw 
= c_3 \hejG\pvsig \wrapBracks{\minvAB{\pvsig}{\pvmu} -
\trAB{\hejG}{\vect{\gamma_2}}},
$$ 
and therefore 
$\vect{\gamma_2} = \minvAB{\wrapParens{\qoform{\pvsig}{\hejG}}}{\hejG}\pvmu.$
Thus
$$
\pportw = c_3\wrapBracks{\minvAB{\pvsig}{\pvmu} -
\wrapProj{\pvsig}{\hejG}\pvmu}.
$$
Plugging this into the objective reduces the problem to the
univariate optimization
\begin{equation*}
\max_{c_3 :\, c_3^2 \le \fracc{\Rbuj^2}{\psnrsqoptG{\hejG}}} 
\sign{c_3} \psnroptG{\hejG} - \frac{\rfr}{\abs{c_3}\psnroptG{\hejG}},
\end{equation*}
where $\psnrsqoptG{\hejG} = \qiform{\pvsig}{\pvmu} - 
  \qform{\wrapProj{\pvsig}{\hejG}}{\pvmu}.$
The optimum
occurs for $c = \fracc{\Rbuj}{\psnroptG{\hejG}}$, moreover the optimum
is unique when $\rfr > 0$.
\end{proof}

The optimal hedged
portfolio in \lemmaref{cons_sr_optimal_portfolio} is, up to scaling,
the difference of the unconstrained optimal portfolio
from \lemmaref{sr_optimal_portfolio} and the subspace constrained
portfolio in \lemmaref{subsp_cons_sr_optimal_portfolio}. This `delta'
analogy continues for the rest of this section.

\begin{definition}[Delta Inverse Second Moment]
\label{definition:delta_inv_second_moment}
Let \hejGt be the \bby{\wrapParens{1+\nlatfhej}}{\wrapParens{\nlatf+1}}
matrix,
$$
\hejGt \defeq \twobytwossym{1}{0}{\hejG}.
$$
Define the `delta inverse second moment' as
$$
\Delhej\minv{\pvsm} \defeq \minv{\pvsm} -
\wrapProj{\pvsm}{\hejGt}.$$
\end{definition}

Simple algebra proves the following lemma.

\begin{lemma}
The elements of $\Delhej\minv{\pvsm}$
are 
\begin{equation*}
\Delhej\minv{\pvsm} =
\twobytwo{ \qiform{\pvsig}{\pvmu} - \qform{\wrapProj{\pvsig}{\hejG}}{\pvmu} }{
-\tr{\pvmu}\minv{\pvsig} + \tr{\pvmu}\wrapProj{\pvsig}{\hejG}}{
-\minv{\pvsig}\pvmu + \wrapProj{\pvsig}{\hejG}\pvmu}{
\minv{\pvsig} - \wrapProj{\pvsig}{\hejG}}.
\end{equation*}
In particular, elements $2$ through $\nlatf+1$ of 
$-\fvech{\Delhej\minv{\pvsm}}$ are the portfolio $\pportwoptFoo{\Rbuj,\hejG,}$
defined in \lemmaref{cons_sr_optimal_portfolio}, up to the scaling
constant $c$ which is the ratio of \Rbuj to the square root of the first
element of \fvech{\Delhej\minv{\pvsm}}.
\end{lemma}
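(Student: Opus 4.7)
My plan is to compute the two summands of $\Delhej\minv{\pvsm} = \minv{\pvsm} - \wrapProj{\pvsm}{\hejGt}$ block-by-block and then subtract entry-by-entry. The first summand $\minv{\pvsm}$ is already displayed in \eqnref{trick_inversion}. For the second summand I will exploit the leading-$1$ block in $\hejGt$ to reduce everything to another application of \eqnref{trick_inversion}; this is the only substantive step.

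The key observation is that $\hejGt\pvsm\tr{\hejGt}$ is itself an augmented second moment matrix, namely the second moment of $\asvec{1,\tr{\hejG\vreti}}$. Hence its inverse has the same ``trick inversion'' form as $\minv{\pvsm}$, but with $\pvmu$ and $\pvsig$ replaced by $\hejG\pvmu$ and $\hejG\pvsig\tr{\hejG}$ respectively. Sandwiching that inverse between $\tr{\hejGt}$ on the left and $\hejGt$ on the right collapses the outer $\hejG$'s and the inner $\minvParens{\hejG\pvsig\tr{\hejG}}$ into the single projector $\wrapProj{\pvsig}{\hejG}$, and the $(1,1)$ scalar becomes $1+\qform{\wrapProj{\pvsig}{\hejG}}{\pvmu}$. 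The outcome is the exact analogue, with $\hejG$ in place of $\zerJ$, of the block form stated in the immediately preceding lemma of the subspace-constraint subsection. This block bookkeeping is the main (and only) obstacle; no new analytic machinery is required, and in particular no orthonormality of the rows of $\hejG$ is needed, since the $\pvsig$-projector $\wrapProj{\pvsig}{\hejG}$ is defined directly in terms of $\minvParens{\hejG\pvsig\tr{\hejG}}$.

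Subtracting that expression from the display in \eqnref{trick_inversion} block-by-block yields the four blocks of $\Delhej\minv{\pvsm}$ announced in the statement: the $(1,1)$ entry loses its leading $1$ to become $\qiform{\pvsig}{\pvmu}-\qform{\wrapProj{\pvsig}{\hejG}}{\pvmu}$, while the off-diagonal and $(2,2)$ blocks are literal termwise differences.

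For the second claim I simply read off pieces of $\fvech{\Delhej\minv{\pvsm}}$. Entries $2$ through $\nlatf+1$ of $-\fvech{\Delhej\minv{\pvsm}}$ are $\minv{\pvsig}\pvmu-\wrapProj{\pvsig}{\hejG}\pvmu$, which is exactly the unscaled direction of $\pportwoptFoo{\Rbuj,\hejG,}$ in \lemmaref{cons_sr_optimal_portfolio}. The first entry of $\fvech{\Delhej\minv{\pvsm}}$ is $\qiform{\pvsig}{\pvmu}-\qform{\wrapProj{\pvsig}{\hejG}}{\pvmu}$, the square-root reciprocal of which (times $\Rbuj$) is precisely the normalizing constant $c$ in that lemma, confirming the stated scaling.
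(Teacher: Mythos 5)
Your proposal is correct and is exactly the ``simple algebra'' the paper leaves unwritten: recognizing $\qoform{\pvsm}{\hejGt}$ as the augmented second moment of $\asvec{1,\tr{\hejG\vreti}}$, applying \eqnref{trick_inversion} to it, sandwiching with $\tr{\hejGt}$ and $\hejGt$ to produce the blocks involving $\wrapProj{\pvsig}{\hejG}$, and subtracting blockwise from \eqnref{trick_inversion} (so the leading $1$'s cancel in the $(1,1)$ entry, which is why, unlike the subspace lemma, no ``minus one'' appears in the scaling constant). Your remark that no orthonormality of the rows of \hejG is needed is also accurate, since $\wrapProj{\pvsig}{\hejG}$ only requires $\qoform{\pvsig}{\hejG}$ to be invertible, which follows from \hejG having full row rank.
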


The statistic 
$\qiform{\svsig}{\svmu} - \qform{\wrapProj{\svsig}{\hejG}}{\svmu}$, for the
case where \hejG is some rows of the \sbby{\nlatf} identity matrix,
was first proposed by Rao, and its distribution under Gaussian returns was 
later found by Giri. \cite{rao1952,giri1964likelihood} This test statistic
may be used for tests of portfolio \emph{spanning} for the case where a
risk-free instrument is traded.  \cite{HKspan1987,KanZhou2012}

The asymptotic distribution of $\Delhej\minv{\svsm}$ is given by the
following theorem, which is the analogue of \theoremref{inv_distribution}.

\begin{theorem}
Let \svsm be the unbiased sample estimate of 
\pvsm, based on \ssiz \iid samples of \vreti. 
Let $\Delhej\minv{\pvsm}$
be defined as in \definitionref{delta_inv_second_moment}, and similarly
define $\Delhej\minv{\svsm}$.  Let \pvvar be the variance of 
$\fvech{\ogram{\avreti}}$.  Then, asymptotically in \ssiz, 
\begin{equation}
\sqrt{\ssiz}\wrapParens{\fvech{\Delhej\minv{\svsm}} - \fvech{\Delhej\minv{\pvsm}}} 
\rightsquigarrow 
\normlaw{0,\qoform{\pvvar}{\Mtx{H}}},
\label{eqn:mvclt_hej_isvsm}
\end{equation}
where
\begin{equation*}
\Mtx{H} = - \EXD{\wrapBracks{\AkronA{\minv{\pvsm}} - 
\wrapParens{\AkronA{\tr{\hejGt}}} \wrapParens{\AkronA{\minvParens{\qoform{\pvsm}{\hejGt}}}}
\wrapParens{\AkronA{\hejGt}}}}.
\end{equation*}
\label{theorem:delhej_inv_distribution}
\end{theorem}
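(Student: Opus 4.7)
The plan is to imitate the proofs of Theorem \ref{theorem:inv_distribution} and Theorem \ref{theorem:subzer_inv_distribution}, since by \definitionref{delta_inv_second_moment} the quantity $\Delhej\minv{\pvsm}$ is a difference of two objects whose derivatives with respect to \pvsm have already been computed. First I would invoke the multivariate CLT as in \eqnref{mvclt_svsm} to get asymptotic normality of $\fvech{\svsm}$ with covariance \pvvar, and then apply the multivariate delta method. It therefore suffices to compute the Jacobian
$$
\dbyd{\fvech{\Delhej\minv{\pvsm}}}{\fvech{\pvsm}}
= \EXD{\dbyd{\Delhej\minv{\pvsm}}{\pvsm}},
$$
using \lemmaref{misc_derivs}, and show that the inner derivative equals the bracketed expression in the theorem, up to the minus sign.

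Next, by linearity of the derivative, I would split
$$
\dbyd{\Delhej\minv{\pvsm}}{\pvsm}
= \dbyd{\minv{\pvsm}}{\pvsm} - \dbyd{\wrapProj{\pvsm}{\hejGt}}{\pvsm}.
$$
The first term is handled by \lemmaref{deriv_vech_matrix_inverse}, giving $-(\AkronA{\minv{\pvsm}})$. The second term is exactly the projection-style derivative already worked out in the proof of \theoremref{subzer_inv_distribution} with \hejGt playing the role of \zerJt; it equals
$$
-\wrapParens{\AkronA{\tr{\hejGt}}} \wrapParens{\AkronA{\minvParens{\qoform{\pvsm}{\hejGt}}}} \wrapParens{\AkronA{\hejGt}}.
$$
Combining the two pieces and pulling out the overall minus sign reproduces the \Mtx{H} stated in the theorem.

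Finally, the delta method yields the stated asymptotic covariance $\qoform{\pvvar}{\Mtx{H}}$, and the proof is complete. I do not anticipate any serious obstacle: the work is almost entirely bookkeeping, reusing the two derivative computations already in hand. The only mild subtlety is to confirm that the sum of these two Jacobians, after multiplication by \Elim on the left and \Dupp on the right, is well-defined on the vech-parameterization, which is guaranteed because both $\minv{\pvsm}$ and $\wrapProj{\pvsm}{\hejGt}$ are symmetric, so their difference $\Delhej\minv{\pvsm}$ is symmetric and \lemmaref{misc_derivs} applies directly.
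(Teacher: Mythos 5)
Your proposal is correct and is precisely what the paper intends: the paper's own proof is the one-line remark ``Minor modification of proof of \theoremref{subzer_inv_distribution},'' and your argument---splitting the Jacobian by linearity into the $\minv{\pvsm}$ piece from \lemmaref{deriv_vech_matrix_inverse} and the projection piece already computed in the proof of \theoremref{subzer_inv_distribution} with \hejGt in place of \zerJt, then applying the delta method---is exactly that modification, spelled out.
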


\begin{proof}
Minor modification of proof of \theoremref{subzer_inv_distribution}.
\end{proof}
\begin{caution}
In the hedged portfolio optimization problem considered here, 
the optimal portfolio will, in general, hold money in the 
row space of \hejG. 
For example, in 
the garden variety application, where one is hedging out exposure to 
`the market' by including a broad market ETF, and taking \hejG to be 
the corresponding row of the identity matrix, the final portfolio may 
hold some position in that broad market ETF. This is fine for an ETF,
but one may wish to hedge out exposure to an untradeable returns 
stream--the returns of an index, say.
Combining the hedging constraint of this section with the
subspace constraint of \subsecref{subspace_constraint} is
simple in the case where the rows of \hejG are spanned
by the rows of \zerJ. The more general case, however, is rather
more complicated.
\end{caution}

\subsection{Conditional Heteroskedasticity}
\label{subsec:cond_het}

The methods described above ignore `volatility clustering', and assume
homoskedasticity. \cite{stylized_facts,nelson1991,ARCH1987} 
To deal with this,
consider a strictly positive scalar random variable, \fvola[i], 
observable at the time the investment decision is
required to capture \vreti[i+1]. For reasons to be obvious later, it 
is more convenient to think of \fvola[i] as a `quietude' indicator,
or a `weight' for a weighted regression.

Two simple competing models for conditional heteroskedasticity are
\begin{align}
\label{eqn:cond_model_I}
\mbox{(constant):}\quad \Econd{\vreti[i+1]}{\fvola[i]} &=
\fvola[i]^{-1}\pvmu, & \Varcond{\vreti[i+1]}{\fvola[i]} &=
\fvola[i]^{-2} \pvsig,\\
\label{eqn:cond_model_II}
\mbox{(floating):}\quad \Econd{\vreti[i+1]}{\fvola[i]} &=
\pvmu, & \Varcond{\vreti[i+1]}{\fvola[i]} &=
\fvola[i]^{-2} \pvsig.
\end{align}
Under the model in \eqnref{cond_model_I}, the maximal \txtSR
is $\sqrt{\qiform{\pvsig}{\pvmu}}$, independent of \fvola[i];
under \eqnref{cond_model_II}, it is 
is $\fvola[i]\sqrt{\qiform{\pvsig}{\pvmu}}$. 
The model names reflect whether or not
the maximal \txtSR varies conditional on \fvola[i].

The optimal portfolio under both models is the same, as stated
in the following lemma, the proof of which follows by simply using
\lemmaref{sr_optimal_portfolio}.

\begin{lemma}[Conditional \txtSR optimal portfolio]
\label{lemma:cond_sr_optimal_portfolio_I}
Under either the model in \eqnref{cond_model_I} or \eqnref{cond_model_II}, 
conditional on observing \fvola[i], the portfolio optimization problem
\begin{equation}
\argmax_{\pportw :\, \Varcond{\trAB{\pportw}{\vreti[i+1]}}{\fvola[i]} \le \Rbuj^2} 
\frac{\Econd{\trAB{\pportw}{\vreti[i+1]}}{\fvola[i]} -
\rfr}{\sqrt{\Varcond{\trAB{\pportw}{\vreti[i+1]}}{\fvola[i]}}},
\label{eqn:cond_sr_optimal_portfolio_problem_I}
\end{equation}
for $\rfr \ge 0, \Rbuj > 0$ is solved by
\begin{equation}
\pportwopt = \frac{\fvola[i] \Rbuj}{\sqrt{\qiform{\pvsig}{\pvmu}}}
\minvAB{\pvsig}{\pvmu}.
\end{equation}
Moreover, this is the unique solution whenever $\rfr > 0$.
\end{lemma}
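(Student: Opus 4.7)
The plan is to reduce the conditional optimization problem in each of the two heteroskedasticity models to the \emph{unconditional} form already solved in \lemmaref{sr_optimal_portfolio}. Since $\fvola[i]$ is observable at decision time and is a strictly positive scalar, conditional on it the quantities $\tilde{\pvmu}\defeq\Econd{\vreti[i+1]}{\fvola[i]}$ and $\tilde{\pvsig}\defeq\Varcond{\vreti[i+1]}{\fvola[i]}$ play exactly the roles of $\pvmu$ and $\pvsig$ in the unconditional \txtSR maximization. First I would therefore invoke \lemmaref{sr_optimal_portfolio} with $(\tilde{\pvmu},\tilde{\pvsig})$ substituted for $(\pvmu,\pvsig)$ to obtain the conditionally optimal portfolio
$$\frac{\Rbuj}{\sqrt{\qiform{\tilde{\pvsig}}{\tilde{\pvmu}}}}\minvAB{\tilde{\pvsig}}{\tilde{\pvmu}},$$
and to inherit the uniqueness statement when $\rfr>0$.

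The second step is to simplify this expression under each model. Under \eqnref{cond_model_I}, $\tilde{\pvmu}=\fvola[i]^{-1}\pvmu$ and $\tilde{\pvsig}=\fvola[i]^{-2}\pvsig$, so $\minvAB{\tilde{\pvsig}}{\tilde{\pvmu}} = \fvola[i]\,\minvAB{\pvsig}{\pvmu}$ and the quadratic form $\qiform{\tilde{\pvsig}}{\tilde{\pvmu}}$ collapses to $\qiform{\pvsig}{\pvmu}$; combining yields precisely $\frac{\fvola[i]\Rbuj}{\sqrt{\qiform{\pvsig}{\pvmu}}}\minvAB{\pvsig}{\pvmu}$. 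Under \eqnref{cond_model_II}, $\tilde{\pvmu}=\pvmu$ and $\tilde{\pvsig}=\fvola[i]^{-2}\pvsig$, so $\minvAB{\tilde{\pvsig}}{\tilde{\pvmu}}=\fvola[i]^{2}\minvAB{\pvsig}{\pvmu}$ and $\qiform{\tilde{\pvsig}}{\tilde{\pvmu}} = \fvola[i]^{2}\qiform{\pvsig}{\pvmu}$; the powers of $\fvola[i]$ again combine to give the same formula.

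There is essentially no main obstacle here; the argument is a one-line application of \lemmaref{sr_optimal_portfolio} followed by routine bookkeeping of $\fvola[i]$ factors. The only point worth emphasizing is that despite the two models having different conditional maximal \txtSR values ($\sqrt{\qiform{\pvsig}{\pvmu}}$ versus $\fvola[i]\sqrt{\qiform{\pvsig}{\pvmu}}$), the \emph{direction and magnitude} of the optimal portfolio coincide—this is exactly the cancellation between $\minvAB{\tilde{\pvsig}}{\tilde{\pvmu}}$ and $\sqrt{\qiform{\tilde{\pvsig}}{\tilde{\pvmu}}}$ described above, and it is what makes the single closed form stated in the lemma valid in both cases.
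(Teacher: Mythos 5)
Your proposal is correct and matches the paper's approach exactly: the paper gives no separate proof, stating only that the lemma ``follows by simply using'' \lemmaref{sr_optimal_portfolio} with the conditional moments plugged in, which is precisely your reduction. Your bookkeeping of the $\fvola[i]$ factors under both models is accurate and usefully makes explicit why the two models yield the same optimal portfolio despite having different conditional maximal \txtSR values.
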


To perform inference on the portfolio \pportwopt from 
\lemmaref{cond_sr_optimal_portfolio_I}, under the `constant'
model of \eqnref{cond_model_I}, apply the unconditional techniques
to the sample second moment of $\fvola[i]\avreti[i+1]$. 

For
the `floating' model of \eqnref{cond_model_II}, however, some 
adjustment to the technique is required.
Define $\aavreti[i+1] \defeq \fvola[i]\avreti[i+1]$; that is,
$\aavreti[i+1] = \asvec{\fvola[i],\fvola[i]\tr{\vreti[i+1]}}$. 
Consider the second moment of \aavreti:
\begin{equation}
\pvsm[{\fvola}] \defeq \E{\ogram{\aavreti}} = 
	\twobytwo{\volavar}{\volavar\tr{\pvmu}}{\volavar\pvmu}{\pvsig +
\qoform{\volavar}{\pvmu}},\quad\mbox{where}\quad
\volavar \defeq \E{\fvola^2}.
\end{equation}
The inverse of \pvsm[{\fvola}] is
\begin{equation}
\minv{\pvsm[{\fvola}]} 
= \twobytwo{\volaivar + \qiform{\pvsig}{\pvmu}}{-\tr{\pvmu}\minv{\pvsig}}{-\minv{\pvsig}\pvmu}{\minv{\pvsig}}
\label{eqn:new_trick_inversion}
\end{equation}
Once again, the optimal portfolio (up to scaling and sign), appears in
\fvech{\minv{\pvsm[{\fvola}]}}.
Similarly, define the sample analogue:
\begin{equation}
\svsm[{\fvola}] \defeq \oneby{\ssiz}\sum_i \ogram{\aavreti[i+1]}.
\end{equation}
We can find the asymptotic distribution of \fvech{\svsm[{\fvola}]}
using the same techniques as in the unconditional case, as in the
following analogue of \theoremref{inv_distribution}:

\begin{theorem}
\label{theorem:cond_inv_distribution}
Let $\svsm[{\fvola}] \defeq \oneby{\ssiz}\sum_i \ogram{\aavreti[i+1]}$,
based on \ssiz \iid samples of \asvec{\fvola,\tr{\vreti}}.
Let \pvvar be the variance of $\fvech{\ogram{\aavreti}}$.
Then, asymptotically in \ssiz, 
\begin{equation}
\label{eqn:cond_mvclt_isvsm}
\sqrt{\ssiz}\wrapParens{\fvech{\minv{\svsm[{\fvola}]}} -
\fvech{\minv{\pvsm[{\fvola}]}}} 
\rightsquigarrow 
\normlaw{0,\qoform{\pvvar}{\Mtx{H}}},
\end{equation}
where
\begin{equation}
\Mtx{H} = -\EXD{\wrapParens{\AkronA{\minv{\pvsm[{\fvola}]}}}}.
\end{equation}
Furthermore, we may replace \pvvar in this equation with an asymptotically
consistent estimator, \svvar.
\end{theorem}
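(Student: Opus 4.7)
The plan is to observe that this theorem is structurally identical to Theorem \ref{theorem:inv_distribution}, with the role of $\avreti$ played by $\aavreti = \fvola \avreti$ and the role of $\pvsm$ played by $\pvsm[{\fvola}] = \E{\ogram{\aavreti}}$. Since we assume iid samples of $\asvec{\fvola,\tr{\vreti}}$, the vectors $\aavreti[i+1]$ are themselves iid, and $\svsm[{\fvola}] = \oneby{\ssiz}\sum_i \ogram{\aavreti[i+1]}$ is an unbiased sample estimator of $\pvsm[{\fvola}]$. Existence of fourth moments of $\aavreti$ (equivalently, finite $\pvvar = \Var{\fvech{\ogram{\aavreti}}}$) is the hypothesis needed, and is implicit in the statement.

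First, I would apply the multivariate central limit theorem to the iid sequence $\fvech{\ogram{\aavreti[i+1]}}$ to conclude
\begin{equation*}
\sqrt{\ssiz}\wrapParens{\fvech{\svsm[{\fvola}]} - \fvech{\pvsm[{\fvola}]}}
\rightsquigarrow \normlaw{0,\pvvar}.
\end{equation*}
Then I would apply the multivariate delta method to the matrix-inverse map $\Mtx{A} \mapsto \minv{\Mtx{A}}$ acting on the symmetric positive-definite argument \pvsm[{\fvola}]. The Jacobian in \fvech-coordinates is furnished by \lemmaref{deriv_vech_matrix_inverse}, giving
\begin{equation*}
\dbyd{\fvech{\minv{\pvsm[{\fvola}]}}}{\fvech{\pvsm[{\fvola}]}}
= -\EXD{\wrapParens{\AkronA{\minv{\pvsm[{\fvola}]}}}} = \Mtx{H}.
\end{equation*}
Sandwiching \pvvar between $\Mtx{H}$ and its transpose produces the stated asymptotic covariance $\qoform{\pvvar}{\Mtx{H}}$.

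Finally, to justify replacing \pvvar by a consistent estimator \svvar, one appeals to Slutsky's theorem in the standard way: any plug-in sample-moment estimator of \pvvar (for example, the sample covariance of $\fvech{\ogram{\aavreti[i+1]}}$) is consistent under the fourth-moment assumption, and this consistency transfers through the continuous mapping $\pvvar \mapsto \qoform{\pvvar}{\Mtx{H}}$ (with $\Mtx{H}$ also estimated by plugging in \svsm[{\fvola}]).

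There is essentially no obstacle here beyond recognizing that weighting returns by the observable \fvola[i] preserves the iid structure and merely changes the target population moment from \pvsm to \pvsm[{\fvola}]; the algebraic identity giving \minv{\pvsm[{\fvola}]} in \eqnref{new_trick_inversion} is immaterial to the convergence statement and is used only to interpret the resulting estimand. Consequently the proof reduces to a one-line invocation of the argument already used for \theoremref{inv_distribution}.
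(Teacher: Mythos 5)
Your proof is correct and is exactly the argument the paper intends: the paper states this theorem without a separate proof, remarking only that it follows ``using the same techniques as in the unconditional case,'' i.e.\ the multivariate CLT applied to $\fvech{\ogram{\aavreti}}$ followed by the delta method with the Jacobian from \lemmaref{deriv_vech_matrix_inverse}, which is precisely what you do. The only nuance the paper adds (outside the proof) is that one can no longer assume the first row and column of \pvvar vanish, since the leading entry of \aavreti is the random \fvola rather than a deterministic $1$; this does not affect the validity of your argument.
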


The only real difference from the unconditional case is that we cannot
automatically assume that the first row and column of \pvvar is zero
(unless \fvola is actually constant, which misses the point). Moreover,
the shortcut for estimating \pvvar under Gaussian returns is not 
valid without some patching, an exercise left for the reader.

Dependence or independence of maximal \txtSR from 
volatility is an assumption which, ideally, one could test with
data. A mixed model containing both characteristics can be written
as follows:
\begin{align}
\label{eqn:cond_model_III}
\mbox{(mixed):}\quad \Econd{\vreti[i+1]}{\fvola[i]} &=
\fvola[i]^{-1}\pvmu[0] + \pvmu[1], & \Varcond{\vreti[i+1]}{\fvola[i]} &=
\fvola[i]^{-2} \pvsig.
\end{align}
One could then test whether elements of $\pvmu[0]$ or of $\pvmu[1]$ are
zero. Analyzing this model is somewhat complicated without moving to a
more general framework, as in the sequel.

\subsection{Conditional Expectation and Heteroskedasticity}
\label{subsec:cond_ret_het}

Suppose you observe random variables $\fvola[i] > 0$, and 
\nfac-vector $\vfact[i]$ at some time prior to when the investment
decision is required to capture \vreti[i+1]. It need not be the case
that \fvola[{}] and \vfact[{}] are independent. The general model
is now
\begin{align}
\label{eqn:cond_model_IV}
\mbox{(bi-conditional):}\quad 
\Econd{\vreti[i+1]}{\fvola[i],\vfact[i]} &=
\pRegco \vfact[i], &
\Varcond{\vreti[i+1]}{\fvola[i],\vfact[i]} &=
\fvola[i]^{-2} \pvsig,
\end{align}
where \pRegco is some \bby{\nlatf}{\nfac} matrix. Without the
\fvola[i] term, these are the `predictive regression' equations
commonly used in Tactical Asset 
Allocation.  \cite{connor1997,herold2004TAA,brandt2009portfolio}

By letting
$\vfact[i] = \asvec{\fvola[i]^{-1},1}$ we recover 
the mixed model in \eqnref{cond_model_III}; the bi-conditional model
is considerably more general, however.  The conditionally-optimal portfolio
is given by the following lemma. Once again, the proof proceeds simply
by plugging in the conditional expected return and volatility into
\lemmaref{sr_optimal_portfolio}. 

\begin{lemma}[Conditional \txtSR optimal portfolio]
\label{lemma:cond_sr_optimal_portfolio_II}
Under the model in \eqnref{cond_model_IV},
conditional on observing \fvola[i] and \vfact[i], 
the portfolio optimization problem
\begin{equation}
\argmax_{\pportw :\, \Varcond{\trAB{\pportw}{\vreti[i+1]}}{\fvola[i],\vfact[i]} \le \Rbuj^2} 
\frac{\Econd{\trAB{\pportw}{\vreti[i+1]}}{\fvola[i],\vfact[i]} -
\rfr}{\sqrt{\Varcond{\trAB{\pportw}{\vreti[i+1]}}{\fvola[i],\vfact[i]}}},
\label{eqn:cond_sr_optimal_portfolio_problem_I}
\end{equation}
for $\rfr \ge 0, \Rbuj > 0$ is solved by
\begin{equation}
\pportwopt = \frac{\fvola[i]
\Rbuj}{\sqrt{\qform{\qiform{\pvsig}{\pRegco}}{\vfact[i]}}}
\minvAB{\pvsig}{\pRegco\vfact[i]}.
\end{equation}
Moreover, this is the unique solution whenever $\rfr > 0$.
\end{lemma}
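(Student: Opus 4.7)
The plan is to reduce the bi-conditional problem to the unconditional one already solved in Lemma \ref{lemma:sr_optimal_portfolio}, by treating the conditional mean and variance as fixed ``population'' parameters. Concretely, conditional on observing $\fvola[i]$ and $\vfact[i]$, the random return $\vreti[i+1]$ has mean $\tilde{\pvmu} \defeq \pRegco\vfact[i]$ and covariance $\tilde{\pvsig} \defeq \fvola[i]^{-2}\pvsig$; neither depends on $\pportw$, so the optimization is of the same form as \eqnref{sr_optimal_portfolio_problem} with $\pvmu$ and $\pvsig$ replaced by $\tilde{\pvmu}$ and $\tilde{\pvsig}$.

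First I would verify the hypotheses of \lemmaref{sr_optimal_portfolio}: $\tilde{\pvsig} = \fvola[i]^{-2}\pvsig$ is invertible whenever \pvsig is invertible (since $\fvola[i]>0$), with inverse $\fvola[i]^2\minv{\pvsig}$, and $\tilde{\pvmu} = \pRegco\vfact[i]$ is nonzero on the event that the problem is nondegenerate. Next I would plug into the formula from \lemmaref{sr_optimal_portfolio}, yielding
\begin{equation*}
\pportwopt
= \frac{\Rbuj}{\sqrt{\qiform{\tilde{\pvsig}}{\tilde{\pvmu}}}}
\,\minvAB{\tilde{\pvsig}}{\tilde{\pvmu}}
= \frac{\Rbuj}{\sqrt{\fvola[i]^2 \qform{\qiform{\pvsig}{\pRegco}}{\vfact[i]}}}
\,\fvola[i]^2 \minv{\pvsig}\pRegco\vfact[i],
\end{equation*}
where I have used bilinearity of the quadratic form to pull the scalar $\fvola[i]^2$ outside and expanded $\qiform{\pvsig}{\pRegco\vfact[i]}=\qform{\qiform{\pvsig}{\pRegco}}{\vfact[i]}$. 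Canceling one factor of $\fvola[i]$ between numerator and denominator gives precisely the stated expression.

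Uniqueness when $\rfr>0$ is inherited verbatim from \lemmaref{sr_optimal_portfolio} applied to the substituted parameters. There is no real obstacle: the only thing to watch is that the scaling $\fvola[i]^{-2}$ on the covariance propagates correctly through both the inverse (producing $\fvola[i]^2$) and the square root in the denominator (producing $\fvola[i]$), so that one net factor of $\fvola[i]$ survives in the final answer, explaining the appearance of $\fvola[i]$ (rather than $\fvola[i]^{-1}$ or $\fvola[i]^2$) in the statement.
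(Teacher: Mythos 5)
Your proposal is correct and matches the paper's own argument, which likewise obtains the result by substituting the conditional mean $\pRegco\vfact[i]$ and conditional covariance $\fvola[i]^{-2}\pvsig$ into \lemmaref{sr_optimal_portfolio}. Your explicit bookkeeping of how the $\fvola[i]^{-2}$ factor propagates through the inverse and the square root is a helpful elaboration of the same one-line reduction.
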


\begin{caution}
It is emphatically \emph{not} the case that 
investing in the portfolio \pportwopt from 
\lemmaref{cond_sr_optimal_portfolio_II} at every time step 
is long-term \txtSR optimal. One may possibly achieve a higher long-term
\txtSR by down-levering at times when the conditional \txtSR is low.
The optimal long term investment strategy falls under the 
rubric of `multiperiod portfolio choice', and is an area of active
research. \cite{mulvey2003advantages,fabozzi2007robust,brandt2009portfolio}
\end{caution}

The matrix \minvAB{\pvsig}{\pRegco} is the generalization of the
Markowitz portfolio: it is the multiplier for a model under which
the optimal portfolio is linear in the features \vfact[i] (up to
scaling to satisfy the risk budget). We can think of this matrix
as the `Markowitz coefficient'. If an entire column of 
\minvAB{\pvsig}{\pRegco} is zero, it suggests that the 
corresponding element of \vfact[{}] can be ignored in investment
decisions; if an entire row of \minvAB{\pvsig}{\pRegco} is zero,
it suggests the corresponding instrument delivers no return or
hedging benefit.

Conditional on observing \vfact[i] and \fvola[i], the maximal
achievable squared \txtSNR is
\begin{equation*}
\acondb{\psnrsqopt}{\fvola[i],\vfact[i]} \defeq
\wrapParens{\frac{\Econd{\trAB{\pportwopt}{\vreti[i+1]}}{\fvola[i],\vfact[i]}}{\sqrt{\Varcond{\trAB{\pportwopt}{\vreti[i+1]}}{\fvola[i],\vfact[i]}}}}^2
= \qform{\qiform{\pvsig}{\pRegco}}{\vfact[i]}.
\end{equation*}
This is independent of \fvola[i], but depends on \vfact[i].
The unconditional expected value of the maximal squared
\txtSNR is thus
\begin{equation*}
\begin{split}
\Eof[{\vfact[{}]}]{\psnrsqopt} 
&\defeq \Eof[{\vfact[{}]}]{\trace{\qform{\qiform{\pvsig}{\pRegco}}{\vfact}}},\\
&= \Eof[{\vfact[{}]}]{\trace{\wrapParens{\qiform{\pvsig}{\pRegco}}\ogram{\vfact}}},\\
&= \trace{\wrapParens{\qiform{\pvsig}{\pRegco}}\Eof[{\vfact[{}]}]{\ogram{\vfact}}},\\
&= \trace{\wrapParens{\qiform{\pvsig}{\pRegco}}\pfacsig}.
\end{split}
\end{equation*}
This quantity is the \emph{Hotelling-Lawley trace}, typically used
to test the so-called Multivariate General 
Linear Hypothesis.  \cite{Rencher2002,Muller1984143}
See \secref{MGLH}.


To perform inference on the Markowitz coefficient, we can proceed
exactly as above. Let
\begin{equation}
\aavreti[i+1] \defeq \asvec{\fvola[i]\tr{\vfact[i]},\fvola[i]\tr{\vreti[i+1]}}.
\end{equation}
Consider the second moment of \aavreti:
\begin{equation}
\pvsm[\sfactsym] \defeq \E{\ogram{\aavreti}} = 
	\twobytwo{\pfacsig}{\pfacsig\tr{\pRegco}}{\pRegco\pfacsig}{\pvsig +
\qoform{\pfacsig}{\pRegco}},\quad\mbox{where}\quad
\pfacsig \defeq \E{\fvola[{}]^2\ogram{\vfact[{}]}}.
\label{eqn:cond_pvsm_def}
\end{equation}
The inverse of \pvsm[\sfactsym] is
\begin{equation}
\minv{\pvsm[\sfactsym]}
= \twobytwo{\minv{\pfacsig} +
\qiform{\pvsig}{\pRegco}}{-\tr{\pRegco}\minv{\pvsig}}{-\minv{\pvsig}\pRegco}{\minv{\pvsig}}
\label{eqn:new_new_trick_inversion}
\end{equation}
Once again, the Markowitz coefficient (up to scaling and sign), appears in
\fvech{\minv{\pvsm[\sfactsym]}}.

The following theorem is an analogue of, and shares a proof with,
\theoremref{inv_distribution}.

\begin{theorem}
\label{theorem:cond_inv_distribution_II}
Let $\svsm[\sfactsym] \defeq \oneby{\ssiz}\sum_i \ogram{\aavreti[i+1]}$,
based on \ssiz \iid samples of \asvec{\fvola,\tr{\vfact[{}]},\tr{\vreti}},
where 
$$
\aavreti[i+1] \defeq \asvec{\fvola[i]\tr{\vfact[i]},\fvola[i]\tr{\vreti[i+1]}}.
$$
Let \pvvar be the variance of $\fvech{\ogram{\aavreti}}$.
Then, asymptotically in \ssiz, 
\begin{equation}
\sqrt{\ssiz}\wrapParens{\fvech{\minv{\svsm[\sfactsym]}} -
\fvech{\minv{\pvsm[\sfactsym]}}} 
\rightsquigarrow 
\normlaw{0,\qoform{\pvvar}{\Mtx{H}}},
\label{eqn:cond_mvclt_isvsm_II}
\end{equation}
where
\begin{equation}
\Mtx{H} = -\EXD{\wrapParens{\AkronA{\minv{\pvsm[\sfactsym]}}}}.
\end{equation}
Furthermore, we may replace \pvvar in this equation with an asymptotically
consistent estimator, \svvar.
\end{theorem}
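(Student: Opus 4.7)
The plan is to mimic the proof of \theoremref{inv_distribution} essentially verbatim, since the only structural change is that each $\avreti$ has been replaced by the larger vector $\aavreti$, and $\pvsm$ by $\pvsm[\sfactsym]$. Nothing in the proof of \theoremref{inv_distribution} used the specific structure of $\avreti$ beyond the fact that it produced an iid sequence whose outer-product had a finite fourth moment and an invertible mean, so the same two-step recipe (CLT plus delta method with \lemmaref{deriv_vech_matrix_inverse}) should go through.

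First I would invoke the multivariate CLT on the iid sequence of vectors $\fvech{\ogram{\aavreti[i+1]}}$. Since $(\fvola[i],\vfact[i],\vreti[i+1])$ is iid by hypothesis, so is $\aavreti[i+1]$, and under the (standing) assumption that fourth moments exist, the CLT yields
$$\sqrt{\ssiz}\wrapParens{\fvech{\svsm[\sfactsym]} - \fvech{\pvsm[\sfactsym]}} \rightsquigarrow \normlaw{0,\pvvar},$$
where $\pvvar$ is defined in the statement as the variance of $\fvech{\ogram{\aavreti}}$. Second, I would apply the multivariate delta method to the map $\fvech{\Mtx{A}} \mapsto \fvech{\minv{\Mtx{A}}}$ evaluated at $\Mtx{A} = \pvsm[\sfactsym]$; by \lemmaref{deriv_vech_matrix_inverse} the Jacobian is exactly $-\EXD{\wrapParens{\AkronA{\minv{\pvsm[\sfactsym]}}}} = \Mtx{H}$, so the asymptotic variance becomes $\qoform{\pvvar}{\Mtx{H}}$ as claimed. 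Swapping $\pvvar$ for a consistent $\svvar$ (e.g.\ the sample covariance of the vectors $\fvech{\ogram{\aavreti[i+1]}}$) is justified by Slutsky's theorem.

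The main obstacles are regularity, not calculation. The delta method requires $\pvsm[\sfactsym]$ to be invertible at the point of expansion, which by the block structure in \eqnref{new_new_trick_inversion} amounts to requiring both $\pvsig$ and $\pfacsig = \E{\fvola^2 \ogram{\vfact}}$ to be invertible; invertibility of $\pfacsig$ is the natural non-collinearity condition on the weighted features. A second subtlety, worth flagging explicitly in the proof, is that unlike in \theoremref{inv_distribution} the first row and column of $\pvvar$ are \emph{not} automatically zero: the leading entry of $\aavreti$ is the random scalar $\fvola[i]$ rather than the deterministic $1$, so $\pvvar$ is generically full rank and no dimensional collapse is available. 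This only affects estimation, not the statement of the theorem itself, so the proof reduces to the two lines above and may simply be stated as ``identical to the proof of \theoremref{inv_distribution}, applied to $\aavreti$ in place of $\avreti$.''
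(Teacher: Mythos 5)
Your proposal is correct and matches the paper's approach exactly: the paper gives no separate argument for this theorem, stating only that it ``shares a proof with'' \theoremref{inv_distribution}, i.e.\ the same CLT-plus-delta-method recipe with the Jacobian from \lemmaref{deriv_vech_matrix_inverse} applied to $\aavreti$ in place of $\avreti$. Your added remarks on invertibility of $\pfacsig$ and on the first row and column of $\pvvar$ no longer vanishing are sound side observations (the latter echoes the paper's own comment following \theoremref{cond_inv_distribution}), though strictly the leading block of $\aavreti$ is $\fvola[i]\vfact[i]$ rather than $\fvola[i]$ alone.
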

\subsection{Conditional Expectation and Heteroskedasticity with Hedging Constraint}

A little work allows us to combine the conditional model of
\subsecref{cond_ret_het} with the hedging constraint of 
\subsecref{hedging_constraint}. Suppose returns follow the model
of \eqnref{cond_model_IV}. To prove the following lemma, simply 
plug in $\pRegco\vfact[i]$ for \pvmu, and 
$\fvola[i]^{-2} \pvsig$ for \pvsig into \lemmaref{cons_sr_optimal_portfolio}.

\begin{lemma}[Hedged Conditional \txtSR optimal portfolio]
\label{lemma:hej_cond_sr_optimal_portfolio}
Let $\hejG$ be a given \bby{\nlatfhej}{\nlatf} matrix of 
rank \nlatfhej. Under the model in \eqnref{cond_model_IV},
conditional on observing \fvola[i] and \vfact[i], 
the portfolio optimization problem
\begin{equation}
\argmax_{\substack{\pportw :\,\hejG\pvsig \pportw = \vzero,\\
 \Varcond{\trAB{\pportw}{\vreti[i+1]}}{\fvola[i],\vfact[i]} \le \Rbuj^2}}
\frac{\Econd{\trAB{\pportw}{\vreti[i+1]}}{\fvola[i],\vfact[i]} -
\rfr}{\sqrt{\Varcond{\trAB{\pportw}{\vreti[i+1]}}{\fvola[i],\vfact[i]}}},
\label{eqn:hej_cond_sr_optimal_portfolio_problem_I}
\end{equation}
for $\rfr \ge 0, \Rbuj > 0$ is solved by
\begin{equation*}
\begin{split}
\pportwoptFoo{\Rbuj,\hejG,} &\defeq c \wrapParens{\minv{\pvsig}{\pRegco} -
	\wrapProj{\pvsig}{\hejG}\pRegco}\vfact[i],\\
 c &= \frac{\fvola[i]\Rbuj}{\sqrt{\qiform{\pvsig}{\wrapParens{\pRegco\vfact[i]}} - 
	\qform{\wrapProj{\pvsig}{\hejG}}{\wrapParens{\pRegco\vfact[i]}}}}.
\end{split}
\end{equation*}
Moreover, this is the unique solution whenever $\rfr > 0$.
\end{lemma}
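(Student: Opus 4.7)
The statement is essentially a corollary of \lemmaref{cons_sr_optimal_portfolio}, and the plan is to get it by the substitution the paper already hints at. First, condition on the observables $(\fvola[i],\vfact[i])$ so that the objective in \eqnref{hej_cond_sr_optimal_portfolio_problem_I} becomes a static Sharpe-ratio maximization with conditional mean $\tilde{\pvmu} \defeq \pRegco\vfact[i]$ and conditional covariance $\tilde{\pvsig} \defeq \fvola[i]^{-2}\pvsig$. The hedging constraint $\hejG\tilde{\pvsig}\pportw = \vzero$ differs from $\hejG\pvsig\pportw = \vzero$ only by the positive scalar $\fvola[i]^{-2}$, so it is the \emph{same} linear constraint; the risk-budget constraint $\qform{\tilde\pvsig}{\pportw} \le \Rbuj^2$ has exactly the form required by \lemmaref{cons_sr_optimal_portfolio}.

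Applying that lemma with $\tilde\pvmu$ and $\tilde\pvsig$ in place of $\pvmu$ and $\pvsig$ yields a solution $\tilde c(\minv{\tilde\pvsig}\tilde\pvmu - \wrapProj{\tilde\pvsig}{\hejG}\tilde\pvmu)$ with $\tilde c = \Rbuj/\sqrt{\qiform{\tilde\pvsig}{\tilde\pvmu} - \qform{\wrapProj{\tilde\pvsig}{\hejG}}{\tilde\pvmu}}$. The next step is to undo the tildes using the scaling identities $\minv{\tilde\pvsig} = \fvola[i]^2\minv{\pvsig}$ and $\wrapProj{\tilde\pvsig}{\hejG} = \tr{\hejG}\minv{(\fvola[i]^{-2}\hejG\pvsig\tr{\hejG})}\hejG = \fvola[i]^2\wrapProj{\pvsig}{\hejG}$. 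These jointly factor $\fvola[i]^2$ out of the unscaled direction, turning it into $\fvola[i]^2(\minv{\pvsig}\pRegco - \wrapProj{\pvsig}{\hejG}\pRegco)\vfact[i]$, while the same $\fvola[i]^2$ appearing in both quadratic forms beneath $\tilde c$ comes out of the square root as $\fvola[i]^{-1}$. Combining the two gives precisely the stated scaling $c = \fvola[i]\Rbuj/\sqrt{\qiform{\pvsig}{(\pRegco\vfact[i])} - \qform{\wrapProj{\pvsig}{\hejG}}{(\pRegco\vfact[i])}}$, and uniqueness when $\rfr > 0$ is inherited verbatim from \lemmaref{cons_sr_optimal_portfolio} without additional argument.

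There is no real obstacle, since the argument is purely a change of variables inside a previously proved lemma; the only thing to get right is the bookkeeping of $\fvola[i]$ factors, which enter both through the inverse covariance (contributing $\fvola[i]^2$ to each term of the direction) and through the square-root normalization (contributing $\fvola[i]^{-1}$), so that the net dependence of $\pportwoptFoo{\Rbuj,\hejG,}$ on $\fvola[i]$ is linear. This linear-in-$\fvola[i]$ behavior is the expected sanity check and is consistent with the unhedged case of \lemmaref{cond_sr_optimal_portfolio_II}.
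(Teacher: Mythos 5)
Your proposal is correct and matches the paper's own (one-line) argument, which is exactly to plug $\pRegco\vfact[i]$ in for \pvmu and $\fvola[i]^{-2}\pvsig$ in for \pvsig in \lemmaref{cons_sr_optimal_portfolio}; your bookkeeping of the $\fvola[i]^2$ factors in the direction and the $\fvola[i]^{-1}$ from the normalization is right and just makes explicit what the paper leaves to the reader.
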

The same cautions regarding multiperiod portfolio choice apply to the
above lemma. Results analogous to those of \subsecref{hedging_constraint} 
follow, but with one minor modification to the analogue of 
\definitionref{delta_inv_second_moment}.

\begin{lemma}
Let \hejGt now be the 
\bby{\wrapParens{\nfac+\nlatfhej}}{\wrapParens{\nfac+\nlatf}} matrix,
$$
\hejGt \defeq \twobytwossym{\eye_{\nfac}}{0}{\hejG},
$$
where the upper right corner is the \sbby{\nfac} identity matrix.
Define the `delta inverse second moment' as
\begin{equation}
\label{eqn:cond_del_hej_def}
\Delhej\minv{\pvsm[\sfactsym]} \defeq \minv{\pvsm[\sfactsym]} -
\wrapProj{\pvsm[\sfactsym]}{\hejGt},
\end{equation}
where $\pvsm[\sfactsym]$ is defined as in \eqnref{cond_pvsm_def}.
The elements of $\Delhej\minv{\pvsm[\sfactsym]}$
are  
\begin{equation*}
\Delhej\minv{\pvsm} =
\twobytwo{ \qiform{\pvsig}{\pRegco} - \qform{\wrapProj{\pvsig}{\hejG}}{\pRegco} }{
-\tr{\pRegco}\minv{\pvsig} + \tr{\pRegco}\wrapProj{\pvsig}{\hejG}}{
-\minv{\pvsig}\pRegco + \wrapProj{\pvsig}{\hejG}\pRegco}{
\minv{\pvsig} - \wrapProj{\pvsig}{\hejG}}.
\end{equation*}
In particular, the Markowitz coefficient from
\lemmaref{hej_cond_sr_optimal_portfolio} appears in the lower left
corner of 
$-\fvech{\Delhej\minv{\pvsm[\sfactsym]}}$, and the
denominator of the constant $c$ from 
\lemmaref{hej_cond_sr_optimal_portfolio} depends on a quadratic form
of \vfact[i] with the upper right left corner of
$\fvech{\Delhej\minv{\pvsm[\sfactsym]}}$.
\end{lemma}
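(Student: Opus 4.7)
The plan is to carry out a direct block computation, exploiting the same structural self-similarity that made the hedged case of \subsecref{hedging_constraint} tractable. The key observation is that the augmented hedging matrix $\hejGt = \twobytwossym{\eye_{\nfac}}{0}{\hejG}$ has been rigged so that $\qoform{\pvsm[\sfactsym]}{\hejGt}$ has exactly the same block form as $\pvsm[\sfactsym]$ itself. Expanding out \eqnref{cond_pvsm_def} gives
$$\qoform{\pvsm[\sfactsym]}{\hejGt} = \twobytwo{\pfacsig}{\pfacsig \tr{\wrapParens{\hejG\pRegco}}}{\wrapParens{\hejG\pRegco}\pfacsig}{\hejG\pvsig\tr{\hejG} + \qoform{\pfacsig}{\wrapParens{\hejG\pRegco}}},$$
which is of the form of \eqnref{cond_pvsm_def} with $\pRegco$ replaced by $\hejG\pRegco$ and $\pvsig$ replaced by $\hejG\pvsig\tr{\hejG}$. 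Consequently \eqnref{new_new_trick_inversion} applies verbatim to give $\minvParens{\qoform{\pvsm[\sfactsym]}{\hejGt}}$ in closed form without any further inversion work.

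Next I would sandwich that inverse between $\tr{\hejGt}$ and $\hejGt$ to assemble $\wrapProj{\pvsm[\sfactsym]}{\hejGt}$. The outer multiplications collapse using the identity $\tr{\hejG}\minvParens{\hejG\pvsig\tr{\hejG}}\hejG = \wrapProj{\pvsig}{\hejG}$, producing
$$\wrapProj{\pvsm[\sfactsym]}{\hejGt} = \twobytwo{\minv{\pfacsig} + \qform{\wrapProj{\pvsig}{\hejG}}{\pRegco}}{-\tr{\pRegco}\wrapProj{\pvsig}{\hejG}}{-\wrapProj{\pvsig}{\hejG}\pRegco}{\wrapProj{\pvsig}{\hejG}}.$$
Subtracting this from $\minv{\pvsm[\sfactsym]}$ as given by \eqnref{new_new_trick_inversion}, the spurious $\minv{\pfacsig}$ contribution cancels in the upper-left block, and the claimed four-block form of $\Delhej\minv{\pvsm[\sfactsym]}$ drops out in a single step.

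Finally I would read off the last two assertions of the lemma from this block expression. Comparing to \lemmaref{hej_cond_sr_optimal_portfolio}, the lower-left block of $-\Delhej\minv{\pvsm[\sfactsym]}$ is precisely $\minv{\pvsig}\pRegco - \wrapProj{\pvsig}{\hejG}\pRegco$, so its right-multiplication by $\vfact[i]$ reproduces the unscaled portfolio $\pportwoptFoo{\Rbuj,\hejG,}/c$; and the denominator inside $c$ is the square root of a quadratic form in $\vfact[i]$ against the upper-left block $\qiform{\pvsig}{\pRegco} - \qform{\wrapProj{\pvsig}{\hejG}}{\pRegco}$. There is no conceptual obstacle here—the whole proof is bookkeeping—provided one recognizes the self-similarity of $\qoform{\pvsm[\sfactsym]}{\hejGt}$ as the lever that lets \eqnref{new_new_trick_inversion} be re-used rather than rederiving a Schur complement from scratch. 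The most error-prone step is tracking the block dimensions when $\nfac > 1$, since $\pRegco$ and $\hejG\pRegco$ are genuine matrices rather than vectors as in the unconditional analogue.
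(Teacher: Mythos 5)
Your proposal is correct and is essentially the computation the paper leaves implicit (the paper offers no explicit proof, deferring to ``simple algebra'' as in the unconditional analogue): one verifies that $\qoform{\pvsm[\sfactsym]}{\hejGt}$ inherits the block structure of \eqnref{cond_pvsm_def} with $\pRegco\mapsto\hejG\pRegco$ and $\pvsig\mapsto\hejG\pvsig\tr{\hejG}$, re-uses \eqnref{new_new_trick_inversion}, sandwiches with $\tr{\hejGt}$ and $\hejGt$, and subtracts, with the $\minv{\pfacsig}$ terms cancelling in the upper-left block. The self-similarity observation is a clean way to organize the bookkeeping, and your reading-off of the Markowitz coefficient and of the quadratic form in $\vfact[i]$ against the upper-left block matches \lemmaref{hej_cond_sr_optimal_portfolio}.
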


\begin{theorem}
\label{theorem:hej_cond_inv_distribution_II}
Let $\svsm[\sfactsym] \defeq \oneby{\ssiz}\sum_i \ogram{\aavreti[i+1]}$,
based on \ssiz \iid samples of \asvec{\fvola,\tr{\vfact[{}]},\tr{\vreti}},
where 
$$
\aavreti[i+1] \defeq \asvec{\fvola[i]\tr{\vfact[i]},\fvola[i]\tr{\vreti[i+1]}}.
$$
Let \pvvar be the variance of $\fvech{\ogram{\aavreti}}$.
Define 
$\Delhej\minv{\pvsm[\sfactsym]}$ as in 
\eqnref{cond_del_hej_def} for given
\bby{\wrapParens{\nfac+\nlatfhej}}{\wrapParens{\nlatf+\nfac}} matrix \hejGt.

Then, asymptotically in \ssiz, 
\begin{equation}
\sqrt{\ssiz}\wrapParens{\fvech{\Delhej\minv{\svsm[\sfactsym]}} -
\fvech{\Delhej\minv{\pvsm[\sfactsym]}}} 
\rightsquigarrow 
\normlaw{0,\qoform{\pvvar}{\Mtx{H}}},
\label{eqn:hej_cond_mvclt_isvsm_II}
\end{equation}
where
\begin{equation*}
\Mtx{H} = - \EXD{\wrapBracks{\AkronA{\minv{\pvsm[\sfactsym]}} - 
\wrapParens{\AkronA{\tr{\hejGt}}}
\wrapParens{\AkronA{\minvParens{\qoform{\pvsm[\sfactsym]}{\hejGt}}}}
\wrapParens{\AkronA{\hejGt}}}}.
\end{equation*}
Furthermore, we may replace \pvvar in this equation with an asymptotically
consistent estimator, \svvar.
\end{theorem}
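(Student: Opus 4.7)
The plan is to mirror the proof of Theorem~\ref{theorem:delhej_inv_distribution} but with the augmented second moment $\pvsm[\sfactsym]$ from \eqnref{cond_pvsm_def} replacing $\pvsm$, and with the enlarged hedging matrix $\hejGt$ that now carries an $\eye_{\nfac}$ block in place of the scalar $1$. Step one is to invoke the multivariate CLT on $\fvech{\svsm[\sfactsym]}$: since $\svsm[\sfactsym] = \oneby{\ssiz}\sum_i\ogram{\aavreti[i+1]}$ is a sample average of \iid outer products, we have $\sqrt{\ssiz}(\fvech{\svsm[\sfactsym]} - \fvech{\pvsm[\sfactsym]}) \rightsquigarrow \normlaw{0,\pvvar}$. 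Unlike the unconditional case, \aavreti has no deterministic coordinate, so no rows of $\pvvar$ are \emph{a priori} zero, and the plug-in estimator $\svvar$ can be computed as an ordinary sample covariance of $\fvech{\ogram{\aavreti[i+1]}}$.

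Step two is to apply the multivariate delta method to the map
\[
\Phi \colon \pvsm[\sfactsym] \mapsto \Delhej\minv{\pvsm[\sfactsym]} \;=\; \minv{\pvsm[\sfactsym]} \;-\; \tr{\hejGt}\,\minvParens{\qoform{\pvsm[\sfactsym]}{\hejGt}}\,\hejGt,
\]
so that it suffices to identify $\Mtx{H} = \dbyd{\fvech{\Phi(\pvsm[\sfactsym])}}{\fvech{\pvsm[\sfactsym]}}$. The derivative splits additively into the two terms defining $\Delhej\minv{\cdot}$. The first term is handled directly by \lemmaref{deriv_vech_matrix_inverse}, giving $-\EXD{\AkronA{\minv{\pvsm[\sfactsym]}}}$. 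For the second term I would follow exactly the chain-rule calculation in the proof of \theoremref{subzer_inv_distribution}, inserting the intermediate variable $\qoform{\pvsm[\sfactsym]}{\hejGt}$, using the identity $\dbyd{\Mtx{A}\Mtx{B}\Mtx{C}}{\Mtx{B}} = \tr{\Mtx{C}}\kron\Mtx{A}$ to pull out the outer $\AkronA{\tr{\hejGt}}$ and $\AkronA{\hejGt}$ factors, and applying \lemmaref{deriv_vech_matrix_inverse} again to the inverse in the middle. This yields $+\EXD{\wrapParens{\AkronA{\tr{\hejGt}}}\wrapParens{\AkronA{\minvParens{\qoform{\pvsm[\sfactsym]}{\hejGt}}}}\wrapParens{\AkronA{\hejGt}}}$. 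Combining the two pieces produces precisely the $\Mtx{H}$ claimed in the theorem statement, and substituting a consistent $\svvar$ for $\pvvar$ is justified by Slutsky's theorem.

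The only non-routine point is verifying that the new block structure of $\hejGt$ (identity on the factor block rather than the scalar $1$ used in \definitionref{delta_inv_second_moment}) does not invalidate any of the steps. The derivative identities used above hold for an arbitrary conformable selection/combination matrix $\hejGt$, and invertibility of $\qoform{\pvsm[\sfactsym]}{\hejGt}$ follows from $\pvsm[\sfactsym]$ being positive definite together with $\hejGt$ having full row rank (the $\eye_{\nfac}$ block ensures this whenever $\hejG$ itself has rank $\nlatfhej$). Thus the delta-method hypotheses are satisfied and the asymptotic normality conclusion follows immediately; the theorem is essentially a corollary of \theoremref{cond_inv_distribution_II} combined with the derivative calculation inherited from \theoremref{delhej_inv_distribution}.
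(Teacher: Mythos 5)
Your proposal is correct and follows exactly the route the paper intends: the paper states this theorem without an explicit proof, leaving it as the same ``minor modification'' of the chain-rule argument in \theoremref{subzer_inv_distribution} that was used for \theoremref{delhej_inv_distribution}, and your write-up simply makes those steps (CLT, delta method, additive split of the derivative, and the sign bookkeeping that turns $-\EXD{A}+\EXD{B}$ into $-\EXD{A-B}$) explicit. The remarks on the non-vanishing first rows of \pvvar and on invertibility of $\qoform{\pvsm[\sfactsym]}{\hejGt}$ are accurate and consistent with the paper's discussion.
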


\subsection{Conditional Expectation and Multivariate Heteroskedasticity}
\label{subsec:cond_ret_gen_het}

Here we extend the model from \subsecref{cond_ret_het} to accept
multiple heteroskedasticity `features'. 
First note that if we redefined \vfact[i] to be $\vfact[i]\fvola[i]$,
we could rewrite the model in \eqnref{cond_model_IV}
as
\begin{align*}
\Econd{\vreti[i+1]\fvola[i]}{\fvola[i],\vfact[i]} &=
\pRegco \vfact[i], &
\Varcond{\vreti[i+1]\fvola[i]}{\fvola[i],\vfact[i]} &=
\pvsig,
\end{align*}
This can be generalized to vector-valued \fvvola[] by means of the
\emph{flattening trick}. \cite{JOFI:JOFI1055}

Suppose you observe the state variables \nvol-vector $\fvvola[i]$, 
and \nfac-vector $\vfact[i]$ at some time prior to when the investment
decision is required to capture \vreti[i+1]. It need not be the case
that \fvvola[{}] and \vfact[{}] are independent. For sane 
interpretation of the model, it makes sense that all elements of
\fvvola[] are positive.
The general model is now
\begin{align}
\label{eqn:cond_model_V}
\Econd{\fvec{\vreti[i+1]\tr{\fvvola[i]}}}{\fvvola[i],\vfact[i]} &=
\pRegco \vfact[i], &
\Varcond{\fvec{\vreti[i+1]\tr{\fvvola[i]}}}{\fvvola[i],\vfact[i]} &=
\pvsig,
\end{align}
where \pRegco is some \bby{\wrapParens{\nlatf\nvol}}{\nfac} matrix, 
and \pvsig is now a \bby{\wrapParens{\nlatf\nvol}}{\wrapParens{\nlatf\nvol}}
matrix.

Conditional on observing \fvvola[i], a portfolio on the \nfac
assets, \sportw, can be expressed as the portfolio 
\fvec{\sportw\trminv{\fvvola[i]}} on the $\wrapParens{\nlatf\nvol}$ assets
whose returns are the vector \fvec{\vreti[i+1]\tr{\fvvola[i]}}; here
$\trminv{\fvvola[i]}$ refers to the element-wise, or Hadamard, inverse
of $\tr{\fvvola[i]}$.
Thus we may perform portfolio conditional optimization on the
enlarged space of $\wrapParens{\nlatf\nvol}$ assets, 
and then, conditional on \fvvola[i], impose a subspace constraint 
requiring the portfolio to be spanned by the column space of
$\tr{\wrapParens{\eye[\nlatf]\krov\fvvola[i]}}$, where $\krov$ is
used to mean the Kronecker product with $\minv{\fvvola[i]}$, the
Hadamard inverse of \fvvola[i].

We can then combine the results of \subsecref{subspace_constraint}
and \subsecref{cond_ret_het} to solve the portfolio optimization
problem, and perform inference on that portfolio. The following
is the analogue of \lemmaref{cond_sr_optimal_portfolio_II} combined
with \lemmaref{subsp_cons_sr_optimal_portfolio}.

\begin{lemma}[Conditional \txtSR optimal portfolio]
\label{lemma:cond_sr_optimal_portfolio_III}
Suppose returns follow the 
model in \eqnref{cond_model_V}, and suppose
\fvvola[i] and \vfact[i] have been observed.
Let
$\zerJ = \tr{\wrapParens{\eye[\nlatf]\krov\fvvola[i]}},$ and
suppose $\zerJ\pRegco\vfact[i]$ is not all zeros. 
Then the portfolio optimization problem
\begin{equation}
\argmax_{\pportw :\, \Varcond{\trAB{\pportw}{\vreti[i+1]}}{\fvvola[i],\vfact[i]} \le \Rbuj^2} 
\frac{\Econd{\trAB{\pportw}{\vreti[i+1]}}{\fvvola[i],\vfact[i]} -
\rfr}{\sqrt{\Varcond{\trAB{\pportw}{\vreti[i+1]}}{\fvvola[i],\vfact[i]}}},
\label{eqn:cond_sr_optimal_portfolio_problem_III}
\end{equation}
for $\rfr \ge 0, \Rbuj > 0$ is solved by
\begin{equation*}
\begin{split}
\pportwopt &= c \wrapProj{\pvsig}{\zerJ}\pRegco\vfact[i],\\
 c &=
\frac{\Rbuj}{\sqrt{\qform{\wrapProj{\pvsig}{\zerJ}}{\wrapParens{\pRegco\vfact[i]}}}}.
\end{split}
\end{equation*}
Moreover, the solution is unique whenever $\rfr > 0$. This portfolio
achieves the maximal objective of
\begin{equation*}
\sqrt{\qform{\wrapProj{\pvsig}{\zerJ}}{\wrapParens{\pRegco\vfact[i]}}} -
\frac{\rfr}{\Rbuj}.
\end{equation*}
\end{lemma}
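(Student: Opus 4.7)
The plan is to derive this lemma from \lemmaref{subsp_cons_sr_optimal_portfolio} and \lemmaref{sr_optimal_portfolio} via the flattening trick anticipated in the paragraph preceding the statement. Conditional on $\fvvola[i]$ and $\vfact[i]$, the enlarged pseudo-returns $\fvec{\vreti[i+1]\tr{\fvvola[i]}}$ have mean $\pRegco\vfact[i]$ and covariance $\pvsig$ by \eqnref{cond_model_V}, so the conditional portfolio problem lifts to a Sharpe-maximization on $\mathbb{R}^{\nlatf\nvol}$. The role of $\zerJ = \tr{\wrapParens{\eye[\nlatf]\krov\fvvola[i]}}$ is to single out the $\nlatf$-dimensional subspace of the enlarged space that encodes bona fide original-asset portfolios.

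First I would verify the flattening identification. A direct computation shows that $\zerJ\fvec{\vreti[i+1]\tr{\fvvola[i]}}$ is a fixed scalar multiple of $\vreti[i+1]$, so that every enlarged portfolio of the form $\pportx = \tr{\zerJ}\vect{y}$ (that is, every $\pportx$ in the column space of $\tr{\zerJ}$) has return proportional to $\trAB{\vect{y}}{\vreti[i+1]}$ and thus represents an original-asset portfolio $\pportw$ that is a scalar multiple of $\vect{y}$. The same computation, together with the Kronecker structure $\pvsig$ inherits from the model, yields $\qoform{\pvsig}{\zerJ}$ proportional to $\Varcond{\vreti[i+1]}{\fvvola[i],\vfact[i]}$, so conditional mean, conditional variance, and risk budget all translate between the original and enlarged formulations up to a common explicit scalar. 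Conversely, any enlarged portfolio represents some original portfolio and shares its objective with the projection onto the column space of $\tr{\zerJ}$, so restricting to that subspace loses nothing.

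With the problem in this form, I would apply the reasoning of \lemmaref{subsp_cons_sr_optimal_portfolio} with $\pvmu$ replaced by $\pRegco\vfact[i]$. Our $\zerJ$ is not literally orthonormal, but since $\ogram{\zerJ}$ is a scalar multiple of $\eye[\nlatf]$ and $\wrapProj{\pvsig}{\zerJ}$ is invariant under left-rescaling of $\zerJ$, the lemma applies after such a rescaling; equivalently, its proof can be re-executed directly by parametrizing $\pportx = \tr{\zerJ}\vect{y}$, invoking \lemmaref{sr_optimal_portfolio} on the reduced $\nlatf$-dimensional problem with mean $\zerJ\pRegco\vfact[i]$ and covariance $\qoform{\pvsig}{\zerJ}$, and back-substituting to obtain $\pportx \propto \tr{\zerJ}\minv{\wrapParens{\qoform{\pvsig}{\zerJ}}}\zerJ\pRegco\vfact[i] = \wrapProj{\pvsig}{\zerJ}\pRegco\vfact[i]$. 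The scalar $c$ is then pinned down by saturating the risk budget, using the idempotent identity $\wrapProj{\pvsig}{\zerJ}\,\pvsig\,\wrapProj{\pvsig}{\zerJ} = \wrapProj{\pvsig}{\zerJ}$ to reduce $\qform{\pvsig}{\pportx} = c^{2}\qform{\wrapProj{\pvsig}{\zerJ}}{\wrapParens{\pRegco\vfact[i]}}$. The stated maximal objective and the uniqueness claim for $\rfr > 0$ then descend directly from the corresponding statements in \lemmaref{sr_optimal_portfolio}.

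The main obstacle is bookkeeping rather than genuine analytic difficulty: carefully tracking all the $\fvvola[i]$ and $\nvol$ factors introduced by flattening so that they cancel as expected, and explaining cleanly why---unlike the univariate analogue \lemmaref{cond_sr_optimal_portfolio_II}, where an explicit $\fvola[i]$ appears in the numerator of $c$---here the volatility weighting has already been absorbed into the definitions of the pseudo-returns, of the covariance $\pvsig$, and of $\zerJ$, so that the final formulas for $c$ and for the maximal objective carry no leftover explicit $\fvvola[i]$.
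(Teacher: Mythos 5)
Your proposal follows the same route the paper intends: the paper gives no separate proof of this lemma, relying instead on the preceding discussion of the flattening trick and the instruction to combine \lemmaref{subsp_cons_sr_optimal_portfolio} with the conditional model, which is precisely what you do. Your added care about $\ogram{\zerJ}$ being only a scalar multiple of \eye (and the invariance of $\wrapProj{\pvsig}{\zerJ}$ under rescaling of \zerJ) is a correct and worthwhile filling-in of a detail the paper glosses over.
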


The distribution of the sample analogue of the portfolio in 
\lemmaref{cond_sr_optimal_portfolio_III} is given essentially
by \theoremref{subzer_inv_distribution}, applied to the case of
conditional expected returns.

\renewcommand{\MGLHT}[1][]{\MtxUL{T}{}{#1}}
\providecommand{\mglhL}[0]{\MATHIT{\Mtx{G}_1}}
\providecommand{\mglhR}[0]{\MATHIT{\Mtx{G}_2}}

\section{Constrained Estimation}

\providecommand{\cnB}{\MATHIT{\mathfrak{D}}}
\providecommand{\cnW}{\MATHIT{\mathfrak{W}}}
\providecommand{\cnd}{\MATHIT{\mathfrak{b}}}
\providecommand{\cnne}{\mathSUB{n}{c}}
\providecommand{\cnnv}{\mathSUB{n}{v}}

Now consider the case where the population parameter
\pvsm[\sfactsym] is known or suspected, \emph{a priori}, 
to satisfy some constraints.
One then wishes to impose the same constraints on the sample
estimate prior to constructing the \txtMP, imposing a hedge, \etc

To avoid the possibility that the constrained estimate is not
positive definite or the need for cone programming
to find the estimate, here we assume 
the constraint can be expressed in
terms of the (lower) \emph{Cholesky factor} of \pvsm[\sfactsym]. 
Note that this takes the form
\begin{equation}
\label{eqn:chol_of_pvsmf}
\chol{\pvsm[\sfactsym]} 
= \twobytwo{\chol{\pfacsig}}{\mzero}{\pRegco\chol{\pfacsig}}{\chol{\pvsig}},
\end{equation}
as can be confirmed by multiplying the above by its transpose.

\subsection{Linear constraints}
Now consider equality constraints of the form
$\MGLHA \pRegco = \MGLHT$ for conformable matrices 
$\MGLHA, \MGLHT$, a less general form of the
Multivariate General Linear Hypothesis, of which more in the sequel. 
Via equalities of this form, one can constrain the mean of certain
assets to be zero (for example, assets to be hedged out),
or force certain elements of \vfact[i] to have no marginal
predictive ability on certain elements of \vreti[i+1].
When this constraint is satisfied, note that
\begin{equation*}
{\onebytwo{-\MGLHT}{\MGLHA}}\chol{\pvsm[\sfactsym]}
\twobyone{\eye}{\mzero} =
\mzero,
\end{equation*}
which can be rewritten as
\begin{equation*}
\begin{split}
\mzero 
&= \wrapParens{\onebytwo{\eye}{\mzero}\kron
	\onebytwo{-\MGLHT}{\MGLHA}}\fvec{\chol{\pvsm[\sfactsym]}},\\
&= \wrapParens{\onebytwo{\eye}{\mzero}\kron
	\onebytwo{-\MGLHT}{\MGLHA}}\tr{\Elim}\fvech{\chol{\pvsm[\sfactsym]}}.
\end{split}
\end{equation*}

This motivates the imposition of \cnne equality 
constraints of the form
\begin{equation}
\label{eqn:consest_def}
\cnB\fvech{\chol{\pvsm[\sfactsym]}} = \cnd,
\end{equation}
where \cnB is some \bby{\cnne}{\cnnv} matrix and \cnd is a \cnne-vector,
where $\cnnv = \wrapParens{\nlatf + \nfac + 1}\wrapParens{\nlatf + \nfac}/2$
is the number of elements in \fvech{\chol{\pvsm[\sfactsym]}}.

Now consider the optimization problem
\begin{equation}
\label{eqn:cons_optimization}
\min_{z :\, \cnB z = \cnd} \qform{\cnW}{\wrapParens{z - \fvech{\chol{\svsm[\sfactsym]}}}},
\end{equation}
where \cnW is some symmetric positive definite \sbby{\cnnv}
`weighting' matrix, the identity in the garden variety application.
The solution to this problem can easily be identified via the 
Lagrange multiplier technique to be
\begin{equation*}
\begin{split}
z_{*} &= \fvech{\chol{\svsm[\sfactsym]}} + 
\minv{\cnW}\tr{\cnB}\minvParens{\qiform{\cnW}{\cnB}}\wrapParens{\cnd -
\cnB\fvech{\chol{\svsm[\sfactsym]}}},\\
&= \minv{\cnW}\tr{\cnB}\minvParens{\qiform{\cnW}{\cnB}}\cnd 
+ \wrapBracks{\eye - \minv{\cnW}\wrapProj{\minv{\cnW}}{\cnB}}
\fvech{\chol{\svsm[\sfactsym]}}.
\end{split}
\end{equation*}

Define \svsmc[\sfactsym] to be the \sbby{\cnnv} matrix whose 
Cholesky factor solves minimization problem \ref{eqn:cons_optimization}:
\begin{equation*}
\svsmc[\sfactsym]\defeq\ogram{\wrapParens{\fivec{z_{*}}}}.
\end{equation*}
When the population parameter satisfies the constraints,
this sample estimate is asymptotically unbiased.

\begin{theorem}
\label{theorem:hej_cons_est_distribution}
Suppose $\cnB\fvech{\chol{\pvsm[\sfactsym]}} = \cnd$ for given
\bby{\cnne}{\cnnv} matrix \cnB and \cnne-vector \cnd. 
Let \cnW be a symmetric, positive definite \sbby{\cnnv} matrix.
Let $\svsm[\sfactsym] \defeq \oneby{\ssiz}\sum_i \ogram{\avreti[i+1]}$,
based on \ssiz \iid samples of \asvec{\tr{\vfact[{}]},\tr{\vreti}},
where 
$$
\avreti[i+1] \defeq \asvec{\tr{\vfact[i]},\tr{\vreti[i+1]}}.
$$
Let \pvvar be the variance of $\fvech{\ogram{\avreti}}$.
Define \svsmc[\sfactsym] such that
\begin{multline*}
\fvech{\chol{\svsmc[\sfactsym]}}
= \minv{\cnW}\tr{\cnB}\minvParens{\qiform{\cnW}{\cnB}}\cnd 
+ \\ \wrapBracks{\eye - \minv{\cnW}\wrapProj{\minv{\cnW}}{\cnB}}
\fvech{\chol{\svsm[\sfactsym]}}.
\end{multline*}

Then, asymptotically in \ssiz, 
\begin{equation}
\sqrt{\ssiz}\wrapParens{\fvech{\svsmc[\sfactsym]}
- \fvech{\pvsm[\sfactsym]}} 
\rightsquigarrow 
\normlaw{0,\qoform{\pvvar}{\Mtx{H}}},
\label{eqn:hej_cons_mvclt}
\end{equation}
where $\Mtx{H} = \Mtx{H}_1 \Mtx{H}_2 \Mtx{H}_3$ defined as
\begin{equation*}
\begin{split}
\Mtx{H}_1 &= 
\Elim\wrapParens{\eye + \Komm}\wrapParens{\chol{\pvsm[\sfactsym]} \kron \eye},\\
\Mtx{H}_2 &= \tr{\Elim} \wrapBracks{\eye -
\minv{\cnW}\wrapProj{\minv{\cnW}}{\cnB}},\\
\Mtx{H}_3 &= 
\minv{\wrapParens{%
\qoform{\wrapParens{\eye + \Komm}\wrapParens{\chol{\pvsm[\sfactsym]} \kron
\eye}}{\Elim}}},\\
\end{split}
\end{equation*}
where \Komm is the Commutation matrix.

Furthermore, we may replace \pvvar in this equation with an asymptotically
consistent estimator, \svvar.
\end{theorem}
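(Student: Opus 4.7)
The plan is to apply the multivariate delta method to the composite map $\phi:\fvech{\svsm[\sfactsym]}\mapsto\fvech{\svsmc[\sfactsym]}$ implicitly defined by the constrained optimization \eqnref{cons_optimization}. As in \theoremref{inv_distribution}, the multivariate CLT under i.i.d.\ sampling yields $\sqrt{\ssiz}\wrapParens{\fvech{\svsm[\sfactsym]}-\fvech{\pvsm[\sfactsym]}}\rightsquigarrow\normlaw{0,\pvvar}$, so the content of the theorem is the explicit computation of the Jacobian $\dbyd{\fvech{\svsmc[\sfactsym]}}{\fvech{\svsm[\sfactsym]}}$ at the population value.

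I would factor $\phi$ through the Cholesky factor as a composition of four maps: (i) the Cholesky decomposition $\fvech{\svsm[\sfactsym]}\mapsto\fvech{\chol{\svsm[\sfactsym]}}$; (ii) the affine projection $\fvech{L}\mapsto\minv{\cnW}\tr{\cnB}\minvParens{\qiform{\cnW}{\cnB}}\cnd+\wrapBracks{\eye-\minv{\cnW}\wrapProj{\minv{\cnW}}{\cnB}}\fvech{L}$; (iii) the tautological embedding $\fvech{L_c}\mapsto\fvec{L_c}=\tr{\Elim}\fvech{L_c}$, which is valid because $L_c$ is lower triangular; and (iv) the Gram map $\fvec{L_c}\mapsto\fvech{L_c\tr{L_c}}$. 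A key observation is that because by assumption $\cnB\fvech{\chol{\pvsm[\sfactsym]}}=\cnd$, the affine projection in (ii) fixes $\fvech{\chol{\pvsm[\sfactsym]}}$, so $\phi(\fvech{\pvsm[\sfactsym]})=\fvech{\pvsm[\sfactsym]}$ and the delta method is applied around the correct center.

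The technical crux is the Jacobian of the Gram map (iv). Differentiating $S=L\tr{L}$ gives $dS=(dL)\tr{L}+L\,\tr{(dL)}$, which on vectorization yields $\fvec{dS}=(\eye+\Komm)(L\kron\eye)\fvec{dL}$. Projecting by $\Elim$ on the output side produces $\dbyd{\fvech{L\tr{L}}}{\fvec{L}}=\Elim(\eye+\Komm)(L\kron\eye)=\Mtx{H}_1$, evaluated at $L=\chol{\pvsm[\sfactsym]}$. Composing with $\tr{\Elim}$ on the input side gives the square Jacobian $\Mtx{H}_1\tr{\Elim}=\Elim(\eye+\Komm)(L\kron\eye)\tr{\Elim}$; its inverse is exactly $\Mtx{H}_3$ and serves as the Jacobian of the Cholesky map in (i) by the inverse function theorem. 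The derivative of (ii) is the linear projection $\wrapBracks{\eye-\minv{\cnW}\wrapProj{\minv{\cnW}}{\cnB}}$, and pre-composing with the $\tr{\Elim}$ from (iii) yields $\Mtx{H}_2$. Chaining gives $\dbyd{\fvech{\svsmc[\sfactsym]}}{\fvech{\svsm[\sfactsym]}}=\Mtx{H}_1\Mtx{H}_2\Mtx{H}_3$, and the delta method delivers \eqnref{hej_cons_mvclt}; the claim about the plug-in estimator $\svvar$ follows by Slutsky's theorem as in \theoremref{inv_distribution}.

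The main obstacle I expect is purely bookkeeping: keeping $\Elim$ and $\tr{\Elim}$ straight so that all derivatives act on the correct non-redundant coordinates, and verifying invertibility of $\Elim(\eye+\Komm)(L\kron\eye)\tr{\Elim}$ at $L=\chol{\pvsm[\sfactsym]}$ (needed for the inverse function theorem applied to the Cholesky map), which holds because $\chol{\pvsm[\sfactsym]}$ has strictly positive diagonal under positive definiteness of $\pvsm[\sfactsym]$.
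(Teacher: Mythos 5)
Your proposal is correct and follows essentially the same route as the paper: both decompose the estimator as a composition (Cholesky factor, affine projection onto the constraint set, outer Gram product), identify the three Jacobians $\Mtx{H}_3$, $\Mtx{H}_2$, $\Mtx{H}_1$ via the derivative of $\ogram{\Mtx{Y}}$ and the inverse-function-theorem argument for the Cholesky map, and use the hypothesis $\cnB\fvech{\chol{\pvsm[\sfactsym]}} = \cnd$ to ensure the projection fixes the population value so the chain rule is evaluated at the right point. Your explicit handling of $\tr{\Elim}$ as the embedding of the half-vectorization of a lower-triangular matrix is just an unpacked version of the paper's $\fitril{\cdot}$ step.
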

\begin{proof}
\providecommand{\eff}[2]{\funcitL{f}{#1}{#2}}
Define the functions \eff{1}{\Mtx{X}}, \eff{2}{\Mtx{X}}, \eff{3}{\Mtx{X}} as follows:
\begin{equation*}
\begin{split}
\eff{1}{\Mtx{X}} &= \fvech{\ogram{\Mtx{X}}},\\
\eff{2}{\Mtx{X}} &= \fitril{\minv{\cnW}\tr{\cnB}\minvParens{\qiform{\cnW}{\cnB}}\cnd 
+ \wrapBracks{\eye - \minv{\cnW}\wrapProj{\minv{\cnW}}{\cnB}}\Mtx{X}},\\
\eff{3}{\Mtx{X}} &= \fvech{\chol{\Mtx{X}}},\\
\end{split}
\end{equation*}
where $\fitril{\Mtx{X}}$ is the function that takes a conformable vector to a
\emph{lower triangular} matrix. We have then defined 
\fvech{\svsmc[\sfactsym]} as $\eff{1}{\eff{2}{\eff{3}{\svsm[\sfactsym]}}}$. 
By the central limit theorem, and the matrix chain rule, it suffices to
show that $\Mtx{H}_1$ is the derivative of $\eff{1}{\cdot}$ evaluated at
$\eff{2}{\eff{3}{\pvsm[\sfactsym]}}$, that $\Mtx{H}_2$ is the derivative
of $\eff{2}{\cdot}$ evaluated at 
$\eff{3}{\pvsm[\sfactsym]}$, and $\Mtx{H}_3$ is the derivative of
$\eff{3}{\cdot}$ evaluated at \pvsm[\sfactsym]. 

These are established by \eqnref{mtx_ogram_rule} of 
\lemmaref{more_misc_derivs}, and \lemmaref{cholesky_deriv}, and by the
assumption that
$\cnB\fvech{\chol{\pvsm[\sfactsym]}} = \cnd$, which implies that
$\eff{2}{\eff{3}{\pvsm[\sfactsym]}} = \chol{\pvsm[\sfactsym]}$.

\end{proof}

The choice of \cnW is non-trivial. Armed with 
\theoremref{hej_cons_est_distribution} and knowledge of 
\pvsm and \pvvar, one would attempt to minimize the covariance
of the estimator \svsmc[\sfactsym]. Since these are
unknown population parameters, one would have to estimate them
somehow. \cite{Duncan1983,Cragg1983}

\nocite{petersen2012matrix}

The linear equality constraint can be generalized to a \emph{single} half-space 
inequality constraint. That is,
\begin{equation}
\label{eqn:consest_ineq_def}
\cnB\fvech{\chol{\pvsm[\sfactsym]}} \le \cnd,
\end{equation}
for \cnB a \bby{1}{\cnnv} matrix and \cnd a scalar.  \cite{nla.cat-vn3800977,tang1994}
However, the general case of multiple inequality constraints is much more
difficult, and remains an open question.


\subsection{Rank constraints}
\nocite{Izenman1975248}

\begingroup  
\providecommand{\rankr}{\MATHIT{r}}
\providecommand{\aaf}[3]{\mathUL{a}{\wrapParens{#1}}{#2#3}}
\providecommand{\Wul}[2]{\vectUL{W}{#1}{#2}}
\providecommand{\Wogram}[2]{\Wul{}{#1}\Wul{\trsym}{#2}}
\providecommand{\Vul}[2]{\vectUL{V}{#1}{#2}}
\providecommand{\Vogram}[2]{\Vul{}{#1}\Vul{\trsym}{#2}}
\providecommand{\oneproj}[3]{\MATHIT{\mathUL{f}{\wrapNeParens{#2}}{#1}\wrapParens{#3}}}
\providecommand{\haproj}[3]{\MATHIT{\mathUL{\pi}{\wrapNeParens{#2}}{#1}\wrapParens{#3}}}

\providecommand{\evalfunc}[2]{\MATHIT{\mathUL{v}{}{#1}\wrapParens{#2}}}
\providecommand{\evecfunc}[2]{\MATHIT{\mathUL{V}{}{#1}\wrapParens{#2}}}

Another plausible type of constraint is a rank constraint.  Here
it is suspected \emph{a priori} that the \sbby{\wrapParens{\nlatf+\nfac}}
matrix \pvsm[\sfactsym] has rank $\rankr < \nlatf + \nfac$. 
One sane response of a portfolio manager with this belief is to
project \svsm[\sfactsym] to a rank \rankr matrix, take the pseudoinverse,
and use the (negative) corner sub-matrix as the Markowitz coefficient. 
(\cf \lemmaref{hej_cond_sr_optimal_portfolio}) Here we consider the
asymptotic distribution of this reduced rank Markowitz coefficient.


To find the asymptotic distribution of sample estimates of \pvsm[\sfactsym] 
with a rank constraint, the derivative of the reduced rank decomposition is
needed. \cite{petersen2012matrix,Izenman1975248}
\begin{lemma}
Let \Mtx{A} be a real \sbby{J} symmetric matrix with rank $\rankr\le J$.
Let \evalfunc{j}{\Mtx{A}} be the function that returns the \kth{j} eigenvalue
of \Mtx{A}, and similarly let \evecfunc{j}{\Mtx{A}} compute the corresponding
eigenvector. Then
\begin{align}
\dbyd{\evalfunc{j}{\Mtx{A}}}{\vech{\Mtx{A}}} &= 
	\tr{\vech{\ogram{\evecfunc{j}{\Mtx{A}}}}} \Dupp,\\
\dbyd{\evecfunc{j}{\Mtx{A}}}{\vech{\Mtx{A}}} &= 
\pinv{\wrapParens{\evalfunc{j}{\Mtx{A}} \eye - \Mtx{A}}} 
\wrapParens{\tr{\evecfunc{j}{\Mtx{A}}} \kron \eye} \Dupp.
\end{align}
\end{lemma}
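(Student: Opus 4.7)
The plan is to derive both identities from the eigen-equation $\Mtx{A} V_j = v_j V_j$ under the normalization $\tr{V_j} V_j = 1$, then convert $d\Mtx{A}$ differentials into $\vech{\Mtx{A}}$ differentials via $\fvec{d\Mtx{A}} = \Dupp\,d\vech{\Mtx{A}}$, which is valid because \Mtx{A} is symmetric.

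For the eigenvalue, I would start from the Rayleigh form $v_j = \tr{V_j}\Mtx{A} V_j$ and expand the total differential. The two terms $\tr{dV_j}\Mtx{A} V_j$ and $\tr{V_j}\Mtx{A}\,dV_j$ each collapse to $v_j\,\tr{V_j}\,dV_j$ via the eigen-equation, and differentiating $\tr{V_j} V_j = 1$ gives $\tr{V_j}\,dV_j = 0$, so both vanish. What remains is $dv_j = \tr{V_j}(d\Mtx{A}) V_j = \tr{\fvec{\ogram{V_j}}}\,\fvec{d\Mtx{A}}$, and substituting $\fvec{d\Mtx{A}} = \Dupp\,d\vech{\Mtx{A}}$ yields the first identity.

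For the eigenvector, I would differentiate the eigen-equation directly to obtain $(v_j\eye - \Mtx{A})\,dV_j = (d\Mtx{A}) V_j - (dv_j) V_j$. The matrix $v_j\eye - \Mtx{A}$ is symmetric and singular with nullspace $\mathrm{span}(V_j)$, and its Moore--Penrose pseudoinverse inherits that nullspace; applying $(v_j\eye - \Mtx{A})^{+}$ therefore annihilates the $(dv_j) V_j$ term, leaving $dV_j = (v_j\eye - \Mtx{A})^{+}(d\Mtx{A}) V_j$. Rewriting $(d\Mtx{A}) V_j = (\tr{V_j} \kron \eye)\,\fvec{d\Mtx{A}}$ via the standard vec identity, and then applying $\fvec{d\Mtx{A}} = \Dupp\,d\vech{\Mtx{A}}$ once more, produces the second identity.

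The principal obstacle is the rank-deficiency hypothesis $\rankr < J$, which forces $0$ to be an eigenvalue of multiplicity $J - \rankr$. Individual eigenvectors spanning a repeated eigenspace are not unique functions of \Mtx{A}, and in general the eigenvector map fails to be differentiable at any repeated eigenvalue. The argument above should therefore be read as proving the formulas at a \emph{simple} eigenvalue $v_j$; for the reduced-rank application that follows, one restricts attention to the $\rankr$ nonzero eigenvalues (whose simplicity is a generic condition preserved under small symmetric perturbations), or instead differentiates the rank-$\rankr$ orthogonal projector $\sum_{j=1}^{\rankr} \ogram{V_j}$ as a whole, which is well-defined without simplicity and obeys an analogous derivative rule obtained by summing the per-eigenvector contributions.
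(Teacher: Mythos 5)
Your proposal is correct, and it is genuinely more self-contained than the paper's argument. The paper's entire proof is a citation: it invokes equations (67)--(68) of Petersen and Pedersen for the two underlying differentials $d\lambda_j = \tr{V_j}(d\Mtx{A})V_j$ and $dV_j = \pinv{\wrapParens{\lambda_j\eye - \Mtx{A}}}(d\Mtx{A})V_j$, and then passes from $\fvec{\Mtx{A}}$ to $\vech{\Mtx{A}}$ via \Dupp exactly as you do in your final step. You instead derive the cited differentials from scratch: the Rayleigh-quotient computation for the eigenvalue, and the pseudoinverse solution of the differentiated eigen-equation for the eigenvector. Both reductions are sound --- the normalization kills $\tr{V_j}\,dV_j$, and $\pinv{\wrapParens{\lambda_j\eye-\Mtx{A}}}$ annihilates $V_j$ because the pseudoinverse of a symmetric matrix shares its nullspace --- though you should make explicit the companion fact that $\pinv{\wrapParens{\lambda_j\eye-\Mtx{A}}}\wrapParens{\lambda_j\eye-\Mtx{A}}\,dV_j = dV_j$, which holds because this product is the orthogonal projector onto $V_j^{\perp}$ and $dV_j \perp V_j$; without it you only recover $dV_j$ up to a multiple of $V_j$. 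What your route buys, beyond self-containment, is the explicit caveat that these formulas are valid only at a \emph{simple} eigenvalue, a point the rank hypothesis makes genuinely relevant (zero has multiplicity $J-r$ when the rank $r$ is less than $J$) and which the paper passes over in silence; your proposed repairs --- restricting to the nonzero eigenvalues or differentiating the spectral projector as a whole --- are the right ones. A small dividend of your derivation: it naturally produces $\tr{\fvec{\ogram{V_j}}}\Dupp$ for the eigenvalue derivative, which is the dimensionally consistent reading of the first displayed formula.
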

\begin{proof}
The derivatives are known. \cite[equation (67)-(68)]{petersen2012matrix} The
form here follows from algebra and \lemmaref{misc_derivs}.
\end{proof}
From these, the derivative of the \sbby{\rankr} diagonal matrix with 
diagonal
$$
\asvec{\evalfunc{1}{\Mtx{A}}^p,\evalfunc{2}{\Mtx{A}}^p,\ldots,\evalfunc{\rankr}{\Mtx{A}}^p}
$$
can be computed with respect to \vech{\Mtx{A}} for arbitrary non-zero $p$.
Similarly the derivative of the matrix whose columns are 
$\evalfunc{1}{\Mtx{A}},\evalfunc{2}{\Mtx{A}},\ldots,\evalfunc{\rankr}{\Mtx{A}}$
can be computed with respect to \vech{\Mtx{A}}. From these the derivative of
the pseudo-inverse of the rank \rankr approximation to \Mtx{A} can be computed
with respect to \vech{\Mtx{A}}. By the delta method, then, an asymptotic normal
distribution of the pseudo-inverse of the rank \rankr approximation to \Mtx{A}
can be established. The formula for the variance is best left unwritten, since
it would be too complex to be enlightening, and is best constructed by
automatic differentiation anyway.

\endgroup 

\section{The multivariate general linear hypothesis}
\label{sec:MGLH}

Dropping the conditional heteroskedasticity term from
\eqnref{cond_model_IV}, we have the model
\begin{align*}
\Econd{\vreti[i+1]}{\vfact[i]} &= \pRegco \vfact[i], &
\Varcond{\vreti[i+1]}{\vfact[i]} &= \pvsig,
\end{align*}
The unknowns \pRegco and \pvsig can be estimated by multivariate
multiple linear regression. 
Testing for significance of the elements of \pRegco is via the 
\emph{Multivariate General Linear Hypothesis} 
(MGLH).
\cite{Muller1984143,shieh2003,shieh2005,obrienshieh,timm2002applied,yanagihara2001}
The MGLH can be posed as
\begin{equation}
\label{eqn:MGLH_def}
\Hyp[0]: \MGLHA \pRegco \MGLHC = \MGLHT,
\end{equation}
for \bby{\MGLHa}{\nlatf} matrix \MGLHA, 
\bby{\nfac}{\MGLHc} matrix \MGLHC,
and \bby{\MGLHa}{\MGLHc} matrix \MGLHT. 
We require \MGLHA and \MGLHC to have full rank, and
$\MGLHa\le\nlatf$ and $\MGLHc\le\nfac.$ 
In the garden-variety application one tests whether \pRegco is all
zero by letting \MGLHA and \MGLHC be identity matrices, and \MGLHT 
a matrix of all zeros.

Testing the MGLH proceeds by one of four test statistics, each
defined in terms of two matrices, the model variance matrix, \sMGLHH,
and the error variance matrix, \sMGLHE, defined as
\begin{equation}
\label{eqn:mglh_HE_def}
\sMGLHH \defeq
\qoiform{\wrapParens{\qiform{\sfacsig}{\MGLHC}}}{\wrapParens{\MGLHA\sRegco\MGLHC - \MGLHT}},
\quad
\sMGLHE \defeq \qform{\svsig}{\MGLHA},
\end{equation}
where $\sfacsig = \oneby{\ssiz}\sum_i \ogram{\vfact[i]}$.
Note that typically in non-finance applications, the regressors are
deterministic and controlled by the experimenter 
(giving rise to the term `design matrix'). In this case, it is
assumed that \sfacsig estimates the population analogue, \pfacsig,
without error, though some work has been done for the case of
`random explanatory variables.' \cite{shieh2005}

The four test statistics for the MGLH are:
\begin{align}
\label{eqn:sHLT_def}
\mbox{Hotelling-Lawley trace:}\quad 
\sHLT &\defeq \trace{\minvAB{\sMGLHE}{\sMGLHH}} = 
\trace{\eye[\MGLHa] + \minvAB{\sMGLHE}{\sMGLHH}} - \MGLHa,&\\
\label{eqn:sPBT_def}
\mbox{Pillai-Bartlett trace:}\quad 
\sPBT &\defeq \trace{\minvParens{\eye[\MGLHa] + \minvAB{\sMGLHE}{\sMGLHH}}},&\\
\label{eqn:sWILK_def}
\mbox{Wilk's LRT:}\quad 
\sWILK &\defeq \det{\minvParens{\eye[\MGLHa] + \minvAB{\sMGLHE}{\sMGLHH}}},&\\
\label{eqn:sRLR_def}
\mbox{Roy's largest root:}\quad 
\sRLR &\defeq \fmax{\feig{\minvAB{\sMGLHE}{\sMGLHH}}},&\\
&= \fmax{\feig{\eye[\MGLHa] + \minvAB{\sMGLHE}{\sMGLHH}}} - 1.&\nonumber
\end{align}
Of these four, Roy's largest root has historically been the least
well understood.  \cite{johnstone2009} Each of these can
be described as some function of the eigenvalues of
the matrix 
$\eye[\MGLHa] + \minvAB{\sMGLHE}{\sMGLHH}$.
Under the null hypothesis, \Hyp[0], the matrix \sMGLHH `should' be
all zeros, in a sense that will be made precise later, and thus
the Hotelling Lawley trace and Roy's largest root `should' equal zero, 
the Pillai Bartlett trace `should' equal \MGLHa, and Wilk's LRT
`should' equal $1$.

One can describe the MGLH tests statistics in terms of the
asymptotic expansions of the matrix \pvsm[\sfactsym] given in the
previous sections.  As in 
\subsecref{cond_ret_het}, let 
\begin{equation}
\avreti[i+1] \defeq \asvec{\tr{\vfact[i]},\tr{\vreti[i+1]}}.
\end{equation}
The second moment of \avreti is
\begin{equation}
\pvsm[\sfactsym] \defeq \E{\ogram{\avreti}} = 
	\twobytwo{\pfacsig}{\pfacsig\tr{\pRegco}}{\pRegco\pfacsig}{\pvsig +
\qoform{\pfacsig}{\pRegco}}.
\label{eqn:cond_pvsm_def_II}
\end{equation}

We can express the MGLH statistics in terms of the product of two matrices
defined in terms of \pvsm[\sfactsym].
Let \mglhM be the 
\bby{\wrapParens{\nfac+\nlatf}}{\wrapParens{\MGLHc+\MGLHa}} matrix
\begin{equation*}
\mglhM \defeq \twobytwossym{\eye[\nfac]}{0}{\MGLHA}.
\end{equation*}
Linear algebra confirms that
\begin{equation}
\minvParens{\qform{\pvsm[\sfactsym]}{\mglhM}} 
= \twobytwo{\minv{\sfacsig} + 
\qiform{\wrapParens{\qoform{\svsig}{\MGLHA}}}{\wrapParens{\MGLHA\sRegco}}}{%
-\tr{\sRegco}\tr{\MGLHA}\minvParens{\qoform{\svsig}{\MGLHA}}}{%
-\minvAB{\wrapParens{\qoform{\svsig}{\MGLHA}}}{\MGLHA\sRegco}}{%
\minvParens{\qoform{\svsig}{\MGLHA}}}.
\label{eqn:mglh_magic_inversion}
\end{equation}
Thus
\begin{multline}
\label{eqn:mglhR_def}
\mglhR\defeq
\onebytwo{\tr{\MGLHC}}{\tr{\MGLHT}} \minvParens{\qform{\pvsm[\sfactsym]}{\mglhM}} 
\twobyone{\MGLHC}{\MGLHT}
=\\
\qiform{\sfacsig}{\MGLHC} +
\qiform{\wrapParens{\qoform{\svsig}{\MGLHA}}}{\wrapParens{\MGLHA\sRegco\MGLHC -
\MGLHT}}.
\end{multline}
Now define 
\begin{equation}
\label{eqn:mglhL_def}
\mglhL\defeq
\minvParens{\qiform{\wrapParens{\onebytwo{\tr{\eye[\nfac]}}{\tr{\mzero[\nlatf]}}
\pvsm[\sfactsym] \twobyone{\eye[\nfac]}{\mzero[\nlatf]}}}{\MGLHC}} =
\minvParens{\qiform{\sfacsig}{\MGLHC}}.
\end{equation}

Thus
\begin{equation}
\mglhL\mglhR = 
\eye[\MGLHc] + 
\minvParens{\qiform{\sfacsig}{\MGLHC}}\qiform{\wrapParens{\qoform{\svsig}{\MGLHA}}}{\wrapParens{\MGLHA\sRegco\MGLHC - \MGLHT}}.
\end{equation}
This matrix is `morally equivalent'\footnote{To quote my advisor, Noel
Walkington.} to the matrix $\fvech{\eye[\MGLHa] +
\minvAB{\sMGLHE}{\sMGLHH}}$, in that they have the same eigenvalues.
This holds because $\feig{\Mtx{A}\Mtx{B}} = \feig{\Mtx{B}\Mtx{A}}$. 
\cite[equation (280)]{petersen2012matrix} Taking into account that they
have diferent sizes (one is \sbby{\MGLHa}, the other \sbby{\MGLHc}),
the MGLH statistics can be expressed as:
\begin{align*}
\sHLT &= \trace{\mglhL\mglhR} - \MGLHc,&\quad
\sPBT &= \trace{\minvParens{\mglhL\mglhR}} + \MGLHa - \MGLHc,&\\
\sWILK &= \det{\minvParens{\mglhL\mglhR}},&\quad
\sRLR &= \fmax{\feig{\mglhL\mglhR}} - 1.&
\end{align*}

To find the asymptotic distribution of $\mglhL\mglhR$, 
and of the MGLH test statistics,
the derivatives of the matrices above with respect to 
\pvsm[\sfactsym] need to be found. In practice this would
certainly be better achieved through automatic 
differentation.  \cite{rall1981automatic}
For concreteness, however, the derivatives are given here.

It must also be noted that the
straightforward application of the delta method results in asymptotic
\emph{normal} approximations for the MGLH statistics. By their very
nature, however, these statistics look much more like (non-central)
Chi-square or F statistics.  \cite{obrienshieh} 
Further study is warranted on this matter, perhaps using Hall's approach.  \cite{hall1983chisquare}

\begin{lemma}[Some derivatives]
\label{lemma:Z_derivatives}
Define 
\begin{equation}
\begin{split}
\mglhL &\defeq
\minvParens{\qiform{\wrapParens{\onebytwo{\tr{\eye[\nfac]}}{\tr{\mzero[\nlatf]}}
\pvsm[\sfactsym] \twobyone{\eye[\nfac]}{\mzero[\nlatf]}}}{\MGLHC}},\\
\mglhR &\defeq \wrapParens{
\onebytwo{\tr{\MGLHC}}{\tr{\MGLHT}} \minvParens{\qform{\pvsm[\sfactsym]}{\mglhM}} 
\twobyone{\MGLHC}{\MGLHT}}.
\end{split}
\end{equation}
Then
\begin{equation*}
\begin{split}
\dbyd{\mglhL}{\pvsm[\sfactsym]} &= 
\fdinvwrap{\MGLHC}{%
\minv{\wrapParens{\onebytwo{\tr{\eye[\nfac]}}{\tr{\mzero[\nlatf]}}
\pvsm[\sfactsym] \twobyone{\eye[\nfac]}{\mzero[\nlatf]}}}}
\fdinvwrap{\twobyone{\eye[\nfac]}{\mzero[\nlatf]}}{\pvsm[\sfactsym]},\\
\dbyd{\mglhR}{\pvsm[\sfactsym]} &= 
\wrapParens{\AkronA{\onebytwo{\tr{\MGLHC}}{\tr{\MGLHT}}}}
\fdinvwrap{\mglhM}{\pvsm[\sfactsym]},\\
\dbyd{\minv{\mglhL}}{\pvsm[\sfactsym]} &= 
\wrapParens{\AkronA{\tr{\MGLHC}}}
\fdinvwrap{\twobyone{\eye[\nfac]}{\mzero[\nlatf]}}{\pvsm[\sfactsym]},\\
\dbyd{\minv{\mglhR}}{\pvsm[\sfactsym]} &= 
\fdinvwrap{\twobyone{\MGLHC}{\MGLHT}}{%
\minvParens{\qform{\pvsm[\sfactsym]}{\mglhM}}}
\fdinvwrap{\mglhM}{\pvsm[\sfactsym]},
\end{split}
\end{equation*}
where we define
\begin{equation*}
\fdinvwrap{\Mtx{J}}{\Mtx{X}} 
\defeq - \wrapParens{\AkronA{\minvParens{\qform{\Mtx{X}}{\Mtx{J}}}}}\wrapParens{\AkronA{\tr{\Mtx{J}}}}.
\end{equation*}
\end{lemma}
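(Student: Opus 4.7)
The plan is to derive all four derivatives from a single elementary building block, namely the Jacobian of the sandwich-inverse map $\Mtx{X}\mapsto\minvParens{\qform{\Mtx{X}}{\Mtx{J}}}$, and then to assemble the four formulas by the matrix chain rule. The organizing observation is that each of $\mglhL, \mglhR, \minv{\mglhL}, \minv{\mglhR}$ is built from $\pvsm[\sfactsym]$ by composing two operations: a one-sided quadratic form $\Mtx{X}\mapsto \qform{\Mtx{X}}{\Mtx{J}}$ for various $\Mtx{J}$, and a matrix inversion.

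First I would establish the core identity. Using the vec-trick $\fvec{\Mtx{A}\Mtx{B}\Mtx{C}}=(\tr{\Mtx{C}}\kron\Mtx{A})\fvec{\Mtx{B}}$ one reads off $\dbyd{\qform{\Mtx{X}}{\Mtx{J}}}{\Mtx{X}} = \AkronA{\tr{\Mtx{J}}}$. Combining this with \lemmaref{deriv_vech_matrix_inverse} (in its full-vec form $\dbyd{\minv{\Mtx{Y}}}{\Mtx{Y}} = -\AkronA{\minv{\Mtx{Y}}}$) via the chain rule gives
\begin{equation*}
\dbyd{\minvParens{\qform{\Mtx{X}}{\Mtx{J}}}}{\Mtx{X}}
= -\AkronA{\minvParens{\qform{\Mtx{X}}{\Mtx{J}}}}\AkronA{\tr{\Mtx{J}}}
= \fdinvwrap{\Mtx{J}}{\Mtx{X}}.
\end{equation*}
Once this atomic step is in hand, everything else is chain-rule composition.

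Second, I would peel each target expression. Setting $\Mtx{J}_1=\twobyone{\eye[\nfac]}{\mzero[\nlatf]}$, so that the top-left block of $\pvsm[\sfactsym]$ equals $\qform{\pvsm[\sfactsym]}{\Mtx{J}_1}$, and $\Mtx{W}=\twobyone{\MGLHC}{\MGLHT}$, one reads off $\minv{\mglhL} = \qform{\minvParens{\qform{\pvsm[\sfactsym]}{\Mtx{J}_1}}}{\MGLHC}$ and $\mglhR = \qform{\minvParens{\qform{\pvsm[\sfactsym]}{\mglhM}}}{\Mtx{W}}$, while $\mglhL$ and $\minv{\mglhR}$ simply add one outer inversion. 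Applying the chain rule with the core identity, together with the trivial Jacobian $\dbyd{\qform{\Mtx{A}}{\Mtx{W}}}{\Mtx{A}}=\AkronA{\tr{\Mtx{W}}}$, yields each of the four stated formulas as products of $\fdinvwrap{\cdot}{\cdot}$ factors (with the extra $\AkronA{\onebytwo{\tr{\MGLHC}}{\tr{\MGLHT}}}$ piece in the $\mglhR$ case).

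The main obstacle is bookkeeping: for $\mglhL$, the outer $\fdinvwrap$ factor must be evaluated at the point $\minvParens{\qform{\pvsm[\sfactsym]}{\Mtx{J}_1}}$, and one has to verify that $\minvParens{\qform{\minvParens{\qform{\pvsm[\sfactsym]}{\Mtx{J}_1}}}{\MGLHC}}$ collapses back to $\mglhL$ itself, so that the composite expression matches the statement verbatim. Beyond this sanity check and the ritual tracking of Kronecker-product order in each chain-rule step, the argument is purely mechanical, which is exactly why the lemma flags automatic differentiation as the practical tool.
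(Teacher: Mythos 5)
Your proposal is correct and follows essentially the same route as the paper: the ``core identity'' you derive is precisely \eqnref{mtx_qinv_rule} of \lemmaref{more_misc_derivs} (established there by the same vec-trick plus \lemmaref{deriv_vech_matrix_inverse}), and the paper's proof likewise assembles the four formulas from that rule by the chain rule. Your explicit peeling of each expression into nested sandwich-inverses, including the check that the outer $\fdinvwrap{\MGLHC}{\cdot}$ factor is evaluated at $\minvParens{\qform{\pvsm[\sfactsym]}{\Mtx{J}_1}}$, just spells out the bookkeeping the paper leaves implicit.
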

\begin{proof}
These follow from \lemmaref{more_misc_derivs} and the chain rule.
\end{proof}
\providecommand{\Qderiv}[1]{\MtxUL{Q}{}{#1}}

\begin{lemma}[MGLH derivatives]
\label{lemma:MGLH_derivatives}
Define the population analogues of the MGLH statistics as
\begin{align}
\pHLT &\defeq \trace{\eye[\MGLHa] + \minvAB{\pMGLHE}{\pMGLHH}} - \MGLHa,&
\pPBT &\defeq \trace{\minvParens{\eye[\MGLHa] + \minvAB{\pMGLHE}{\pMGLHH}}},&\\
\pWILK &\defeq \det{\minvParens{\eye[\MGLHa] + \minvAB{\pMGLHE}{\pMGLHH}}},&
\pRLR &\defeq \fmax{\feig{\eye[\MGLHa] + \minvAB{\pMGLHE}{\pMGLHH}}} - 1,&
\end{align}
where 
$$
\pMGLHH \defeq
\qoiform{\wrapParens{\qiform{\pfacsig}{\MGLHC}}}{\wrapParens{\MGLHA\pRegco\MGLHC - \MGLHT}},
\quad
\pMGLHE \defeq \qform{\pvsig}{\MGLHA}.
$$

Define $\pvsm[\sfactsym ]$ as in \eqnref{cond_pvsm_def_II}. Let 
$$\Qderiv{\pHLT}\defeq\dbyd{\pHLT}{{\pvsm[\sfactsym ]}},$$
and similarly define
$\Qderiv{\pPBT}, \Qderiv{\pWILK}, \Qderiv{\pRLR}.$
Then
\begin{align*}
\Qderiv{\pHLT} &=
\tr{\fvec{\tr{\mglhL}}}\dbyd{\mglhR}{\pvsm[\sfactsym]} + 
\tr{\fvec{\mglhR}}\dbyd{\tr{\mglhL}}{\pvsm[\sfactsym]},&\\
\Qderiv{\pPBT} &= 
\tr{\fvec{\trminv{\mglhL}}}\dbyd{\minv{\mglhR}}{\pvsm[\sfactsym]} + 
\tr{\fvec{\minv{\mglhR}}}\dbyd{\trminv{\mglhL}}{\pvsm[\sfactsym]},&\\
\Qderiv{\pWILK} &= 
\det{\mglhL\mglhR}^{-1}
\wrapParens{ \tr{\fvec{\trminv{\mglhL}}}\dbyd{\mglhL}{\pvsm[\sfactsym]} +
\tr{\fvec{\trminv{\mglhR}}}\dbyd{\mglhR}{\pvsm[\sfactsym]}},&\\
\Qderiv{\pRLR} &= 
\wrapParens{\AkronA{\tr{\eigvec[1]}}}
\wrapBracks{%
\wrapParens{\eye\kron\mglhL}\dbyd{\mglhR}{\pvsm[\sfactsym]} +
\wrapParens{\tr{\mglhR}\kron\eye}\dbyd{\mglhL}{\pvsm[\sfactsym]}},&
\end{align*}
where \eigvec[1] is the leading eigenvector of $\mglhL\mglhR$, normalized
by $\gram{\eigvec[1]} = 1$, and where
where \mglhR and \mglhL are defined in \eqnref{mglhR_def} and
\eqnref{mglhL_def}, and their derivatives with respect to 
$\pvsm[\sfactsym]$ are given in \lemmaref{Z_derivatives}.
\end{lemma}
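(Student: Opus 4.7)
The plan is to derive each of the four derivatives by product/chain rule and then invoke \lemmaref{Z_derivatives} to substitute for $\dbyd{\mglhL}{\pvsm[\sfactsym]}$, $\dbyd{\mglhR}{\pvsm[\sfactsym]}$, and their inverse and transpose variants. The common engine is the vectorization identity $\trace{\Mtx{X}\,d\Mtx{Y}} = \tr{\fvec{\tr{\Mtx{X}}}}\fvec{d\Mtx{Y}}$, paired with its dual $\trace{\Mtx{X}\,d\Mtx{Y}} = \trace{\tr{\Mtx{X}}\,d\tr{\Mtx{Y}}}$ obtained by transposing inside the trace. This dual lets me flip between $\dbyd{\mglhL}{\pvsm[\sfactsym]}$ and $\dbyd{\tr{\mglhL}}{\pvsm[\sfactsym]}$ at will, which is needed because the stated formulas mix the two.

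For the Hotelling--Lawley trace, I start from $\pHLT = \trace{\mglhL\mglhR} - \MGLHc$, already established in the section from the fact that $\mglhL\mglhR$ and $\eye[\MGLHa]+\minv{\pMGLHE}\pMGLHH$ share nonzero eigenvalues. The product rule gives $d\trace{\mglhL\mglhR} = \trace{\mglhR\,d\mglhL} + \trace{\mglhL\,d\mglhR}$; vectorizing the second summand directly and the first via the transpose dual yields $\tr{\fvec{\tr{\mglhL}}}\fvec{d\mglhR} + \tr{\fvec{\mglhR}}\fvec{d\tr{\mglhL}}$, matching the claim. The Pillai--Bartlett trace is the same computation applied to $\trace{\minv{(\mglhL\mglhR)}} = \trace{\minv{\mglhR}\minv{\mglhL}}$, using that $\tr{(\tr{\mglhR})^{-1}} = \minv{\mglhR}$.

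For Wilks I take logs to get $\log\pWILK = -\log\det{\mglhL} - \log\det{\mglhR}$, apply $d\log\det{\Mtx{A}} = \trace{\minv{\Mtx{A}}\,d\Mtx{A}}$ to each summand, multiply through by $\pWILK = \det{\mglhL\mglhR}^{-1}$, and vectorize. For Roy's largest root I use the standard first-order eigenvalue perturbation identity $d\lambda_1 = \tr{\eigvec[1]}\,d(\mglhL\mglhR)\,\eigvec[1]$; this is legitimate here because $\mglhL$ is positive definite and $\mglhR$ is symmetric, so $\mglhL\mglhR$ is similar to a symmetric matrix and has a real, generically simple, leading eigenvalue with matching left and right eigenvectors after the $\gram{\eigvec[1]}=1$ normalization. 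Expanding with the product rule produces two scalar quadratic forms, and the identity $\fvec{\Mtx{A}\Mtx{B}\Mtx{C}} = (\tr{\Mtx{C}}\kron\Mtx{A})\fvec{\Mtx{B}}$ rewrites them as $(\AkronA{\tr{\eigvec[1]}})(\tr{\mglhR}\kron\eye)\fvec{d\mglhL}$ and $(\AkronA{\tr{\eigvec[1]}})(\eye\kron\mglhL)\fvec{d\mglhR}$, respectively, matching the lemma.

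In every case the chain rule closes the computation: post-compose the row-vector coefficients above with $\dbyd{\mglhL}{\pvsm[\sfactsym]}$, $\dbyd{\tr{\mglhL}}{\pvsm[\sfactsym]}$, $\dbyd{\minv{\mglhR}}{\pvsm[\sfactsym]}$, and their cousins supplied by \lemmaref{Z_derivatives}. The main obstacle is clerical: tracking which factor is transposed versus inverted and aligning the prefactors on the left with the derivative matrices on the right, especially in the HLT and PBT formulas where the second summand genuinely requires $\dbyd{\tr{\mglhL}}{\pvsm[\sfactsym]}$ rather than $\dbyd{\mglhL}{\pvsm[\sfactsym]}$. Roy's largest root carries the extra caveat that $\lambda_1$ must be simple; at coalescence of the top two eigenvalues the stated derivative fails and would need a one-sided or Clarke interpretation.
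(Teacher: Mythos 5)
Your overall route is the paper's own: the paper's proof is a one-line citation of \eqnref{mtx_trace_deriv}, \eqnref{mtx_inv_prod_det_deriv} and \eqnref{mtx_eig_deriv} of \lemmaref{more_misc_derivs}, which encode precisely the trace product rule, the log-determinant computation, and the first-order eigenvalue perturbation that you carry out by hand, followed by the same chain-rule composition with \lemmaref{Z_derivatives}. The \pHLT and \pPBT cases are fine. Two of your ``it matches'' claims, however, do not survive a careful execution of your own method.

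First, Wilks: starting from $\log\pWILK = -\log\det{\mglhL}-\log\det{\mglhR}$ and multiplying through by $\pWILK = \det{\mglhL\mglhR}^{-1}$ yields
$\Qderiv{\pWILK} = -\det{\mglhL\mglhR}^{-1}\wrapParens{\tr{\fvec{\trminv{\mglhL}}}\dbyd{\mglhL}{\pvsm[\sfactsym]} + \tr{\fvec{\trminv{\mglhR}}}\dbyd{\mglhR}{\pvsm[\sfactsym]}}$,
with a leading minus sign that the stated formula (and \eqnref{mtx_inv_prod_det_deriv}, from which it is copied) lacks; a scalar check ($\dbyd{(xy)^{-1}}{x} = -(xy)^{-1}x^{-1}$) confirms the sign. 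Your derivation is the correct one here; you should report the discrepancy rather than assert a match. Second, Roy's largest root: your justification that $\mglhL\mglhR$ has ``matching left and right eigenvectors'' because it is similar to a symmetric matrix is false. If $\mglhL\mglhR\eigvec[1]=\lambda_1\eigvec[1]$, the corresponding left eigenvector is proportional to $\minv{\mglhL}\eigvec[1]$, which coincides with $\eigvec[1]$ only when $\eigvec[1]$ is also an eigenvector of \mglhL; the correct perturbation identity for a nonsymmetric matrix is $d\lambda_1 = \tr{\vect{u}}\,d\wrapParens{\mglhL\mglhR}\,\eigvec[1]/\wrapParens{\tr{\vect{u}}\eigvec[1]}$ with $\vect{u}\propto\minv{\mglhL}\eigvec[1]$, not $\tr{\eigvec[1]}\,d\wrapParens{\mglhL\mglhR}\,\eigvec[1]$. (This mirrors a looseness in the paper itself: \eqnref{mtx_eig_deriv} is proved only for symmetric \Mtx{X} yet is applied to the nonsymmetric product $\mglhL\mglhR$.) Your caveat about simplicity of $\lambda_1$ is well taken, but it is not the main obstruction in that step.
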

\begin{proof}
These follow from the definition of the MGLH statistics, and 
Equations\nobreakspace\ref{eqn:mtx_trace_deriv}, 
\ref{eqn:mtx_inv_prod_det_deriv}, and \ref{eqn:mtx_eig_deriv}
of \lemmaref{more_misc_derivs}.
\end{proof}
\begin{theorem}
\label{theorem:mglh_distribution}
Let $\svsm[\sfactsym ] \defeq \oneby{\ssiz}\sum_i \ogram{\avreti[i+1]}$,
based on \ssiz \iid samples of \asvec{\tr{\vfact[{}]},\tr{\vreti}},
where 
$$
\avreti[i+1] \defeq \asvec{\tr{\vfact[i]},\tr{\vreti[i+1]}}.
$$
Let \pvvar be the variance of $\fvech{\ogram{\avreti}}$.
Define \sMGLHH and \sMGLHE as in \eqnref{mglh_HE_def}, for
given \bby{\MGLHa}{\nlatf} matrix \MGLHA, 
\bby{\nfac}{\MGLHc} matrix \MGLHC,
and \bby{\MGLHa}{\MGLHc} matrix \MGLHT. 

Define the MGLH test statistics, \sHLT, \sPBT, \sWILK, and
\sRLR as in Equations\nobreakspace\ref{eqn:sHLT_def} through
\ref{eqn:sRLR_def}, and let \pHLT, \pPBT, \pWILK and \pRLR be
their population analogues. 

Then, asymptotically in \ssiz, 
\begin{align*}
\sqrt{\ssiz}\wrapParens{\sHLT - \pHLT}
&\rightsquigarrow 
\normlaw{0,\qoform{\pvvar}{\wrapParens{\Qderiv{\pHLT}\Dupp}}},&\\
\sqrt{\ssiz}\wrapParens{\sPBT - \pPBT}
&\rightsquigarrow 
\normlaw{0,\qoform{\pvvar}{\wrapParens{\Qderiv{\pPBT}\Dupp}}},&\\
\sqrt{\ssiz}\wrapParens{\sWILK - \pWILK}
&\rightsquigarrow 
\normlaw{0,\qoform{\pvvar}{\wrapParens{\Qderiv{\pWILK}\Dupp}}},&\\
\sqrt{\ssiz}\wrapParens{\sRLR - \pRLR}
&\rightsquigarrow 
\normlaw{0,\qoform{\pvvar}{\wrapParens{\Qderiv{\pRLR}\Dupp}}},&
\end{align*}
where \Qderiv{\pHLT}, \Qderiv{\pPBT}, \Qderiv{\pWILK}, \Qderiv{\pRLR} are
given in \lemmaref{MGLH_derivatives}.
Furthermore, we may replace \pvvar in this equation with an asymptotically
consistent estimator, \svvar.
\end{theorem}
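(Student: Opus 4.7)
The plan is a direct application of the multivariate central limit theorem to $\fvech{\svsm[\sfactsym]}$ followed by the delta method, in the same style as the proof of \theoremref{inv_distribution}. First I would invoke the multivariate CLT on the \iid vectors $\fvech{\ogram{\avreti[i+1]}}$ to obtain
\begin{equation*}
\sqrt{\ssiz}\wrapParens{\fvech{\svsm[\sfactsym]} - \fvech{\pvsm[\sfactsym]}} \rightsquigarrow \normlaw{0,\pvvar}.
\end{equation*}

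Next I would observe that each of $\sHLT$, $\sPBT$, $\sWILK$, and $\sRLR$ has been written as a scalar function of the matrix product $\mglhL\mglhR$, and that both $\mglhL$ and $\mglhR$ are smooth functions of $\pvsm[\sfactsym]$ through \eqnref{mglhL_def} and \eqnref{mglhR_def} together with the fixed matrices $\MGLHA$, $\MGLHC$, $\MGLHT$. Substituting $\svsm[\sfactsym]$ into those formulas recovers the sample test statistics, so the delta method applies and produces an asymptotic normal law whose variance is the quadratic form of $\pvvar$ with the Jacobian of the statistic taken with respect to $\fvech{\pvsm[\sfactsym]}$.

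The vec-form Jacobians $\Qderiv{\pHLT}$, $\Qderiv{\pPBT}$, $\Qderiv{\pWILK}$, $\Qderiv{\pRLR}$ are already supplied by \lemmaref{MGLH_derivatives}. To convert each to vech-form I would apply the identity $\fvec{\pvsm[\sfactsym]} = \Dupp\fvech{\pvsm[\sfactsym]}$, valid because $\pvsm[\sfactsym]$ is symmetric; by the chain rule each row-vector Jacobian picks up a factor of $\Dupp$ on the right, producing precisely the quadratic forms $\qoform{\pvvar}{\wrapParens{\Qderiv{\pHLT}\Dupp}}$ and their analogues stated in the theorem. Consistency of $\svvar$ combined with Slutsky's theorem then justifies replacing $\pvvar$ by $\svvar$ in the asymptotic variance.

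The hard part will be Roy's largest root: the map $\Mtx{A}\mapsto\fmax{\feig{\Mtx{A}}}$ fails to be differentiable when the leading eigenvalue is not simple, so the delta-method step only goes through under a genericity assumption that the top eigenvalue of $\mglhL\mglhR$ at the population value is isolated. Under that assumption the derivative formula involving the unit leading eigenvector $\eigvec[1]$ in \lemmaref{MGLH_derivatives} is valid and the argument concludes as above. For the other three statistics the underlying scalar maps (trace, trace of inverse, determinant) are smooth wherever $\mglhL\mglhR$ is invertible, a condition guaranteed by the full-rank hypotheses on $\MGLHA$ and $\MGLHC$ together with invertibility of $\pvsig$ and $\pfacsig$.
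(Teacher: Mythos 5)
Your proposal matches the paper's own proof, which simply states that the result follows from the delta method and \lemmaref{MGLH_derivatives}; your CLT-plus-delta-method argument, with the $\Dupp$ chain-rule factor converting the vec-form Jacobians to vech-form, is exactly the intended reasoning. Your added caveat that the leading eigenvalue of $\mglhL\mglhR$ must be simple for the Roy's-largest-root case is a point the paper glosses over, but it refines rather than changes the approach.
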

\begin{proof}
This follows from the delta method and
\lemmaref{MGLH_derivatives}.
\end{proof}

\section{Examples}

\subsection{Random Data}



Empirically, the marginal Wald test for zero weighting in the 
\txtMP based on the approximation of 
\theoremref{inv_distribution} are nearly
identical to the \tstat-statistics produced by the procedure
of Britten-Jones. \cite{BrittenJones1999} Here $1024$
days of Gaussian returns for $5$ assets with mean zero and
some randomly generated covariance are randomly generated. 
The procedure of Britten-Jones is applied marginally on each
asset. The Wald statistic is also computed via 
\theoremref{inv_distribution} by `plugging in' the sample 
estimate, \svsm, to estimate the standard error. The two test
values for the $5$ assets are presented in 
\tabref{simple_null}, and match very well.

The value of the asymptotic approach is that it admits the
generalizations of \secref{extensions}, and allows robust
estimation of \pvvar.  \cite{Zeileis:2004:JSSOBK:v11i10}

\begin{table}[ht]
\centering
\begin{tabular}{rrrrrr}
  \hline
 & rets1 & rets2 & rets3 & rets4 & rets5 \\ 
  \hline
Britten.Jones & 0.4950 & 0.0479 & 1.2077 & -0.4544 & -1.4636 \\ 
  Wald & 0.4965 & 0.0479 & 1.2107 & -0.4573 & -1.4635 \\ 
   \hline
\end{tabular}
\caption{The \tstat statistics of Britten-Jones \cite{BrittenJones1999} are presented, along with the Wald statistics from plugging in \svsm for \pvsm in \theoremref{inv_distribution}, for 1024 days of Gaussian returns of 5 assets with zero mean. Statistics are presented with 4 significant digits to illustrate the difference in values of the two methods, not because these statistics are worthy of such accuracy.} 
\label{tab:simple_null}
\end{table}

\subsubsection{Normal Returns}
\label{sec:normal_sims}


We test the confidence intervals of \theoremref{mp_snr_ci_elliptical} and \eqnref{eta_form_psnr_mp_II} 
for random data. We draw returns from the multivariate normal distribution, for which $\kurty=1$.
We fix the number of days of data, \ssiz, the number
of assets, \nlatf, and the optimal \txtSNR, \psnropt, and perform 
$\ensuremath{2.5\times 10^{4}}$ simulations. 
We then let \ssiz vary from 100 to $\ensuremath{1.28\times 10^{4}}$ days; 
we let \nlatf vary from 2 to 16;
we let \psnropt vary from 0.5 to 2 in
`annualized' units (per square root year), where we assume
252 days per year. We compute the lower confidence limits on
$\pSNR{\pportw ; \pvsm, 0}$, the \txtSNR of the sample \txtMP
based on the difference and ratio forms from the theorem.
The confidence limtis are computed \emph{very} optimistically, 
by using the \emph{actual} \psnropt in the expressions for the
mean and variance of 
${\pSNR{\minv{\svsm} ; \pvsm, 0} - \ssropt}$ and
${\pSNR{\minv{\svsm} ; \pvsm, 0} / \ssropt}$. 
For the `TAS' form of confidence limit, we use \ssropt once in the
non-centrality parameter, but otherwise use the actual \psnropt when computing
parameters.
Thus while this does not test the confidence limits in the way they would 
be practically used (\eg \eqnref{psnr_ci_practical_elliptical}), the
results are sufficiently discouraging even with this bit of clairvoyance
to recommend against their general use.


We compute the lower 0.05 confidence limit based on the difference
and ratio forms and the `TAS' transform. 
We then compute the empirical type I rate. 
These are plotted against \ssiz in \figref{badci_ci_plots}.
We show facet columns for \psnropt, and facet rows for \nlatf. The confidence
intervals fail to achieve nominal coverage except perhaps for the largest
values of \ssiz, though these are much larger than would be used in practice.

\begin{knitrout}\small
\definecolor{shadecolor}{rgb}{0.969, 0.969, 0.969}\color{fgcolor}\begin{figure}
\includegraphics[width=\maxwidth]{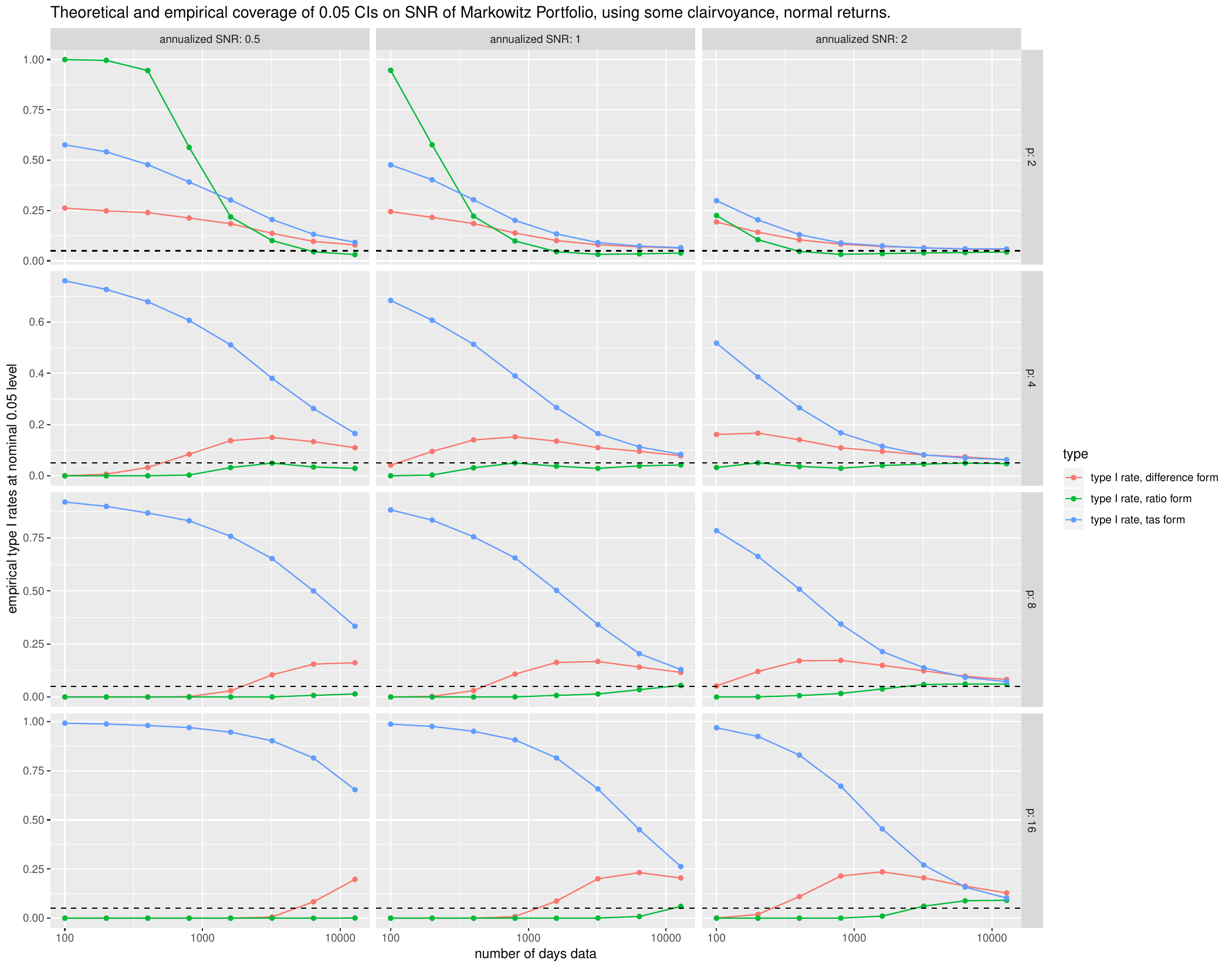} \caption{The empirical type I rate of three different one-sided confidence intervals for $\pSNR{\minv{\svsm} ; \pvsm, 0}$, the \txtSNR of the \txtMP are shown, where the nominal type I rate is $0.05$. The daily returns are drawn from multivariate normal distribution with varying \psnropt, \ssiz, and \nlatf.  The confidence intervals generally fail to achieve the nominal rate except for unrealistically large values of \ssiz.}\label{fig:badci_ci_plots}
\end{figure}

\end{knitrout}



As a check, we also compare the empirical mean of the \txtSNR of the \txtMP
from our experiments
with the theoretical asymptotic value from 
\theoremref{mp_snr_ci_elliptical}, namely
$$
\frac{ \wrapParens{\kurty \psnrsqopt + 1}\wrapParens{1 - \nlatf}}{2\ssiz\psnropt}.
$$
We plot the empirical and theoretical means in 
\figref{badci_mean_mp_snr_plots}, again versus \ssiz with 
facet columns for \psnropt, and facet rows for \nlatf. The theoretical asymptotic
value gives a good approximation for larger sample sizes, but `only' around
6 years of daily data are required. Note that the theoretical value gets
worse as $\ssiz\psnropt \to 0$, as one would expect: the \txtSNR of \emph{any} portfolio
must be no greater than \psnropt in absolute value, but the theoretical value goes to
$-\infty$.

\begin{knitrout}\small
\definecolor{shadecolor}{rgb}{0.969, 0.969, 0.969}\color{fgcolor}\begin{figure}
\includegraphics[width=\maxwidth]{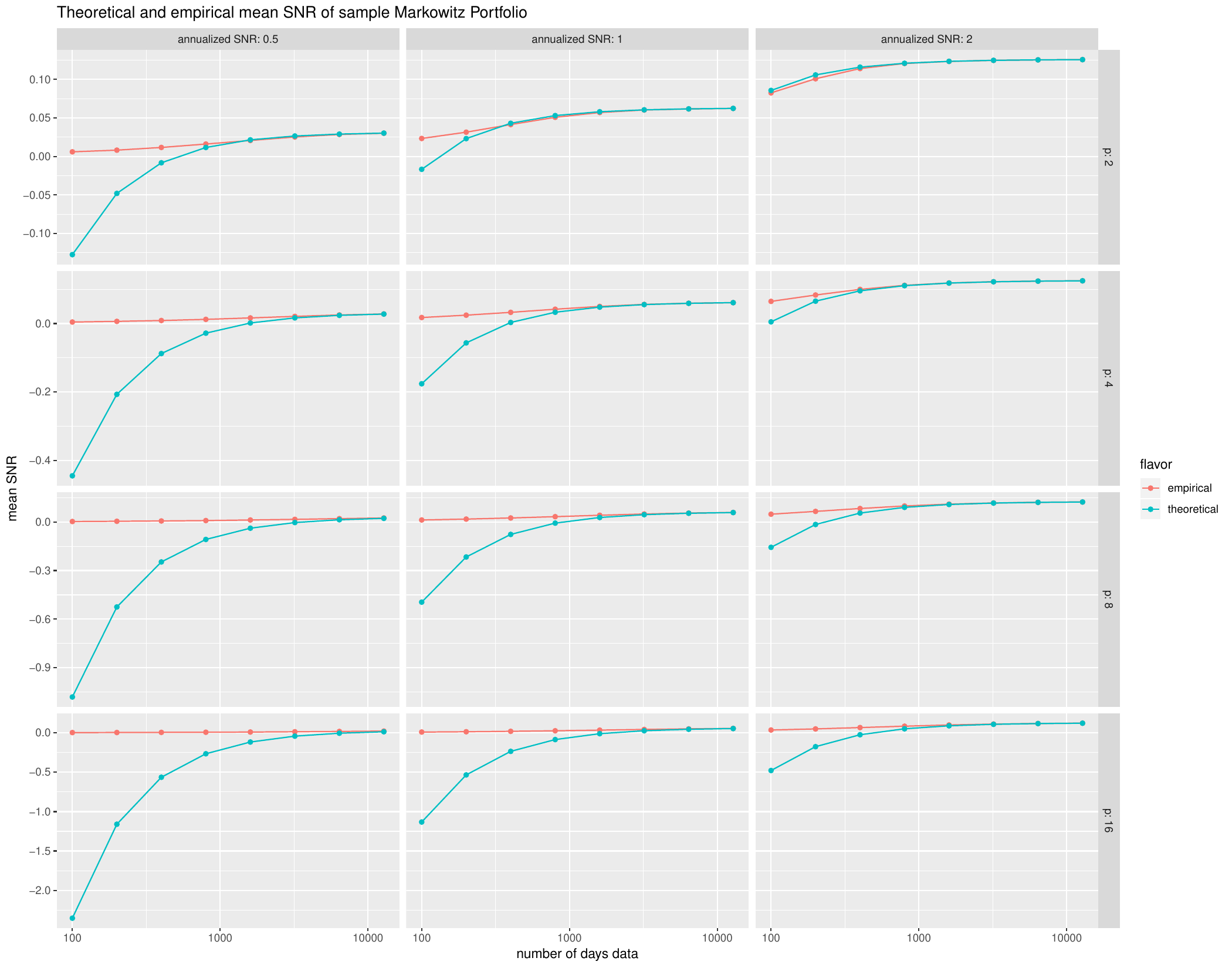} \caption[The empirical and theoretical asymptotic mean value of the \txtSNR of the \txtMP are shown versus sample size, \ssiz, for varying \psnropt and \nlatf]{The empirical and theoretical asymptotic mean value of the \txtSNR of the \txtMP are shown versus sample size, \ssiz, for varying \psnropt and \nlatf.}\label{fig:badci_mean_mp_snr_plots}
\end{figure}

\end{knitrout}

\subsubsection{Elliptical Returns}
\label{sec:ellip_sims}


We now test \eqnref{mvclt_isvsm_qbits_elliptic_explicit} via simulation.
We fix the number of days of data, \ssiz, the number
of assets, \nlatf, and the optimal \txtSNR, \psnropt, and 
the kurtosis factor, \kurty, and perform $\ensuremath{10^{4}}$ simulations. 
We let \ssiz vary from 100 to $1600$ days; 
we let \nlatf vary from 4 to 16;
we let \kurty vary from 1 to 16;
we let \psnropt vary from 0.5 to 2 in
`annualized' units (per square root year), where we assume
252 days per year.  When $\kurty=1$ we draw from a multivariate
normal distribution; when $\kurty > 1$, we draw from a multivariate shifted
$t$ distribution.

For each simulation, we collect the first and last elements of the vector
$$\sqrt{\ssiz} \wrapParens{ {\Mtx{Q}\trchol{\pvsig}\sportwopt} - {\psnropt\basev[1]} } \Mtx{D}^{-1/2}.$$
By \eqnref{mvclt_isvsm_qbits_elliptic_explicit} these should
be asymptotically distributed as a standard normal.
In \figref{markocov_firstv_qq_plots} we give Q-Q plots of the first
element of this vector versus quantiles of the standard normal
for each setting of the simulation parameters. 
Rather than present the full Q-Q plot
(each facet would contain 10,000 points),
we take evenly spaced points between $-4$ and $4$, then convert them
into percentage points of the standard normal. 
We then find the empirical quantiles at those levels and plot the empirical
quantiles against the selected theoretical.
This allows us to also plot the pointwise confidence bands, which should be
very small for the sample size, except at the periphery.
Similarly, 
in \figref{markocov_lastv_qq_plots} we give the same kind of subsampled
Q-Q plots of the last element of this vector.

With some exception, the Q-Q plots show a fair degree of support
for normality when $\ssiz / \nlatf$ is reasonably large. For example,
for $\nlatf=4$, a sample of $\ssiz=400$ is apparently sufficient
to get near normality for the last element of the vector. The first
element of the vector, which one suspects is dependent on \psnropt,
appears to suffer when \psnropt is larger.


\begin{knitrout}\small
\definecolor{shadecolor}{rgb}{0.969, 0.969, 0.969}\color{fgcolor}\begin{figure}
\includegraphics[width=\maxwidth]{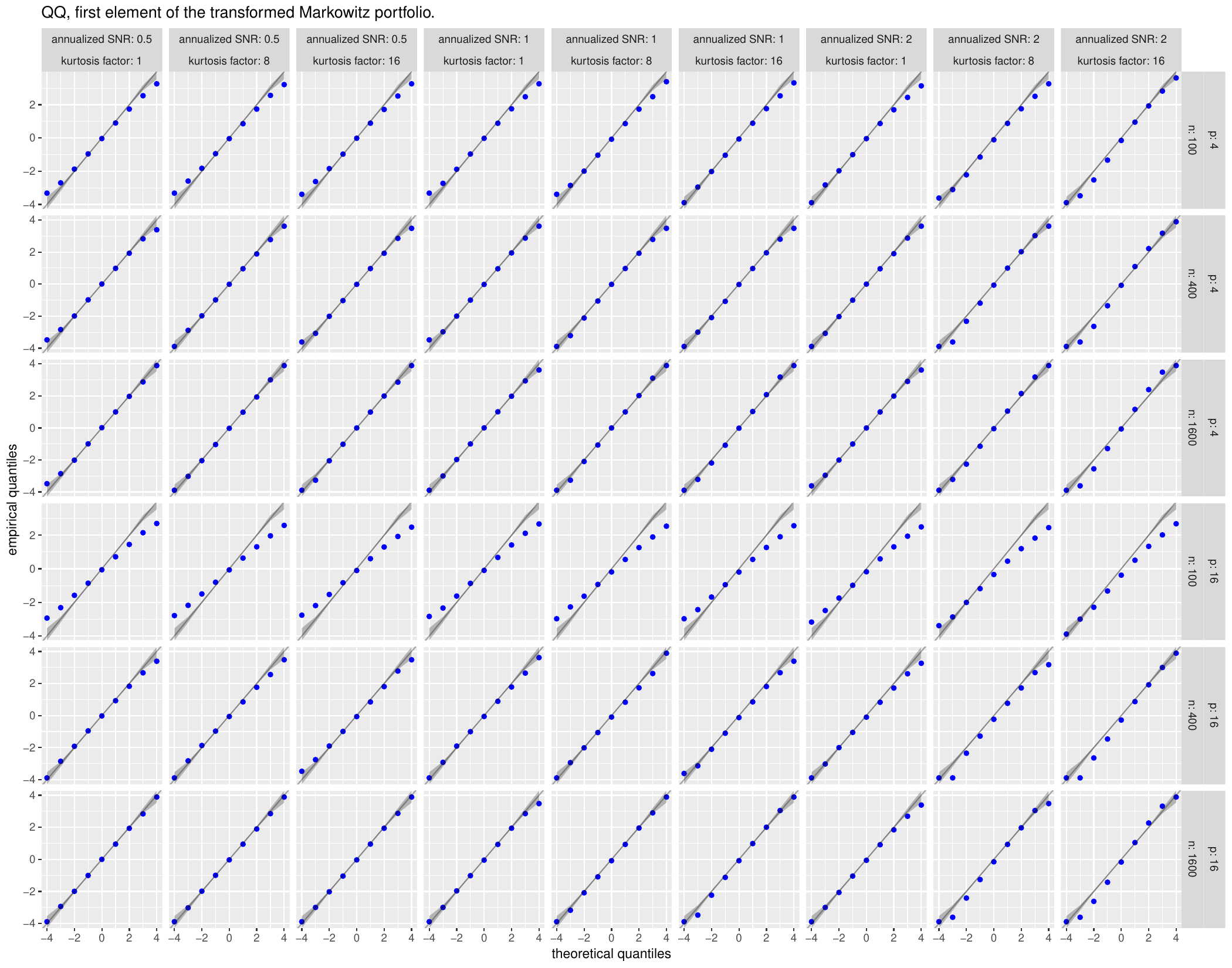} \caption[Q-Q plots of the first element of the transformed \txtMP, transformed to an approximate standard normal, versus normal quantiles are shown for varying \nlatf, \ssiz, \psnropt, and \kurty]{Q-Q plots of the first element of the transformed \txtMP, transformed to an approximate standard normal, versus normal quantiles are shown for varying \nlatf, \ssiz, \psnropt, and \kurty. Evenly spaced theoretical quantiles are used to compute the empirical quantiles, allowing us to plot 0.95\% confidence bands. }\label{fig:markocov_firstv_qq_plots}
\end{figure}

\end{knitrout}

\begin{knitrout}\small
\definecolor{shadecolor}{rgb}{0.969, 0.969, 0.969}\color{fgcolor}\begin{figure}
\includegraphics[width=\maxwidth]{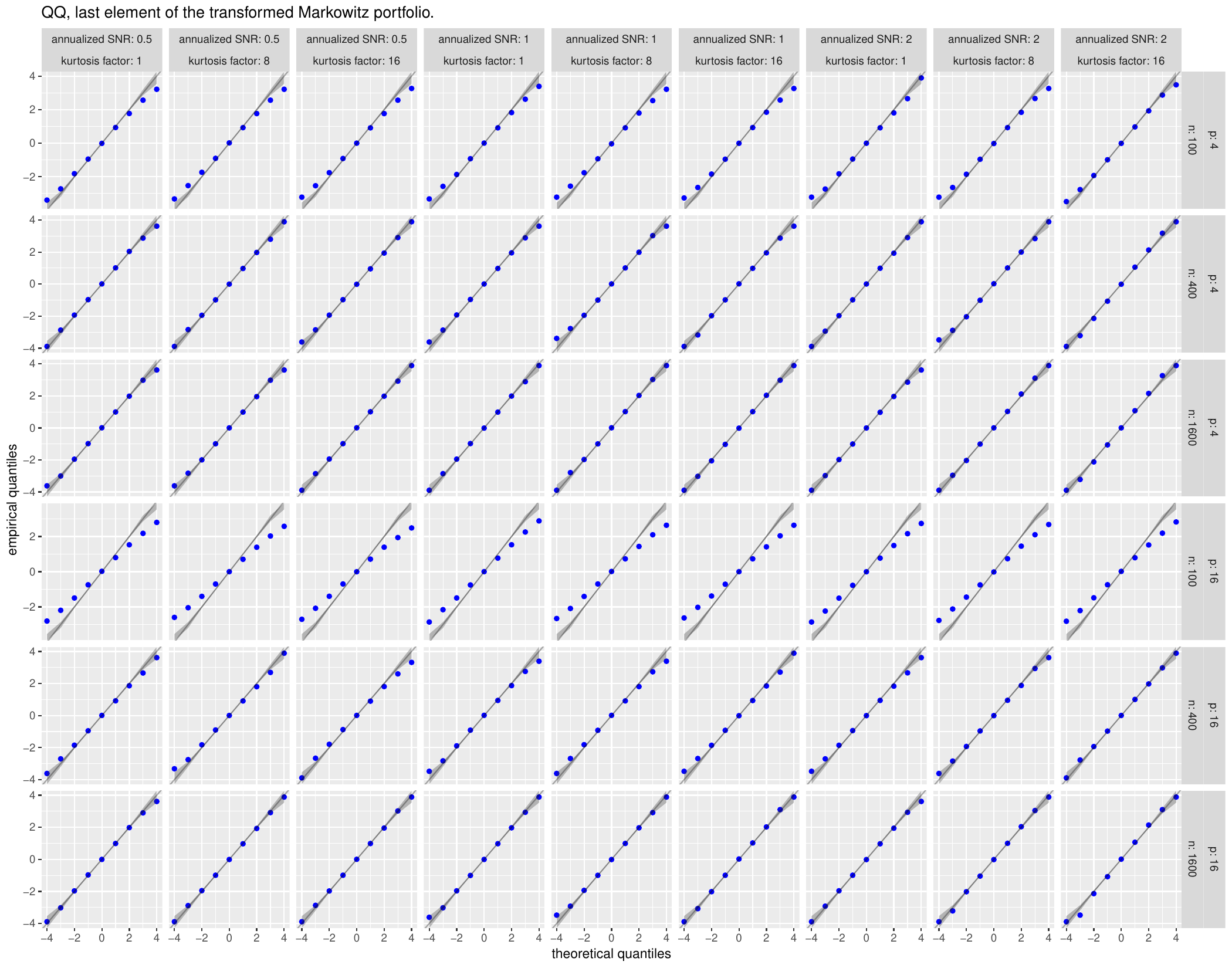} \caption[Q-Q plots of the \emph{last} element of the transformed \txtMP, transformed to an approximate standard normal, versus normal quantiles are shown for varying \nlatf, \ssiz, \psnropt, and \kurty]{Q-Q plots of the \emph{last} element of the transformed \txtMP, transformed to an approximate standard normal, versus normal quantiles are shown for varying \nlatf, \ssiz, \psnropt, and \kurty. Evenly spaced theoretical quantiles are used to compute the empirical quantiles, allowing us to plot 0.95\% confidence bands. }\label{fig:markocov_lastv_qq_plots}
\end{figure}

\end{knitrout}




We also check \eqnref{eta_form_psnr_mp} via a smaller set of simulations.
We fix $\psnropt=2$ in annualized units,
$\nlatf=6$, and $\kurty=4$. We then test two different
sample sizes: $\ssiz=2,520$ and $\ssiz=25,200$, corresponding
to 10 and 100 years of daily data, at a rate of $252$ days per year.
For a given simulation we draw the appropriate number of days of independent
returns from a shifted multivariate $t$ distribution. We compute the sample \txtMP
and compute its \txtSNR, $\pSNR{\sportwopt}$. We perform this simulation
50,000 times for each setting of \ssiz.
We construct theoretical approximate quantiles of
$\pSNR{\sportwopt}$ using \eqnref{eta_form_psnr_mp}, then plot the empirical quantiles
versus the theoretical quantiles in \figref{snr_qq_plots_for_100}.
It appears that more than 10 years of daily data (but fewer than 100 years worth)
are required for this approximation to be any good.
When the approximation breaks down, it tends to underestimate the true \txtSNR.


\begin{knitrout}\small
\definecolor{shadecolor}{rgb}{0.969, 0.969, 0.969}\color{fgcolor}\begin{figure}
\includegraphics[width=\maxwidth]{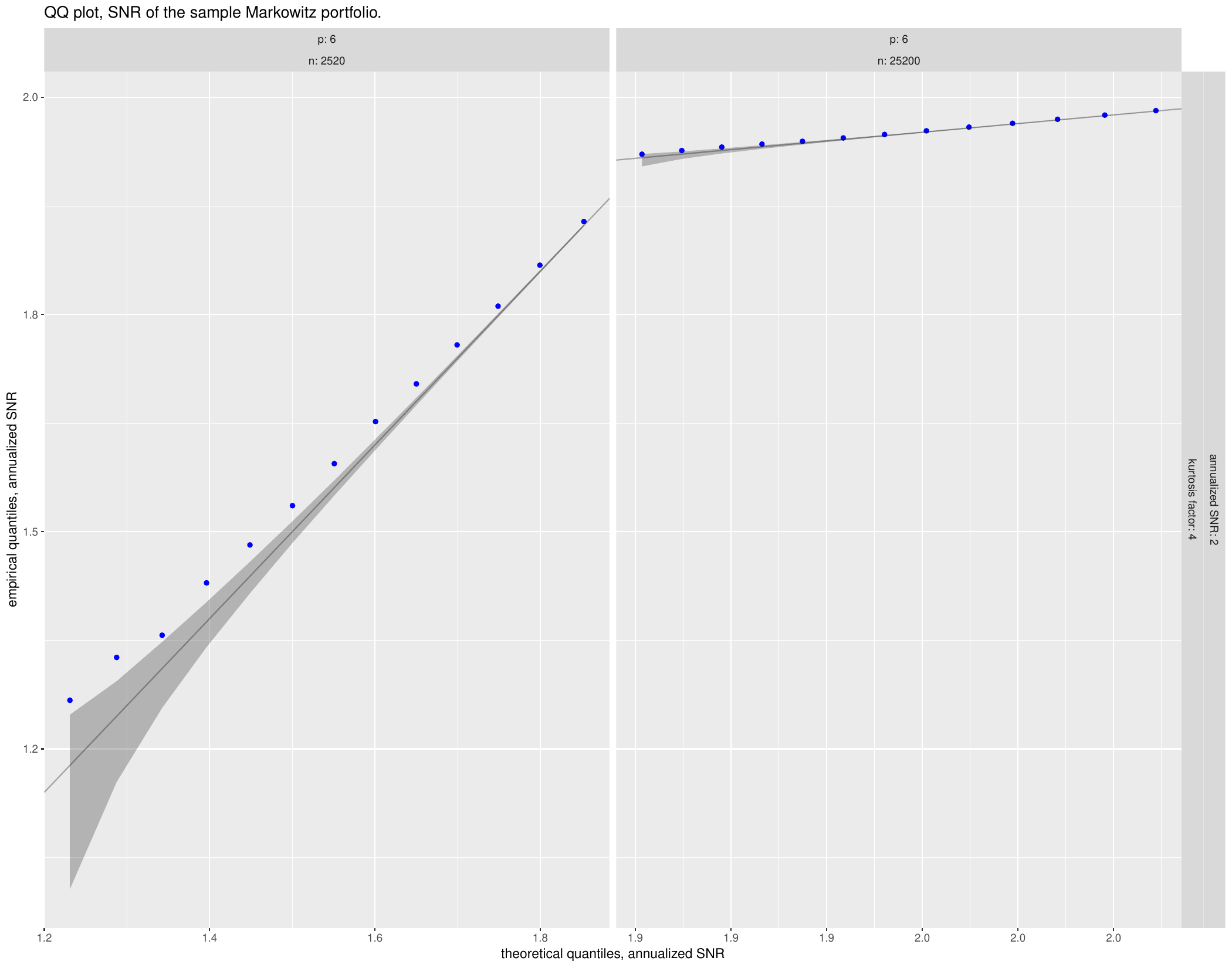} \caption[Q-Q plots of \pSNR{\sportwopt} are given versus the theoretical values from \eqnref{eta_form_psnr_mp}, for $\nlatf=6$, $\kurty=4$, $\psnropt=2$ and for 10 and 100 years of daily data at a rate of 252 days per year]{Q-Q plots of \pSNR{\sportwopt} are given versus the theoretical values from \eqnref{eta_form_psnr_mp}, for $\nlatf=6$, $\kurty=4$, $\psnropt=2$ and for 10 and 100 years of daily data at a rate of 252 days per year. We plot only selected theoretical quantiles and the corresponding empirical quantiles, with 95\% confidence bands. Note that the two facets have different $x$ limits and different aspect ratios. The approximation appears to underestimate the true \txtSNR for small sample sizes. }\label{fig:snr_qq_plots_for_100}
\end{figure}

\end{knitrout}

\subsection{Fama French Three Factor Portfolios}

The monthly returns of the Fama French 3 factor portfolios
from Jul 1926 to Jul 2013
were downloaded from \emph{Quandl}. \cite{Fama_French_1992,Quandl}
The returns excess the risk-free rate (which is also given in
the same data) are computed. The procedure of Britten-Jones
is applied to get marginal \tstat statistics for each of the three
assets.  The marginal Wald statistics are also computed, first using
the vanilla estimator of \pvvar, then using a robust (HAC) estimator
via the errors via the \Rpackage{sandwich}
package.  \cite{Zeileis:2004:JSSOBK:v11i10} 
These are presented in \tabref{simple_ff3}.
The Wald statistics are slightly less optimistic than the
Britten-Jones \tstat-statistics for the long MKT and short SMB
positions. This is amplified when the HAC estimator is used.

\begin{table}[ht]
\centering
\begin{tabular}{rrrr}
  \hline
 & MKT & HML & SMB \\ 
  \hline
Britten.Jones & 4.10 & 0.30 & -1.97 \\ 
   \hline
Wald & 3.86 & 0.31 & -1.92 \\ 
  Wald.HAC & 3.51 & 0.27 & -1.78 \\ 
   \hline
\end{tabular}
\caption{The marginal \tstat statistics of Britten-Jones \cite{BrittenJones1999} procedure, along with the Wald statistics from plugging in \svsm for \pvsm in \theoremref{inv_distribution}, with `vanilla' and HAC estimators of \pvvar, are shown for 1045 months of excess returns of the three Fama-French portfolios.} 
\label{tab:simple_ff3}
\end{table}

\subsubsection{Incorporating conditional heteroskedasticity}

A rolling estimate of general market volatility is computed
by taking the 11 month FIR mean of the median absolute return of 
the three portfolios, delayed by one month. The model
of `constant maximal \txtSR' (\ie \eqnref{cond_model_I}) is
assumed, where \fvola[i] is the inverse
of this estimated volatility. This is equivalent to 
dividing the returns by the estimated volatility, then applying
the unconditional estimator.

The marginal Wald statistics are presented in \tabref{wt_ff3},
and are more `confident' in the long MKT position, and short
SMB position, with little evidence to support a long or short
position in HML.

\begin{table}[ht]
\centering
\begin{tabular}{rrrr}
  \hline
 & MKT & HML & SMB \\ 
  \hline
HAC.wald.wt & 4.51 & 0.62 & -2.93 \\ 
   \hline
\end{tabular}
\caption{The marginal Wald statistics computed from plugging in \svsm for \pvsm in \theoremref{inv_distribution}, with a HAC estimator of \pvvar, are shown for 1045 months of excess returns of the three Fama-French portfolios.  To adjust for heteroskedasticity, returns are divided by a lagged volatility estimate based on 11 previous months of returns of the three portfolios.} 
\label{tab:wt_ff3}
\end{table}


\subsubsection{Conditional expectation}

The Shiller cyclically adjusted price/earnings data (CAPE) are
downloaded from \emph{Quandl}. \cite{Quandl} The CAPE data are
delayed by a month so that they qualify as features which could
be used in the investment decision. That is, we are testing a
model where CAPE are used to predict returns, not 
`explain' them, contemporaneously. The CAPE data are centered
by subtracting the mean. The marginal Wald statistics, 
computed using a HAC estimator for \pvvar, for the
6 elements of the Markowitz coefficient matrix are presented in 
\tabref{wald_cape_ff}, and indicate a significant 
unconditional long MKT position; when CAPE is above the long term 
average value of 17.54, decreasing the position in 
MKT is warranted. 


\begin{table}[ht]
\centering
\begin{tabular}{rrrr}
  \hline
 & MKT & HML & SMB \\ 
  \hline
Intercept & 2.22 & 0.59 & -1.33 \\ 
  CAPE & -2.46 & -1.13 & -0.70 \\ 
   \hline
\end{tabular}
\caption{The marginal Wald statistics of the Markowitz coefficient computed from plugging in \svsm for \pvsm in \theoremref{cond_inv_distribution_II}, with a HAC estimator of \pvvar, are shown for 1045 months of excess returns of the three Fama-French portfolios.  The Shiller CAPE data are delayed by a month, and centered. No adjustments for conditional heteroskedasticity are performed.} 
\label{tab:wald_cape_ff}
\end{table}

\begin{table}[ht]
\centering
\begin{tabular}{rrrr}
  \hline
 & MKT & HML & SMB \\ 
  \hline
Intercept & 3.49 & 0.27 & -1.77 \\ 
  del.CAPE & 1.19 & 0.03 & 2.52 \\ 
   \hline
\end{tabular}
\caption{The marginal Wald statistics of the Markowitz coefficient computed from plugging in \svsm for \pvsm in \theoremref{cond_inv_distribution_II}, with a HAC estimator of \pvvar, are shown for 1045 months of excess returns of the three Fama-French portfolios.  The Shiller CAPE data are delayed by a month, and the first difference is computed. No adjustments for conditional heteroskedasticity are performed.} 
\label{tab:wald_dcape_ff}
\end{table}

The CAPE data change at a very low frequency. It is possible that the
changes in the CAPE data are predictive of future returns. The Wald
statistics of the Markowitz coefficient using the first difference in
monthly CAPE dataa, delayed by a month, are presented in 
\tabref{wald_dcape_ff}. These suggest a long unconditional position
in MKT, with the differences in CAPE providing a `timing' signal for
an unconditional short SMB position.


\subsubsection{Attribution of error}

\theoremref{inv_distribution} gives the asymptotic distribution
of $\minv{\svsm}$, which contains the (negative) \txtMP and the
precision matrix. This allows one to estimate the amount of error
in the \txtMP which is attributable to mis-estimation of the covariance.
The remainder one can attribute to mis-estimation of the mean vector, which,
is typically implicated as the leading effect. \cite{chopra1993effect}

The computation is performed as follows: the estimated covariance of
$\fvech{\minv{\svsm}}$ is turned into a correlation matrix in the usual
way\footnote{That is, by a Hadamard divide of the rank one matrix of the 
outer product of the diagonal. Or, more practically, by the \Rlang function
\Rfunction{cov2cor}.}. This gives a correlation matrix, call it \Mtx{R},
some of the elements of which correspond to the negative \txtMP, and
some to the precision matrix. For a single element of the \txtMP, let 
\vect{r} be a sub-column of \Mtx{R} consisting of the column corresponding
to that element of the \txtMP and the rows of the precision matrix. And
let $\Mtx{R}_{\pvsig}$ be the sub-matrix of \Mtx{R} corresponding to the
precision matrix.  The multiple correlation coefficient is then
$\qform{\Mtx{R}_{\pvsig}^{-1}}{\vect{r}}$. This is an `R-squared' number
between zero and one, estimating the proportion of variance in that
element of the \txtMP `explained' by error in the precision matrix.

\begin{table}[ht]
\centering
\begin{tabular}{rrr}
  \hline
 & vanilla & weighted \\ 
  \hline
MKT & 41 \% & 32 \% \\ 
  HML & 11 \% & 6.8 \% \\ 
  SMB & 29 \% & 13 \% \\ 
   \hline
\end{tabular}
\caption{Estimated multiple correlation coefficients for the three elements of the \txtMP are presented. These are the percentage of error attributable to mis-estimation of the precision matrix.  HAC estimators for \pvvar are used for both.  In the `vanilla' column, no adjustments are made for conditional heteroskedasticity, while in the `weighted' column, returns are divided by the volatility estimate.} 
\label{tab:ffret_multcor}
\end{table}

Here, for each of the members of the vanilla \txtMP on the 
3 assets, this squared
coefficient of multiple correlation, is expressed as percents 
in \tabref{ffret_multcor}. A HAC estimator for \pvvar is used.
In the `weighted' column, the returns are divided by the 
rolling estimate of volatility described above, assuming a 
model of `constant maximal \txtSR'.
We can claim, then, that approximately 41 percent of the 
error in the MKT position is due to mis-estimation of the precision
matrix. 







\nocite{markowitz1952portfolio,markowitz1999early,markowitz2012foundations}
\bibliographystyle{plainnat}
\bibliography{SharpeR,rauto}

\appendix

\section{Matrix Derivatives}

\begin{lemma}[Derivatives]
\label{lemma:more_misc_derivs}
Given conformable, symmetric, matrices \Mtx{X}, \Mtx{Y}, \Mtx{Z},
and constant matrix \Mtx{J}, define
\begin{equation*}
\fdinvwrap{\Mtx{J}}{\Mtx{X}} 
\defeq - \wrapParens{\AkronA{\minvParens{\qform{\Mtx{X}}{\Mtx{J}}}}}\wrapParens{\AkronA{\tr{\Mtx{J}}}}.
\end{equation*}
Then
\begin{align}
\label{eqn:mtx_qinv_rule}
\dbyd{\minvParens{\qform{\Mtx{X}}{\Mtx{J}}}}{\Mtx{Z}} 
&= \fdinvwrap{\Mtx{J}}{\Mtx{X}}\dbyd{\Mtx{X}}{\Mtx{Z}}.&\\
\label{eqn:mtx_prod_rule}
\dbyd{\Mtx{X}\Mtx{Y}}{\Mtx{Z}} &=
\wrapParens{\eye\kron\Mtx{X}}\dbyd{\Mtx{Y}}{\Mtx{Z}} +
\wrapParens{\tr{\Mtx{Y}}\kron\eye}\dbyd{\Mtx{X}}{\Mtx{Z}}.&\\
\label{eqn:mtx_ogram_rule}
\dbyd{\ogram{\Mtx{X}}}{\Mtx{Z}} &=
\wrapParens{\eye + \Komm}\wrapParens{\Mtx{X} \kron
\eye}\dbyd{\Mtx{X}}{\Mtx{Z}}.&\\
\label{eqn:mtx_trace_deriv}
\dbyd{\trace{\Mtx{X}\Mtx{Y}}}{\Mtx{Z}} &=
\tr{\fvec{\tr{\Mtx{X}}}}\dbyd{\Mtx{Y}}{\Mtx{Z}} + 
\tr{\fvec{\Mtx{Y}}}\dbyd{\tr{\Mtx{X}}}{\Mtx{Z}}.&\\
\label{eqn:mtx_det_deriv}
\dbyd{\det{\Mtx{X}}}{\Mtx{Z}} &=
\det{\Mtx{X}}
\tr{\fvec{\trminv{\Mtx{X}}}}\dbyd{\Mtx{X}}{\Mtx{Z}}.&\\
\label{eqn:mtx_prod_det_deriv}
\dbyd{\det{\Mtx{X}\Mtx{Y}}}{\Mtx{Z}} &=
\det{\Mtx{X}\Mtx{Y}}
\wrapParens{ \tr{\fvec{\trminv{\Mtx{X}}}}\dbyd{\Mtx{X}}{\Mtx{Z}} +
\tr{\fvec{\trminv{\Mtx{Y}}}}\dbyd{\Mtx{Y}}{\Mtx{Z}}}.&\\
\label{eqn:mtx_inv_prod_det_deriv}
\dbyd{\det{\minvParens{\Mtx{X}\Mtx{Y}}}}{\Mtx{Z}} &=
\det{\Mtx{X}\Mtx{Y}}^{-1}
\wrapParens{ \tr{\fvec{\trminv{\Mtx{X}}}}\dbyd{\Mtx{X}}{\Mtx{Z}} +
\tr{\fvec{\trminv{\Mtx{Y}}}}\dbyd{\Mtx{Y}}{\Mtx{Z}}}.&
\end{align}
Here \Komm is the 'commutation matrix.'

Let \eigval[j] be the \kth{j} eigenvalue of \Mtx{X}, with corresponding
eigenvector \eigvec[j], normalized so that $\gram{\eigvec[j]} = 1$. Then
\begin{align}
\label{eqn:mtx_eig_deriv}
\dbyd{\eigval[j]}{\Mtx{Z}} &=
\wrapParens{\AkronA{\tr{\eigvec[j]}}}\dbyd{\Mtx{X}}{\Mtx{Z}}.&
\end{align}
\end{lemma}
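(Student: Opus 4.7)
The plan is to derive every identity as an instance of the chain rule applied to three elementary building blocks: the vec--Kronecker identity $\fvec{\Mtx{A}\Mtx{B}\Mtx{C}} = (\tr{\Mtx{C}}\kron\Mtx{A})\fvec{\Mtx{B}}$, the inverse-derivative formula $\dbyd{\minv{\Mtx{A}}}{\Mtx{A}} = -(\trminv{\Mtx{A}}\kron\minv{\Mtx{A}})$ already established in Lemma \ref{lemma:deriv_vech_matrix_inverse}, and the commutation-matrix identities $\Komm\fvec{\Mtx{A}} = \fvec{\tr{\Mtx{A}}}$ together with the intertwining rule $\Komm(\Mtx{A}\kron\Mtx{B}) = (\Mtx{B}\kron\Mtx{A})\Komm$. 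Everything else is bookkeeping. I would prove the identities in essentially the order stated, since the later ones invoke the earlier ones.

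First I would handle \eqref{eqn:mtx_qinv_rule}, \eqref{eqn:mtx_prod_rule}, and \eqref{eqn:mtx_ogram_rule}. For \eqref{eqn:mtx_qinv_rule}, write $\qform{\Mtx{X}}{\Mtx{J}} = \tr{\Mtx{J}}\Mtx{X}\Mtx{J}$ and apply the vec--Kronecker identity to get $\dbyd{\qform{\Mtx{X}}{\Mtx{J}}}{\Mtx{X}} = \AkronA{\tr{\Mtx{J}}}$, then pre-compose with the matrix-inverse derivative to produce exactly $\fdinvwrap{\Mtx{J}}{\Mtx{X}}$, then chain through $\Mtx{Z}$. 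For \eqref{eqn:mtx_prod_rule}, express $\fvec{\Mtx{X}\Mtx{Y}}$ the two natural ways, $(\eye\kron\Mtx{X})\fvec{\Mtx{Y}}$ and $(\tr{\Mtx{Y}}\kron\eye)\fvec{\Mtx{X}}$, and differentiate as a bilinear form in $(\Mtx{X},\Mtx{Y})$. For \eqref{eqn:mtx_ogram_rule}, apply the product rule to $\Mtx{X}\tr{\Mtx{X}}$, note that $\dbyd{\tr{\Mtx{X}}}{\Mtx{X}} = \Komm$, and then absorb $(\eye\kron\Mtx{X})\Komm$ into $\Komm(\Mtx{X}\kron\eye)$ using the intertwining identity to collect the factor $(\eye+\Komm)(\Mtx{X}\kron\eye)$.

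Next I would dispatch the trace and determinant formulas. For \eqref{eqn:mtx_trace_deriv}, use $\trace{\Mtx{X}\Mtx{Y}} = \tr{\fvec{\tr{\Mtx{X}}}}\fvec{\Mtx{Y}}$, which is bilinear, and differentiate term by term. For \eqref{eqn:mtx_det_deriv}, appeal to Jacobi's formula $\dbyd{\det{\Mtx{X}}}{\Mtx{X}} = \det{\Mtx{X}}\tr{\fvec{\trminv{\Mtx{X}}}}$ and chain through $\Mtx{Z}$. Both \eqref{eqn:mtx_prod_det_deriv} and \eqref{eqn:mtx_inv_prod_det_deriv} then follow by combining $\det{\Mtx{X}\Mtx{Y}} = \det{\Mtx{X}}\det{\Mtx{Y}}$, the product rule, and (for the inverse version) the scalar chain rule $d(u^{-1}) = -u^{-2}du$; any sign discrepancy here would need a careful reconciliation, which is the one spot where I would be most cautious about matching the stated form.

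Finally, for the eigenvalue identity \eqref{eqn:mtx_eig_deriv}, I would use the Rayleigh-quotient representation $\eigval[j] = \tr{\eigvec[j]}\Mtx{X}\eigvec[j]$ and invoke the envelope principle: because $\gram{\eigvec[j]} = 1$, any first-order perturbation of $\eigvec[j]$ is orthogonal to $\eigvec[j]$ itself, so the variations through $\eigvec[j]$ cancel and $\dbyd{\eigval[j]}{\fvec{\Mtx{X}}} = \tr{\fvec{\ogram{\eigvec[j]}}}$. Recognising $\fvec{\ogram{\eigvec[j]}} = \eigvec[j]\kron\eigvec[j]$ gives $\tr{\fvec{\ogram{\eigvec[j]}}} = \AkronA{\tr{\eigvec[j]}}$, and chaining through $\Mtx{Z}$ completes the proof. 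The only genuine subtlety across the whole lemma is justifying this envelope argument, which implicitly requires the eigenvalue $\eigval[j]$ to be simple; I would flag this as the main assumption and otherwise treat the remainder as routine.
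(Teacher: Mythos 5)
Your proposal is correct and follows essentially the same route as the paper: every identity is reduced to the vec--Kronecker identity, the inverse-derivative result of \lemmaref{deriv_vech_matrix_inverse}, Jacobi's formula, and the commutation-matrix intertwining rule $\wrapParens{\eye\kron\Mtx{X}}\Komm = \Komm\wrapParens{\Mtx{X}\kron\eye}$, exactly as the paper does. Two points of difference are worth recording. First, for \eqnref{mtx_eig_deriv} the paper simply cites the known scalar formula $\dbyd{\eigval[j]}{z} = \qform{\fivec{\dbyd{\Mtx{X}}{z}}}{\eigvec[j]}$ and rewrites it in Kronecker form, whereas you derive it from the Rayleigh quotient via an envelope argument; your version is more self-contained and your flag that this requires \eigval[j] to be simple is a genuine hypothesis that the paper leaves implicit in its citation. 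Second, your caution about the sign in \eqnref{mtx_inv_prod_det_deriv} is well founded: writing $u = \det{\Mtx{X}\Mtx{Y}}$, the scalar chain rule gives $\dbyd{u^{-1}}{\Mtx{Z}} = -u^{-2}\dbyd{u}{\Mtx{Z}} = -u^{-1}\wrapParens{\tr{\fvec{\trminv{\Mtx{X}}}}\dbyd{\Mtx{X}}{\Mtx{Z}} + \tr{\fvec{\trminv{\Mtx{Y}}}}\dbyd{\Mtx{Y}}{\Mtx{Z}}}$, so the displayed identity appears to be missing a leading minus sign; the paper's proof dismisses this step with ``follows using the scalar chain rule'' and never exhibits the sign, so your more careful derivation actually catches a slip in the stated formula rather than introducing a discrepancy.
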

\begin{proof}
For \eqnref{mtx_qinv_rule}, write
$$
\dbyd{\minvParens{\qform{\Mtx{X}}{\Mtx{J}}}}{\Mtx{Z}} =
\dbyd{\minvParens{\qform{\Mtx{X}}{\Mtx{J}}}}{\wrapParens{\qform{\Mtx{X}}{\Mtx{J}}}}
\dbyd{\wrapParens{\qform{\Mtx{X}}{\Mtx{J}}}}{\Mtx{Z}}.
$$
\lemmaref{deriv_vech_matrix_inverse} gives the derivative on the left;
to get the derivative on the right, note that 
$\fvec{\qform{\Mtx{X}}{\Mtx{J}}} =
\wrapParens{\AkronA{\tr{\Mtx{J}}}}\fvec{\Mtx{X}}$, then use linearity of
the derivative.

For \eqnref{mtx_prod_rule}, write $\fvec{\Mtx{X}\Mtx{Y}} =
\wrapParens{\tr{\Mtx{Y}}\kron\Mtx{X}}\fvec{\eye}$. Then consider
the derivative of $\fvec{\Mtx{X}\Mtx{Y}}$ with respect to any \emph{scalar}
$z$ :
\begin{equation*}
\dbyd{\fvec{\Mtx{X}\Mtx{Y}}}{z} 
= \dbyd{\wrapParens{\tr{\Mtx{Y}}\kron\Mtx{X}}\fvec{\eye}}{z}
= \fivec{\dbyd{\wrapParens{\tr{\Mtx{Y}}\kron\Mtx{X}}}{z}} \fvec{\eye},
\end{equation*}
where $\fivec{\cdot}$ is the inverse of $\fvec{\cdot}$. That is,
$\fivec{\fvec{\cdot}}$ is the identity over 
square matrices. (This wrinkle is needed because we have defined
derivatives of matrices to be the derivative of their vectorization.)

For \eqnref{mtx_ogram_rule}, by \eqnref{mtx_prod_rule},
\begin{equation*}
\begin{split}
\dbyd{\fvec{\ogram{\Mtx{X}}}}{\fvec{\Mtx{Z}}} &= 
\wrapParens{\eye\kron\Mtx{X}}\dbyd{\fvec{\tr{\Mtx{X}}}}{\fvec{\Mtx{Z}}} +
\wrapParens{{\Mtx{X}}\kron\eye}\dbyd{\fvec{\Mtx{X}}}{\fvec{\Mtx{Z}}},\\
&= \wrapParens{\eye\kron\Mtx{X}}\Komm \dbyd{\fvec{\Mtx{X}}}{\fvec{\Mtx{Z}}} + 
\wrapParens{{\Mtx{X}}\kron\eye}\dbyd{\fvec{\Mtx{X}}}{\fvec{\Mtx{Z}}}.\\
\end{split}
\end{equation*}
Now let \Mtx{A} be any conformable square matrix. We have:
\begin{multline*}
\wrapParens{\eye\kron\Mtx{X}}\Komm\fvec{\Mtx{A}}
= \wrapParens{\eye\kron\Mtx{X}}\fvec{\tr{\Mtx{A}}}
= \fvec{\Mtx{X}\tr{\Mtx{A}}}
= \\
\Komm \fvec{\Mtx{A}\tr{\Mtx{X}}}
= \Komm \wrapParens{\Mtx{X}\kron\eye} \fvec{\Mtx{A}}.
\end{multline*}
Because \Mtx{A} was arbitrary, we have
$\wrapParens{\eye\kron\Mtx{X}}\Komm = \Komm\wrapParens{\Mtx{X}\kron\eye},$
and the result follows.

Using the product rule for Kronecker products \cite{petersen2012matrix},
then using the vector identity again we have
\begin{equation*}
\begin{split}
\dbyd{\fvec{\Mtx{X}\Mtx{Y}}}{z} 
&= \wrapParens{\fivec{\dbyd{\tr{\Mtx{Y}}}{z}}\kron\Mtx{X} +
\tr{\Mtx{Y}} \kron \fivec{\dbyd{\Mtx{X}}{z}}}\fvec{\eye},\\
&= 
\wrapParens{\eye\kron\Mtx{X}} \dbyd{{\Mtx{Y}}}{z}  +
\wrapParens{\tr{\Mtx{Y}}\kron\eye}\dbyd{\Mtx{X}}{z}.
\end{split}
\end{equation*}
Then apply this result to every element of \fvec{\Mtx{Z}} to get
the result.

For \eqnref{mtx_trace_deriv}, write
$$
\trace{\Mtx{X}\Mtx{Y}} = \tr{\fvec{\tr{\Mtx{X}}}}\fvec{\Mtx{Y}},
$$
then use the product rule.

For \eqnref{mtx_det_deriv}, first consider the derivative of 
$\det{\Mtx{X}}$ with respect to a scalar $z$. This is known
to take form:  \cite{petersen2012matrix}
$$
\dbyd{\det{\Mtx{X}}}{z} =
\det{X}\trace{\minv{\Mtx{X}}\fivec{\dbyd{\Mtx{X}}{z}}},
$$
where the $\fivec{\cdot}$ is here because of how we 
have defined derivatives of matrices. Rewrite the trace
as the dot product of two vectors:
$$
\dbyd{\det{\Mtx{X}}}{z} =
\det{X}\tr{\fvec{\trminv{\Mtx{X}}}} \dbyd{\Mtx{X}}{z}.
$$
Using this to compute the derivative with respect to
each element of $\fvec{\Mtx{Z}}$ gives the result.
\eqnref{mtx_prod_det_deriv} follows from the scalar product
rule since $\det{\Mtx{X}\Mtx{Y}} =
\det{\Mtx{X}}\det{\Mtx{Y}}$.
\eqnref{mtx_inv_prod_det_deriv} then follows, using the scalar chain
rule.

For \eqnref{mtx_eig_deriv}, the derivative of the \kth{j} eigenvalue
of matrix \Mtx{X} with respect to a scalar $z$ is known 
to be: \cite[equation (67)]{petersen2012matrix}
$$
\dbyd{\eigval[j]}{z} = \qform{\fivec{\dbyd{\Mtx{X}}{z}}}{\eigvec[j]}.
$$
Take the vectorization of this scalar, and rewrite it in Kronecker
form:
$$
\dbyd{\eigval[j]}{z} = \wrapParens{\AkronA{\tr{\eigvec[j]}}}\dbyd{\Mtx{X}}{z}.
$$
Use this to compute the derivative of \eigval[j] with respect to each
element of $\fvec{\Mtx{Z}}$.

\end{proof}

\begin{lemma}[Cholesky Derivatives]
\label{lemma:cholesky_deriv}
Let \Mtx{X} be a symmetric positive definite matrix. Let 
\Mtx{Y} be its lower triangular Cholesky factor. That is, 
\Mtx{Y} is the lower triangular matrix such that $\ogram{\Mtx{Y}} = \Mtx{X}$.
Then 
\begin{align}
\label{eqn:mtx_chol_rule}
\dbyd{\fvech{\Mtx{Y}}}{\fvech{\Mtx{X}}} &= \minv{\wrapParens{%
\qoform{\wrapParens{\eye + \Komm}\wrapParens{\Mtx{Y} \kron \eye}}{\Elim}}},&
\end{align}
where \Komm is the 'commutation matrix'.
\cite{magnus1999matrix}
\end{lemma}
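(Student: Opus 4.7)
The plan is to apply the inverse function theorem to the map $\Mtx{Y}\mapsto\ogram{\Mtx{Y}}$, viewed as a smooth map from lower triangular matrices with positive diagonal entries to symmetric positive definite matrices. This map is a bijection with smooth inverse (the Cholesky factorization), so its Jacobian, expressed in the natural nonredundant $\fvech$ coordinates on both sides, is nonsingular, and the Cholesky derivative we want is just its inverse.

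First I would compute the Jacobian of the forward map $\Mtx{Y}\mapsto\ogram{\Mtx{Y}}$ in full $\fvec$ coordinates. Applying \eqnref{mtx_ogram_rule} of \lemmaref{more_misc_derivs} with $\Mtx{Z}=\Mtx{Y}$ immediately yields
$$\dbyd{\fvec{\Mtx{X}}}{\fvec{\Mtx{Y}}} = \wrapParens{\eye+\Komm}\wrapParens{\Mtx{Y}\kron\eye}.$$
Next I would descend to the nonredundant parametrization. On the output side $\Mtx{X}$ is symmetric, so $\fvech{\Mtx{X}}=\Elim\fvec{\Mtx{X}}$ as in \lemmaref{misc_derivs}. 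On the input side $\Mtx{Y}$ is lower triangular, so the entries of $\fvec{\Mtx{Y}}$ above the diagonal are identically zero rather than mirror copies of the lower ones; the correct relationship is therefore $\fvec{\Mtx{Y}}=\tr{\Elim}\fvech{\Mtx{Y}}$, not $\Dupp\fvech{\Mtx{Y}}$. Combining through the chain rule gives
$$\dbyd{\fvech{\Mtx{X}}}{\fvech{\Mtx{Y}}} = \Elim\wrapParens{\eye+\Komm}\wrapParens{\Mtx{Y}\kron\eye}\tr{\Elim} = \qoform{\wrapParens{\eye+\Komm}\wrapParens{\Mtx{Y}\kron\eye}}{\Elim}.$$
Inverting via the inverse function theorem produces the claimed formula.

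The main obstacle is the careful distinction between the action of $\tr{\Elim}$ (which zero-pads into the upper triangle, appropriate for triangular $\Mtx{Y}$) and the action of $\Dupp$ (which duplicates, appropriate for symmetric $\Mtx{X}$); mixing these up would yield a formula of the wrong size, since $\tr{\Elim}$ and $\Dupp$ differ on the diagonal block. The second obstacle is justifying invertibility. The slickest argument is the one used above: the Cholesky map is a smooth bijection between two open sets of equal dimension $\nlatf(\nlatf+1)/2$, so its Jacobian is automatically nonsingular. A more hands-on verification, which could be used as a backup, inspects the map $\fvech{\Mtx{V}}\mapsto \Elim\fvec{\Mtx{V}\tr{\Mtx{Y}}+\Mtx{Y}\tr{\Mtx{V}}}$ on lower triangular $\Mtx{V}$ and observes that the entries of $\Mtx{V}$ can be solved for recursively from the top-left downwards whenever the diagonal of $\Mtx{Y}$ has no zeros, which is guaranteed by positive definiteness of $\Mtx{X}$.
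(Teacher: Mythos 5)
Your proposal is correct and follows essentially the same route as the paper's proof: apply \eqnref{mtx_ogram_rule} to get the forward Jacobian $\wrapParens{\eye+\Komm}\wrapParens{\Mtx{Y}\kron\eye}$, pass to nonredundant coordinates via $\Elim$ on the symmetric output and $\tr{\Elim}$ on the lower-triangular input, and invert. Your explicit justification of invertibility (via the inverse function theorem, or the recursive solvability of $\Elim\fvec{\Mtx{V}\tr{\Mtx{Y}}+\Mtx{Y}\tr{\Mtx{V}}}$) is a welcome addition that the paper leaves implicit.
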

\begin{proof}
By \eqnref{mtx_ogram_rule} of \lemmaref{more_misc_derivs}, 
\begin{equation*}
\dbyd{\fvec{\ogram{\Mtx{Y}}}}{\fvec{\Mtx{Y}}} = 
\wrapParens{\eye + \Komm}\wrapParens{\Mtx{Y} \kron \eye}.
\end{equation*}
By the chain rule, for lower triangular matrix \Mtx{Y}, we have
\begin{equation*}
\begin{split}
\dbyd{\fvech{\ogram{\Mtx{Y}}}}{\fvech{\Mtx{Y}}} &= 
\dbyd{\fvech{\ogram{\Mtx{Y}}}}{\fvec{\ogram{\Mtx{Y}}}} 
\dbyd{\fvec{\ogram{\Mtx{Y}}}}{\fvec{\Mtx{Y}}} 
\dbyd{\fvec{\Mtx{Y}}}{\fvech{\Mtx{Y}}},\\
&= \Elim \wrapParens{\eye + \Komm}\wrapParens{\Mtx{Y} \kron \eye}
\tr{\Elim}.
\end{split}
\end{equation*}

The result now follows since
\begin{equation*}
\dbyd{\fvech{\Mtx{Y}}}{\fvech{\Mtx{X}}} 
= \dbyd{\fvech{\Mtx{Y}}}{\fvech{\ogram{\Mtx{Y}}}} 
= \minv{\wrapParens{\dbyd{\fvech{\ogram{\Mtx{Y}}}}{\fvech{\Mtx{Y}}}}}. 
\end{equation*}
\end{proof}


\section{Proofs}

Here we give proofs of \theoremref{theta_covar_elliptical} and
\corollaryref{invtheta_asym_var_elliptical_explicit}.

\begin{proof}[Proof of \theoremref{theta_covar_elliptical}]
We note that in the Gaussian case ($\kurty=1$) this result is proved by Magnus and Neudecker, up to a rearrangement in the terms \cite[Theorem 4.4 (ii)]{magnus1979commutation}.

First we suppose that the first element of \avreti, rather than
being a deterministic 1, is a random variable with mean $1$, 
no covariance with the elements of \reti, and a variance of $\epsilon$. 
We assume the random first element is such that \avreti is elliptically distributed
with mean \apvmu and covariance \apvsig.
After finding the variance of \ogram{\avreti}, we will take 
$\epsilon \to 0.$ 
Below we will use \pvsm to mean $\apvsig + \ogram{\apvmu}$, which converges
to our usual definition as $\epsilon \to 0$.

\providecommand{\mycent}[1]{\avreti[#1] - \apvmu[#1]}
\providecommand{\wmycent}[1]{\wrapParens{\mycent{#1}}}

An extension of Isserlis' theorem gives the moments of centered
elements of \avreti.  \cite{vignat2007extension,KAN2008542}
In particular, the first four centered moments are
\begin{align}
	\E{\mycent{i}} &= 0,\nonumber\\
	\E{\wmycent{i}\wmycent{j}} &= \apvsig[i,j],\nonumber\\
	\E{\wmycent{i}\wmycent{j}\wmycent{k}} &= 0,\nonumber\\
	\E{\wmycent{i}\wmycent{j}\wmycent{k}\wmycent{l}} &=
	\kurty\wrapBracks{\apvsig[i,j]\apvsig[k,l] + \apvsig[i,k]\apvsig[j,l] + 
	\apvsig[i,l]\apvsig[j,k]}.\nonumber
\end{align}
It is a tedious exercise to compute the raw uncentered moments:
\begin{align}
	\E{\avreti[i]} &= \apvmu[i],\nonumber\\
	\E{\avreti[i]\avreti[j]} &= \pvsm[i,j],\nonumber\\
	\E{\avreti[i]\avreti[j]\avreti[k]} &= \apvmu[i]\apvmu[j]\apvmu[k] 
	+ \apvmu[i]\apvsig[j,k] 
	+ \apvmu[j]\apvsig[i,k] 
	+ \apvmu[k]\apvsig[i,j],\nonumber\\
	\E{\avreti[i]\avreti[j]\avreti[k]\avreti[l]} &=
	\kurty\wrapBracks{\apvsig[i,j]\apvsig[k,l] + \apvsig[i,k]\apvsig[j,l] + 
	\apvsig[i,l]\apvsig[j,k]}\nonumber\\
	&\phantom{=}\,
	+\apvmu[i]\apvmu[j] \apvsig[k,l] 
	+\apvmu[i]\apvmu[k] \apvsig[j,l] 
	+\apvmu[i]\apvmu[l] \apvsig[j,k]\nonumber\\ 
	&\phantom{=}\,
	+\apvmu[j]\apvmu[k] \apvsig[i,l] 
	+\apvmu[j]\apvmu[l] \apvsig[i,k] 
	+\apvmu[k]\apvmu[l] \apvsig[i,j] \nonumber\\
	&\phantom{=}\,
	+\apvmu[i]\apvmu[j]\apvmu[k]\apvmu[l].\nonumber
\end{align}

We now want to compute the covariance of $\avreti[i]\avreti[j]$ with
$\avreti[k]\avreti[l]$.  We have
\begin{align*}
\COV{\avreti[i]\avreti[j]}{\avreti[k]\avreti[l]} 
&= \E{\avreti[i]\avreti[j]\avreti[k]\avreti[l]} -
	\E{\avreti[i]\avreti[j]}\E{\avreti[k]\avreti[l]},\\
&=\kurty\wrapBracks{\apvsig[i,j]\apvsig[k,l] + \apvsig[i,k]\apvsig[j,l] + 
	\apvsig[i,l]\apvsig[j,k]}\nonumber\\
	&\phantom{=}\,
	+\apvmu[i]\apvmu[j] \apvsig[k,l] 
	+\apvmu[i]\apvmu[k] \apvsig[j,l] 
	+\apvmu[i]\apvmu[l] \apvsig[j,k]\nonumber\\ 
	&\phantom{=}\,
	+\apvmu[j]\apvmu[k] \apvsig[i,l] 
	+\apvmu[j]\apvmu[l] \apvsig[i,k] 
	+\apvmu[k]\apvmu[l] \apvsig[i,j] \nonumber\\
	&\phantom{=}\,
	+\apvmu[i]\apvmu[j]\apvmu[k]\apvmu[l].\nonumber\\
	&\phantom{=}\,
	-\wrapParens{\apvmu[i]\apvmu[j] + \apvsig[i,j]}
	\wrapParens{\apvmu[k]\apvmu[l] + \apvsig[k,l]},\\
&=\kurty\wrapBracks{\apvsig[i,j]\apvsig[k,l] + \apvsig[i,k]\apvsig[j,l] + 
	\apvsig[i,l]\apvsig[j,k]}\nonumber\\
	&\phantom{=}\,
	+\apvmu[i]\apvmu[k] \apvsig[j,l] 
	+\apvmu[i]\apvmu[l] \apvsig[j,k]\nonumber\\ 
	&\phantom{=}\,
	+\apvmu[j]\apvmu[k] \apvsig[i,l] 
	+\apvmu[j]\apvmu[l] \apvsig[i,k] \nonumber\\
	&\phantom{=}\,
	-\apvsig[i,j]\apvsig[k,l],\\
&=\wrapParens{\kurty-1}\wrapBracks{\apvsig[i,j]\apvsig[k,l] + \apvsig[i,k]\apvsig[j,l] + 
	\apvsig[i,l]\apvsig[j,k]}\nonumber\\
	&\phantom{=}\,
	+\apvsig[i,k]\apvsig[j,l] 
	+\apvsig[i,l]\apvsig[j,k]\nonumber\\
	&\phantom{=}\,
	+\apvmu[i]\apvmu[k] \apvsig[j,l] 
	+\apvmu[i]\apvmu[l] \apvsig[j,k]\nonumber\\ 
	&\phantom{=}\,
	+\apvmu[j]\apvmu[k] \apvsig[i,l] 
	+\apvmu[j]\apvmu[l] \apvsig[i,k],\nonumber\\
&=\wrapParens{\kurty-1}\wrapBracks{\apvsig[i,j]\apvsig[k,l] + \apvsig[i,k]\apvsig[j,l] + 
	\apvsig[i,l]\apvsig[j,k]}\nonumber\\
	&\phantom{=}\,
	+\svsm[i,k]\apvsig[j,l] 
	+\svsm[i,l]\apvsig[j,k]\nonumber\\
	&\phantom{=}\,
	+\apvmu[j]\apvmu[k] \apvsig[i,l] 
	+\apvmu[j]\apvmu[l] \apvsig[i,k],\nonumber\\
&=\wrapParens{\kurty-1}\wrapBracks{\apvsig[i,j]\apvsig[k,l] + \apvsig[i,k]\apvsig[j,l] + 
	\apvsig[i,l]\apvsig[j,k]}\nonumber\\
	&\phantom{=}\,
	+\svsm[i,k]\apvsig[j,l] 
	+\svsm[i,l]\apvsig[j,k]\nonumber\\
	&\phantom{=}\,
	+\apvmu[j]\apvmu[k] \svsm[i,l] - \apvmu[i]\apvmu[j]\apvmu[k]\apvmu[l]
	+\apvmu[j]\apvmu[l] \svsm[i,k] - \apvmu[i]\apvmu[j]\apvmu[k]\apvmu[l],\nonumber\\
&=\wrapParens{\kurty-1}\wrapBracks{\apvsig[i,j]\apvsig[k,l] + \apvsig[i,k]\apvsig[j,l] + 
	\apvsig[i,l]\apvsig[j,k]}\nonumber\\
	&\phantom{=}\,
	+\svsm[i,k]\svsm[j,l] 
	+\svsm[i,l]\svsm[j,k] - 2 \apvmu[i]\apvmu[j]\apvmu[k]\apvmu[l].\nonumber
\end{align*}

Now we need only translate this scalar result into the vector result in the
theorem.  If \avreti is $m$-dimensional, then the variance-covariance matrix
of $\fvec{\ogram{\avreti}}$ is $\sbby{m^2}$ whose \kth{ij,kl} element is given
above. The term \apvsig[i,j]\apvsig[k,l] is the \kth{ij,kl} element of
$\ogram{\fvec{\apvsig}}$. 
The term \apvsig[i,k]\apvsig[j,l] is the \kth{ij,kl} element of
\AkronA{\apvsig}.
The term \apvsig[j,k]\apvsig[i,l] is the \kth{ji,kl} element of
\AkronA{\apvsig}, and thus is the \kth{ij,kl} element of
$\Komm\wrapParens{\AkronA{\apvsig}}$. We can similarly identify
the terms $\svsm[i,k]\svsm[j,l]$, $\svsm[i,l]\svsm[j,k]$, and
$\apvmu[i]\apvmu[j]\apvmu[k]\apvmu[l]$, and thus
\begin{align*}
	\VAR{\fvec{\ogram{\avreti}}}_{ij,kl} 
	&= \COV{\avreti[i]\avreti[j]}{\avreti[k]\avreti[l]},\\
	&=\wrapParens{\kurty-1}\wrapBracks{\ogram{\fvec{\apvsig}}
	+ \wrapParens{\eye + \Komm}\AkronA{\apvsig}}_{ij,kl}\nonumber\\
	&\phantom{=}\,
	+ \wrapParens{\eye + \Komm}\wrapParens{\AkronA{\svsm} -
	\AkronA{\ogram{\apvmu}}}_{ij,kl}.\nonumber
\end{align*}

\end{proof}

\begin{proof}[Proof of \corollaryref{invtheta_asym_var_elliptical_explicit}]
From the previous corollary, it suffices to prove the identity of 
\Mtx{B}.  Define $\Simm = \half \wrapParens{\eye + \Komm}$. 
Then note that \eqnref{theta_covar_elliptical} becomes
\begin{align*}
	\pvvar[0] &= \wrapParens{\kurty-1}
	\wrapBracks{\Simm \fvec{\apvsig}\kron\tr{\fvec{\apvsig}} + 2\Simm\AkronA{\apvsig}}\\
	&\phantom{=}\,+ 2\Simm \wrapBracks{\AkronA{\pvsm} -
	\AkronA{\ogram{\apvmu}}}.
\end{align*}
By Lemma 3.5 of Magnus and Neudecker, note that $\Dupp\Elim\Simm = \Simm$.  \cite{magnus1980elimination}
So consider
\begin{align*}
\Mtx{B} &= \qoform{{\qoform{\pvvar[0]}{\Elim}}}{\wrapParens{\EXD{\wrapParens{\AkronA{\minv{\pvsm}}}}}},\\
&= \wrapParens{\kurty-1}\Elim\wrapParens{\AkronA{\minv{\pvsm}}} \Simm 
		\wrapBracks{\ogram{\fvec{\apvsig}}}
		\tr{\Elim}
		\tr{\wrapParens{\EXD{\wrapParens{\AkronA{\minv{\pvsm}}}}}}\\
		&\phantom{=}\,+ 2\wrapParens{\kurty-1}\Elim\wrapParens{\AkronA{\minv{\pvsm}}} \Simm 
		\wrapBracks{\AkronA{\apvsig}}
		\tr{\Elim}
		\tr{\wrapParens{\EXD{\wrapParens{\AkronA{\minv{\pvsm}}}}}}\\
		&\phantom{=}\,+ 2\Elim\wrapParens{\AkronA{\minv{\pvsm}}} \Simm 
		\wrapBracks{\AkronA{\pvsm} - \AkronA{\ogram{\apvmu}}}
		\tr{\Elim}
		\tr{\wrapParens{\EXD{\wrapParens{\AkronA{\minv{\pvsm}}}}}}.
\end{align*}
Now by Lemma 2.1 of Magnus and Neudecker, 
$\Simm\wrapParens{\AkronA{\Mtx{A}}} = \wrapParens{\AkronA{\Mtx{A}}} \Simm = \Simm \wrapParens{\AkronA{\Mtx{A}}} \Simm$,
so we can slide the \Simm matrix around. Also note that because
\apvsig is symmetric, $\Simm \ogram{\fvec{\apvsig}} = \ogram{\fvec{\apvsig}}$.
Then
\begin{align*}
\Mtx{B} &= \wrapParens{\kurty-1}\Elim\wrapParens{\AkronA{\minv{\pvsm}}} 
		\wrapBracks{\ogram{\fvec{\apvsig}}}
		\tr{\Elim}
		\tr{\wrapParens{\EXD{\wrapParens{\AkronA{\minv{\pvsm}}}}}}\\
		&\phantom{=}\,+ 2\wrapParens{\kurty-1}\Elim\Simm\wrapParens{\AkronA{\minv{\pvsm}}}
		\wrapBracks{\AkronA{\apvsig}}
		\tr{\Elim}
		\tr{\wrapParens{\EXD{\wrapParens{\AkronA{\minv{\pvsm}}}}}}\\
		&\phantom{=}\,+ 2\Elim\Simm\wrapParens{\AkronA{\minv{\pvsm}}}
		\wrapBracks{\AkronA{\pvsm} - \AkronA{\ogram{\apvmu}}}
		\tr{\Elim}
		\tr{\wrapParens{\EXD{\wrapParens{\AkronA{\minv{\pvsm}}}}}},\\
  &= \wrapParens{\kurty-1}\Elim\wrapBracks{\wrapParens{\AkronA{\minv{\pvsm}}} 
		\ogram{\fvec{\apvsig}}}
		\tr{\Simm}
		\tr{\Elim}
		\tr{\wrapParens{\EXD{\wrapParens{\AkronA{\minv{\pvsm}}}}}}\\
		&\phantom{=}\,+ 2\wrapParens{\kurty-1}\Elim
		\wrapBracks{\AkronA{\minv{\pvsm}\apvsig}}
		\tr{\Simm}
		\tr{\Elim}
		\tr{\wrapParens{\EXD{\wrapParens{\AkronA{\minv{\pvsm}}}}}}\\
		&\phantom{=}\,+ 2\Elim
		\wrapBracks{\eye - \AkronA{\basev[1]\tr{\apvmu}}}
		\tr{\Simm}
		\tr{\Elim}
		\tr{\wrapParens{\EXD{\wrapParens{\AkronA{\minv{\pvsm}}}}}},
\end{align*}
where we have slided the \Simm matrix to the left, and multiplied by
\AkronA{\minv{\pvsm}}. Now work on the right sides, shifting \Simm,
collapsing $\tr{\Simm}\tr{\Elim}\tr{\Dupp}$ to $\tr{\Simm}$, and
multiplying by \AkronA{\minv{\pvsm}} from the right:
\begin{align*}
	\Mtx{B}
	&= \wrapParens{\kurty-1}\Elim\wrapBracks{\fvec{\minv{\pvsm}\apvsig\minv{\pvsm}}\tr{\fvec{\apvsig}}}
  	\tr{\Simm}
		\tr{\Elim}
		\tr{\Dupp}
		\wrapParens{\AkronA{\minv{\pvsm}}}
		\tr{\Elim}\\
		&\phantom{=}\,+ 2\wrapParens{\kurty-1}\Elim\Simm
		\wrapBracks{\AkronA{\minv{\pvsm}\apvsig}}
		\tr{\Simm}
		\tr{\Elim}
		\tr{\Dupp}
		\wrapParens{\AkronA{\minv{\pvsm}}}
		\tr{\Elim}\\
		&\phantom{=}\,+ 2\Elim\Simm
		\wrapBracks{\eye - \AkronA{\basev[1]\tr{\apvmu}}}
		\tr{\Simm}
		\tr{\Elim}
		\tr{\Dupp}
		\wrapParens{\AkronA{\minv{\pvsm}}}
		\tr{\Elim},\\
	&= \wrapParens{\kurty-1}\Elim\wrapBracks{\fvec{\minv{\pvsm}\apvsig\minv{\pvsm}}\tr{\fvec{\apvsig}}}
		\wrapParens{\AkronA{\minv{\pvsm}}}
  	\tr{\Simm}
		\tr{\Elim}\\
		&\phantom{=}\,+ 2\wrapParens{\kurty-1}\Elim\Simm
		\wrapBracks{\AkronA{\minv{\pvsm}\apvsig}}
		\wrapParens{\AkronA{\minv{\pvsm}}}
		\tr{\Simm}
		\tr{\Elim}\\
		&\phantom{=}\,+ 2\Elim\Simm
		\wrapBracks{\eye - \AkronA{\basev[1]\tr{\apvmu}}}
		\wrapParens{\AkronA{\minv{\pvsm}}}
		\tr{\Simm}
		\tr{\Elim},\\
	&= \wrapParens{\kurty-1}\Elim\wrapBracks{\ogram{\fvec{\minv{\pvsm}\apvsig\minv{\pvsm}}}}
  	\tr{\Simm}
		\tr{\Elim}\\
		&\phantom{=}\,+ 2\wrapParens{\kurty-1}\Elim\Simm
		\wrapBracks{\AkronA{\minv{\pvsm}\apvsig\minv{\pvsm}}}
		\tr{\Simm}
		\tr{\Elim}\\
		&\phantom{=}\,+ 2\Elim\Simm
		\wrapBracks{\AkronA{\minv{\pvsm}} - \AkronA{\ogram{\basev[1]}}}
		\tr{\Simm}
		\tr{\Elim}.
\end{align*}
Now note that $\minv{\pvsm}\apvsig\minv{\pvsm} = \minv{\pvsm} -
\ogram{\basev[1]}$. Then
\begin{align*}
	\Mtx{B} &= \wrapParens{\kurty-1}\Elim\wrapBracks{\ogram{\fvec{\minv{\pvsm} -
	\ogram{\basev[1]}}}}
  	\tr{\Simm}
		\tr{\Elim}\\
		&\phantom{=}\,+ 2\wrapParens{\kurty-1}\Elim\Simm
		\wrapBracks{\AkronA{\wrapParens{\minv{\pvsm} - \ogram{\basev[1]}}}}
		\tr{\Simm}
		\tr{\Elim}\\
		&\phantom{=}\,+ 2\Elim\Simm
		\wrapBracks{\AkronA{\minv{\pvsm}} - \AkronA{\ogram{\basev[1]}}}
		\tr{\Simm}
		\tr{\Elim}.
\end{align*}
\end{proof}


\begin{proof}[Proof of \theoremref{mp_snr_ci_elliptical}]
Starting from \corollaryref{psnr_minus_ssropt_moments}, take
$\pvvar = \qoform{\pvvar[0]}{\Elim}$, with \pvvar[0] defined in
\theoremref{theta_covar_elliptical}. In 
\eqnref{shorter_HFH_form} we give the abbreviated form for
\qform{\Mtx{F}}{\Mtx{H}}.

Because of the symmetry of \pvvar[0], we can drop the \Elim and \Dupp terms
(but not the \Simm matrix). We decompose \pvvar into four terms based
on the terms of \pvvar[0].
\begin{align*}
\trace{ \qform{\Mtx{F}}{\Mtx{H}}\pvvar } 
	&= a_1 + a_2 + a_3 - a_4.
\end{align*}

Now we need some technical results. The first is that for conformable matrices
\begin{equation*}
	\trace{\wrapParens{\Mtx{A}\kron\Mtx{B}}\ogram{\fvec{\Mtx{X}}}} =
	\trace{\qform{\wrapParens{\Mtx{A}\kron\Mtx{B}}}{\fvec{\Mtx{X}}}} =
	\trace{\tr{\Mtx{X}}\Mtx{B}\Mtx{X}\tr{\Mtx{A}}}.
\end{equation*}
The other concerns traces with \Simm,
where $2\Simm = \eye + \Komm$, and 
where \Komm is the commutation matrix.
Let \Mtx{A} and \Mtx{B} be square matrices of the same size, then
\begin{equation}
\trace{2 \Simm \wrapParens{\Mtx{A} \kron \Mtx{B}}} =
\trace{\Mtx{A}}\trace{\Mtx{B}} + \trace{\Mtx{A}\Mtx{B}}.
\end{equation}
The proof is simple: since $\trace{\Mtx{A}\kron\Mtx{B}} =
\trace{\Mtx{A}}\trace{\Mtx{B}}$ is a known result, we need only focus on the 
\Komm term. Note then that
\begin{align*}
\trace{\Komm \wrapParens{\Mtx{A} \kron \Mtx{B}}} 
	&= \sum_{i,j} \wrapParens{\Komm \wrapParens{\Mtx{A} \kron \Mtx{B}}}_{ij,ij} = \sum_{i,j} \wrapParens{\wrapParens{\Mtx{A} \kron \Mtx{B}}}_{ji,ij},\\
	&= \sum_{i,j} \Mtx{B}_{j,i} \Mtx{A}_{i,j} = \sum_{i} \wrapParens{\Mtx{A}\Mtx{B}}_{i,i} = \trace{\Mtx{A}\Mtx{B}}.
\end{align*}
This is Magnus and Neudecker Theorem 3.1 (xiii). \cite{magnus1979commutation}
We will also implicitly use the fact that one can rotate terms in a trace
operator. We then tackle the trace terms one at a time.
\begin{align*}
	a_1 &= \wrapParens{\kurty-1} \trace{ \qform{\Mtx{F}}{\Mtx{H}}
	\ogram{\fvec{\apvsig}}},\\
	&= \frac{\kurty-1}{\psnropt} \trace{ \wrapParens{%
		\wrapParens{\minv{\pvsm}\ogram{\basev[1]}\minv{\pvsm}} \kron
		\twobytwo{0}{\tr{\vzero}}{\vzero}{\frac{\minv{\pvsig}\ogram{\pvmu}\minv{\pvsig}}{\psnrsqopt}
		- \minv{\pvsig}} } \ogram{\fvec{\apvsig}}},\\
	&= \frac{\kurty-1}{\psnropt} \trace{ \tr{\apvsig} 
		\twobytwo{0}{\tr{\vzero}}{\vzero}{\frac{\minv{\pvsig}\ogram{\pvmu}\minv{\pvsig}}{\psnrsqopt}
		- \minv{\pvsig}} \apvsig
		\wrapParens{\minv{\pvsm}\ogram{\basev[1]}\minv{\pvsm}}}.
\end{align*}
But note that
\begin{align*}
	\apvsig {\twobytwo{0}{\tr{\vzero}}{\vzero}{\frac{\minv{\pvsig}\ogram{\pvmu}\minv{\pvsig}}{\psnrsqopt}
		- \minv{\pvsig}} } \apvsig \minv{\pvsm}\basev[1]
		&= {\twobytwo{0}{\tr{\vzero}}{\vzero}{\frac{\ogram{\pvmu}}{\psnrsqopt}
		- \pvsig} } \minv{\pvsm}\basev[1],\\
		&= {\twobytwo{0}{\tr{\vzero}}{\vzero}{\frac{\ogram{\pvmu}\minv{\pvsig}}{\psnrsqopt}
		- \eye} }\basev[1] = \twobyone{0}{\vzero},
\end{align*}
thus $a_1 = 0$.

Now consider
\begin{align*}
	a_2 &= \wrapParens{\kurty-1} \trace{ \qform{\Mtx{F}}{\Mtx{H}}
	2 \Simm \wrapParens{\AkronA{\apvsig}} },\\
	&= \wrapParens{\kurty-1} \trace{ 2\Simm \wrapParens{\AkronA{\apvsig}}
	\qform{\Mtx{F}}{\Mtx{H}} }.
\end{align*}
Note that
\begin{align*}
	\wrapParens{\AkronA{\apvsig}} \qform{\Mtx{F}}{\Mtx{H}} 
	&= \oneby{\psnropt}\wrapParens{\AkronA{\apvsig}} \wrapParens{%
		\wrapParens{\minv{\pvsm}\ogram{\basev[1]}\minv{\pvsm}} \kron
		\twobytwo{0}{\tr{\vzero}}{\vzero}{\frac{\minv{\pvsig}\ogram{\pvmu}\minv{\pvsig}}{\psnrsqopt}
		- \minv{\pvsig}} },\\
	&= \oneby{\psnropt}\wrapParens{\apvsig \minv{\pvsm}\ogram{\basev[1]}\minv{\pvsm}} \kron
		\twobytwo{0}{\tr{\vzero}}{\vzero}{\frac{\ogram{\pvmu}\minv{\pvsig}}{\psnrsqopt}
		- \eye},\\
	&= \oneby{\psnropt}\wrapParens{\twobyone{0}{-\pvmu} \asrowvec{1+\psnrsqopt,-\pportwopt}} \kron
		\twobytwo{0}{\tr{\vzero}}{\vzero}{\frac{\ogram{\pvmu}\minv{\pvsig}}{\psnrsqopt}
		- \eye}.
\end{align*}
So
\begin{align*}
	a_2 &= \frac{\kurty-1}{\psnropt} \trace{ \twobyone{0}{-\pvmu}
	\asrowvec{1+\psnrsqopt,-\pportwopt}}\trace{\twobytwo{0}{\tr{\vzero}}{\vzero}{\frac{\ogram{\pvmu}\minv{\pvsig}}{\psnrsqopt}
		- \eye}},\\ 
	&\phantom{=}\,
	+ \frac{\kurty-1}{\psnropt} \trace{ \twobyone{0}{-\pvmu}
	\asrowvec{1+\psnrsqopt,-\pportwopt} \twobytwo{0}{\tr{\vzero}}{\vzero}{\frac{\ogram{\pvmu}\minv{\pvsig}}{\psnrsqopt}
		- \eye}},\\
	&= \frac{\kurty-1}{\psnropt} \psnrsqopt\wrapParens{1 - \nlatf}
	+ \frac{\kurty-1}{\psnropt} \trace{ \twobyone{0}{-\pvmu}
	\asrowvec{0,\tr{\vzero}} },\\
	&= \wrapParens{\kurty-1}\psnropt\wrapParens{1 - \nlatf}.
\end{align*}

Now consider
\begin{align*}
	a_3 &= \trace{ \qform{\Mtx{F}}{\Mtx{H}}
	2 \Simm \wrapParens{\AkronA{\pvsm}} }.
\end{align*}
Proceeding as before we have
\begin{align*}
	\wrapParens{\AkronA{\pvsm}} \qform{\Mtx{F}}{\Mtx{H}} 
	&=\oneby{\psnropt}\wrapParens{\AkronA{\pvsm}} \wrapParens{%
		\wrapParens{\minv{\pvsm}\ogram{\basev[1]}\minv{\pvsm}} \kron
		\twobytwo{0}{\tr{\vzero}}{\vzero}{\frac{\minv{\pvsig}\ogram{\pvmu}\minv{\pvsig}}{\psnrsqopt}
		- \minv{\pvsig}} },\\
	&= \oneby{\psnropt}\wrapParens{\basev[1] \asrowvec{1+\psnrsqopt,-\pportwopt}} \kron
		\twobytwo{0}{\tr{\vzero}}{\vzero}{\frac{\ogram{\pvmu}\minv{\pvsig}}{\psnrsqopt}
		- \eye}.
\end{align*}
Then 
\begin{align*}
	a_3 &= \oneby{\psnropt} \trace{ \basev[1] \asrowvec{1+\psnrsqopt,-\pportwopt}}\trace{\twobytwo{0}{\tr{\vzero}}{\vzero}{\frac{\ogram{\pvmu}\minv{\pvsig}}{\psnrsqopt}
		- \eye}},\\ 
	&\phantom{=}\,
	+ \oneby{\psnropt} \trace{ \basev[1]
	\asrowvec{1+\psnrsqopt,-\pportwopt} \twobytwo{0}{\tr{\vzero}}{\vzero}{\frac{\ogram{\pvmu}\minv{\pvsig}}{\psnrsqopt}
		- \eye}},\\
	&= \oneby{\psnropt}\wrapParens{1+\psnrsqopt}\wrapParens{1 - \nlatf}
	+ \oneby{\psnropt} \trace{ \basev[1] \asrowvec{0,\tr{\vzero}} },\\
	&= \oneby{\psnropt}\wrapParens{1+\psnrsqopt}\wrapParens{1 - \nlatf}.
\end{align*}

Finally 
\begin{align*}
	a_4 &= \trace{ \qform{\Mtx{F}}{\Mtx{H}}
	2 \Simm \wrapParens{\AkronA{\ogram{\apvmu}}} }.
\end{align*}
We have
\begin{align*}
	\wrapParens{\AkronA{\ogram{\apvmu}}} \qform{\Mtx{F}}{\Mtx{H}} 
	&=\oneby{\psnropt}
		\wrapParens{\ogram{\apvmu}\minv{\pvsm}\ogram{\basev[1]}\minv{\pvsm}} \kron
		\wrapParens{\ogram{\apvmu}\twobytwo{0}{\tr{\vzero}}{\vzero}{\frac{\minv{\pvsig}\ogram{\pvmu}\minv{\pvsig}}{\psnrsqopt} - \minv{\pvsig}}},\\
	&=\oneby{\psnropt}
		\wrapParens{\ogram{\apvmu}\minv{\pvsm}\ogram{\basev[1]}\minv{\pvsm}} \kron
		\mzero = \mzero,
\end{align*}
because $\trAB{\pvmu}{\wrapParens{\frac{\minv{\pvsig}\ogram{\pvmu}\minv{\pvsig}}{\psnrsqopt}
- \minv{\pvsig}}} = \vzero$. Thus $a_4 = 0$.

Putting them together we have
\begin{align*}
\trace{ \qform{\Mtx{F}}{\Mtx{H}}\pvvar[0] } &=
	\frac{\kurty \psnrsqopt + 1}{\psnropt}\wrapParens{1 - \nlatf}.
\end{align*}

Now consider the term with \ogram{\vect{h}} in
\corollaryref{psnr_minus_ssropt_moments}. Define the four terms as
\begin{align*}
	\trace{ \ogram{\vect{h}} \pvvar[0] } 
	&= b_1 + b_2 + b_3 - b_4.
\end{align*}
Again, we will casually drop the \Elim and \Dupp as needed.

From \theoremref{psnr_and_ssropt_moments}, 
$$
\vect{h} = - \AkronA{\wrapParens{\tr{\basev[1]}\minv{\pvsm}}} \Dupp.
$$
Continuing in order, we have:
{\footnotesize
\begin{align*}
	b_1 &= \wrapParens{\kurty-1} \trace{ \ogram{\wrapParens{\AkronA{\wrapParens{\tr{\basev[1]}\minv{\pvsm}}} }}
	\ogram{\fvec{\apvsig}}},\\
	&= \wrapParens{\kurty-1} \trace{
		\wrapParens{\tr{\basev[1]}\minv{\pvsm}\apvsig\minv{\pvsm}\basev[1]}
		\wrapParens{\tr{\basev[1]}\minv{\pvsm}\apvsig\minv{\pvsm}\basev[1]} },\\
	&= \wrapParens{\kurty-1} \trace{
		\asrowvec{0 -\tr{\pvmu}}\twobyone{1+\psnrsqopt}{-\pportwopt} 
		\asrowvec{0 -\tr{\pvmu}}\twobyone{1+\psnrsqopt}{-\pportwopt} } =
	\wrapParens{\kurty-1} \psnropt^4.\\
	b_2 
	&= \wrapParens{\kurty-1} \trace{ 2\Simm \wrapParens{\AkronA{\apvsig}}
	\wrapParens{\AkronA{\wrapParens{\ogram{\twobyone{1+\psnrsqopt}{-\pportwopt}}}}} },\\
	&= \wrapParens{\kurty-1} \trace{ 2\Simm 
	\wrapParens{\AkronA{\wrapParens{{\twobyone{0}{-\pvmu}}\tr{\twobyone{1+\psnrsqopt}{-\pportwopt}} }}} },\\
	&= \wrapParens{\kurty-1} \psnropt^4  +
	\wrapParens{\kurty-1}
	\trace{\twobyone{0}{-\pvmu}\tr{\twobyone{1+\psnrsqopt}{-\pportwopt}}
	\twobyone{0}{-\pvmu}\tr{\twobyone{1+\psnrsqopt}{-\pportwopt}} },\\
	&= 2 \wrapParens{\kurty-1} \psnropt^4.\\
	b_3 
	&= \trace{ 2\Simm \wrapParens{\AkronA{\pvsm}}
	\wrapParens{\AkronA{\wrapParens{\ogram{\twobyone{1+\psnrsqopt}{-\pportwopt}}}}} },\\
	&= \trace{ 2\Simm 
	\wrapParens{\AkronA{\wrapParens{\basev[1]\tr{\twobyone{1+\psnrsqopt}{-\pportwopt}} }}} },\\
	&= \wrapParens{1 + \psnrsqopt}^2 +
	\trace{\basev[1]\tr{\twobyone{1+\psnrsqopt}{-\pportwopt}}
	\basev[1]\tr{\twobyone{1+\psnrsqopt}{-\pportwopt}} },\\
	&= 2 \wrapParens{1 + \psnrsqopt}^2.\\
	b_4 
	&= \trace{ 2\Simm \wrapParens{\AkronA{\ogram{\apvmu}}} 
	\wrapParens{\AkronA{\wrapParens{\ogram{\twobyone{1+\psnrsqopt}{-\pportwopt}}}}} },\\
	&= \trace{ 2\Simm 
	\wrapParens{\AkronA{\wrapParens{\apvmu \tr{\twobyone{1+\psnrsqopt}{-\pportwopt}} }}} },\\
	&= 1 + 
	\trace{\apvmu\tr{\twobyone{1+\psnrsqopt}{-\pportwopt}}
	\apvmu\tr{\twobyone{1+\psnrsqopt}{-\pportwopt}} } = 2.
\end{align*}}

Collecting terms, 
\begin{align*}
	\trace{ \ogram{\vect{h}} \pvvar[0] } 
	&= 3 \wrapParens{\kurty-1} \psnropt^4 + 2 \wrapParens{1 + \psnrsqopt}^2 - 2 
	= 3 \wrapParens{\kurty-1} \psnropt^4 + 4 \psnrsqopt + 2 \psnropt^4,\\
	&= \wrapParens{3\kurty-1} \psnropt^4 + 4 \psnrsqopt.
\end{align*}

\end{proof}

The proof of \theoremref{mp_snr_ci_elliptical} is rather tedious;
in \secref{confirming_theorem}, we confirm that it is correct.

\begin{proof}[Proof of \lemmaref{gram_moments}]
We first note that 
\begin{align*}
\gram{\amreti}_{i,j} &= \sum_t \trbasev[t]\amreti\basev[i] \trbasev[t]\amreti\basev[j],\\
	&= \sum_t \trbasev[j]\kron\trbasev[t]\kron\trbasev[i]\kron\trbasev[t] \fvec{\fvec{\amreti}\tr{\fvec{\amreti}}}.
\end{align*}
We then note that
$$
\Eof{\fvec{\amreti}\tr{\fvec{\amreti}}} = \fvec{\pmmu}\tr{\fvec{\pmmu}} + \pmsigr\kron\pmsigl.
$$
Then 
\begin{align*}
	\Eof{\gram{\amreti}_{i,j}} 
	&= \sum_t \trbasev[j]\kron\trbasev[t]\kron\trbasev[i]\kron\trbasev[t] \wrapBracks{
		\fvec{\fvec{\pmmu}\tr{\fvec{\pmmu}}} + \fvec{\pmsigr\kron\pmsigl}
		},\\
		&= \wrapParens{\gram{\pmmu}}_{i,j} + 
	\sum_t \fvec{\trbasev[i]\kron\trbasev[t] \wrapParens{\pmsigr\kron\pmsigl} \basev[j]\kron\basev[t]},\\
		&= \wrapParens{\gram{\pmmu}}_{i,j} + 
	\sum_t \pmsigr_{i,j} \pmsigl_{t,t}.
\end{align*}
This establishes the mean.

Now for the variance. By \eqnref{theta_covar_elliptical} with $\kurty=1$ (and with $\ssiz=1$ since we are
	vectorizing the \amreti), letting $\vect{y} = \fvec{\fvec{\amreti}\tr{\fvec{\amreti}}}$,
{\small
\begin{align*}
\VAR{\vect{y}} &= 
	\wrapParens{\eye + \Komm}\wrapBracks{\AkronA{\wrapParens{\fvec{\pmmu}\tr{\fvec{\pmmu}} + \pmsigr\kron\pmsigl}}}\\
	&\phantom{=}\,- 
	\wrapParens{\eye + \Komm}\wrapBracks{\AkronA{\ogram{\fvec{\pmmu}}}},\\
	&=2\Simm \wrapBracks{\AkronA{\pmsigr\kron\pmsigl} 
	+ \wrapParens{\fvec{\pmmu}\tr{\fvec{\pmmu}}}\kron\pmsigr\kron\pmsigl
	+ \pmsigr\kron\pmsigl\kron\wrapParens{\fvec{\pmmu}\tr{\fvec{\pmmu}}}}.
\end{align*}%
}%
Let $\vect{z} = \fvec{\gram{\amreti}}$. Then
\begin{align*}
\VAR{\vect{z}}_{ij,kl} 
	&= \sum_{t,s} \trbasev[j]\kron\trbasev[t]\kron\trbasev[i]\kron\trbasev[t] \VAR{\vect{y}}
	\basev[l]\kron\basev[s]\kron\basev[k]\kron\basev[s],\\
	&= \sum_{t,s} \pmsigr[j,l]\pmsigl[t,s]\pmsigr[i,k]\pmsigl[t,s]
	+ \pmmu[t,j] \pmmu[s,l]\pmsigr[i,k]\pmsigl[t,s]
	+ \pmmu[t,i] \pmmu[s,k]\pmsigr[j,l]\pmsigl[t,s]\\
	&\phantom{=}\,
	+ \sum_{t,s} \pmsigr[i,l]\pmsigl[t,s]\pmsigr[j,k]\pmsigl[t,s]
	+ \pmmu[t,i] \pmmu[s,l]\pmsigr[j,k]\pmsigl[t,s]
	+ \pmmu[t,j] \pmmu[s,k]\pmsigr[i,l]\pmsigl[t,s]\\  
	&= \pmsigr[j,l]\pmsigr[i,k]\trace{\pmsigl\pmsigl}
	+ \wrapParens{\tr{\pmmu}\pmsigl\pmmu}_{j,l}\pmsigr[i,k]
	+ \wrapParens{\tr{\pmmu}\pmsigl\pmmu}_{i,k}\pmsigr[j,l]\\
	&\phantom{=}\,
	+ \pmsigr[i,l]\pmsigr[j,k]\trace{\pmsigl\pmsigl}   
	+ \wrapParens{\tr{\pmmu}\pmsigl\pmmu}_{i,l}\pmsigr[j,k]
	+ \wrapParens{\tr{\pmmu}\pmsigl\pmmu}_{j,k}\pmsigr[i,l].
\end{align*}
\end{proof}


\section{Confirming \theoremref{mp_snr_ci_elliptical}}
\label{sec:confirming_theorem}

Here we `confirm' \theoremref{mp_snr_ci_elliptical}, by which we mean we 
generate random population values of \pvmu and \pvsig, then compute 
\pvvar[0] from \theoremref{theta_covar_elliptical}, and use it with
to generate the estimate values of
$\E{\pSNR{\minv{\svsm} ; \pvsm, 0} - \ssropt}$ and
$\VAR{\pSNR{\minv{\svsm} ; \pvsm, 0} - \ssropt}$ from 
\corollaryref{psnr_minus_ssropt_moments}. We then compare these to the
forms given in \theoremref{mp_snr_ci_elliptical}, finding that
they are equal to machine precision. Note that this does \emph{nothing}
to confirm \theoremref{theta_covar_elliptical} nor 
\corollaryref{psnr_minus_ssropt_moments}, rather it gives some assurance
that \theoremref{mp_snr_ci_elliptical} is consistent with those two
results.

\begin{kframe}
\begin{alltt}
\hlkwd{library}\hlstd{(matrixcalc)}

\hlcom{# linear algebra utilities}
\hlcom{# quadratic form x' A x}
\hlstd{qform} \hlkwb{<-} \hlkwa{function}\hlstd{(}\hlkwc{A}\hlstd{,}\hlkwc{x}\hlstd{) \{} \hlkwd{t}\hlstd{(x)} \hlopt{%*%} \hlstd{(A} \hlopt{%*%} \hlstd{x) \}}
\hlcom{# quadratic form x A x'}
\hlstd{qoform} \hlkwb{<-} \hlkwa{function}\hlstd{(}\hlkwc{A}\hlstd{,}\hlkwc{x}\hlstd{) \{} \hlkwd{qform}\hlstd{(A,}\hlkwd{t}\hlstd{(x)) \}}
\hlcom{# outer gram: x x'}
\hlstd{ogram} \hlkwb{<-} \hlkwa{function}\hlstd{(}\hlkwc{x}\hlstd{) \{ x} \hlopt{%*%} \hlkwd{t}\hlstd{(x) \}}
\hlcom{# A kron A}
\hlstd{AkronA} \hlkwb{<-} \hlkwa{function}\hlstd{(}\hlkwc{A}\hlstd{) \{} \hlkwd{kronecker}\hlstd{(A,A,}\hlkwc{FUN}\hlstd{=}\hlstr{'*'}\hlstd{) \}}
\hlcom{# matrix trace}
\hlstd{matrace} \hlkwb{<-} \hlkwa{function}\hlstd{(}\hlkwc{A}\hlstd{) \{ matrixcalc}\hlopt{::}\hlkwd{matrix.trace}\hlstd{(A) \}}

\hlcom{# duplication, elimination, commutation, symmetrizer}
\hlcom{# commutation matrix;}
\hlcom{# is p^2 x p^2}
\hlstd{Comm} \hlkwb{<-} \hlkwa{function}\hlstd{(}\hlkwc{p}\hlstd{) \{}
  \hlstd{Ko} \hlkwb{<-} \hlkwd{diag}\hlstd{(p}\hlopt{^}\hlnum{2}\hlstd{)}
  \hlstd{dummy} \hlkwb{<-} \hlkwd{diag}\hlstd{(p)}
  \hlstd{newidx} \hlkwb{<-} \hlstd{(}\hlkwd{row}\hlstd{(dummy)} \hlopt{-} \hlnum{1}\hlstd{)} \hlopt{*} \hlkwd{ncol}\hlstd{(dummy)} \hlopt{+} \hlkwd{col}\hlstd{(dummy)}
  \hlstd{Ko[newidx,,}\hlkwc{drop}\hlstd{=}\hlnum{FALSE}\hlstd{]}
\hlstd{\}}
\hlcom{# Symmetrizing matrix, N}
\hlcom{# is p^2 x p^2}
\hlstd{Symm} \hlkwb{<-} \hlkwa{function}\hlstd{(}\hlkwc{p}\hlstd{) \{} \hlnum{0.5} \hlopt{*} \hlstd{(}\hlkwd{Comm}\hlstd{(p)} \hlopt{+} \hlkwd{diag}\hlstd{(p}\hlopt{^}\hlnum{2}\hlstd{)) \}}
\hlcom{# Duplication & Elimination matrices}
\hlstd{Dupp} \hlkwb{<-} \hlkwa{function}\hlstd{(}\hlkwc{p}\hlstd{) \{ matrixcalc}\hlopt{::}\hlkwd{duplication.matrix}\hlstd{(}\hlkwc{n}\hlstd{=p) \}}
\hlstd{Elim} \hlkwb{<-} \hlkwa{function}\hlstd{(}\hlkwc{p}\hlstd{) \{ matrixcalc}\hlopt{::}\hlkwd{elimination.matrix}\hlstd{(}\hlkwc{n}\hlstd{=p) \}}
\hlcom{# vector function}
\hlstd{fvec} \hlkwb{<-} \hlkwa{function}\hlstd{(}\hlkwc{x}\hlstd{) \{}
  \hlkwd{dim}\hlstd{(x)} \hlkwb{<-} \hlkwd{c}\hlstd{(}\hlkwd{length}\hlstd{(x),}\hlnum{1}\hlstd{)}
  \hlstd{x}
\hlstd{\}}

\hlcom{# compute Theta from mu, Sigma}
\hlstd{make_Theta} \hlkwb{<-} \hlkwa{function}\hlstd{(}\hlkwc{mu}\hlstd{,}\hlkwc{Sigma}\hlstd{) \{}
  \hlkwd{stopifnot}\hlstd{(}\hlkwd{nrow}\hlstd{(Sigma)} \hlopt{==} \hlkwd{ncol}\hlstd{(Sigma),}
            \hlkwd{nrow}\hlstd{(Sigma)} \hlopt{==} \hlkwd{length}\hlstd{(mu))}
  \hlstd{mu_twid} \hlkwb{<-} \hlkwd{c}\hlstd{(}\hlnum{1}\hlstd{,mu)}
  \hlstd{Sg_twid} \hlkwb{<-} \hlkwd{cbind}\hlstd{(}\hlnum{0}\hlstd{,}\hlkwd{rbind}\hlstd{(}\hlnum{0}\hlstd{,Sigma))}
  \hlstd{Theta}   \hlkwb{<-} \hlstd{Sg_twid} \hlopt{+} \hlkwd{ogram}\hlstd{(mu_twid)}
  \hlstd{Theta_i} \hlkwb{<-} \hlkwd{solve}\hlstd{(Theta)}
  \hlstd{zeta_sq} \hlkwb{<-} \hlstd{Theta_i[}\hlnum{1}\hlstd{,}\hlnum{1}\hlstd{]} \hlopt{-} \hlnum{1}
  \hlkwd{list}\hlstd{(}\hlkwc{pp1}\hlstd{=}\hlkwd{nrow}\hlstd{(Theta),}
       \hlkwc{p}\hlstd{=}\hlkwd{nrow}\hlstd{(Theta)}\hlopt{-}\hlnum{1}\hlstd{,}
       \hlkwc{mu_twid}\hlstd{=mu_twid,}
       \hlkwc{Sg_twid}\hlstd{=Sg_twid,}
       \hlkwc{Theta}\hlstd{=Theta,}
       \hlkwc{Theta_i}\hlstd{=Theta_i,}
       \hlkwc{zeta_sq}\hlstd{=zeta_sq,}
       \hlkwc{zeta}\hlstd{=}\hlkwd{sqrt}\hlstd{(zeta_sq))}
\hlstd{\}}

\hlcom{# construct four parts of Omega_0 from the Theorem;}
\hlstd{Omega_bits} \hlkwb{<-} \hlkwa{function}\hlstd{(}\hlkwc{mu}\hlstd{,}\hlkwc{Sigma}\hlstd{,}\hlkwc{kurtf}\hlstd{=}\hlnum{1}\hlstd{) \{}
  \hlstd{tvals} \hlkwb{<-} \hlkwd{make_Theta}\hlstd{(mu,Sigma)}

  \hlstd{Nmat} \hlkwb{<-} \hlkwd{Symm}\hlstd{(tvals}\hlopt{$}\hlstd{pp1)}
  \hlstd{P1} \hlkwb{<-} \hlstd{(kurtf}\hlopt{-}\hlnum{1}\hlstd{)} \hlopt{*} \hlkwd{ogram}\hlstd{(}\hlkwd{fvec}\hlstd{(tvals}\hlopt{$}\hlstd{Sg_twid))}
  \hlstd{P2} \hlkwb{<-} \hlstd{(kurtf}\hlopt{-}\hlnum{1}\hlstd{)} \hlopt{*} \hlnum{2} \hlopt{*} \hlstd{Nmat} \hlopt{%*%} \hlkwd{AkronA}\hlstd{(tvals}\hlopt{$}\hlstd{Sg_twid)}
  \hlstd{P3} \hlkwb{<-} \hlnum{2} \hlopt{*} \hlstd{Nmat}  \hlopt{%*%} \hlkwd{AkronA}\hlstd{(tvals}\hlopt{$}\hlstd{Theta)}
  \hlstd{P4} \hlkwb{<-} \hlnum{2} \hlopt{*} \hlstd{Nmat}  \hlopt{%*%} \hlkwd{AkronA}\hlstd{(}\hlkwd{ogram}\hlstd{(tvals}\hlopt{$}\hlstd{mu_twid))}
  \hlkwd{list}\hlstd{(}\hlkwc{P1}\hlstd{=P1,}\hlkwc{P2}\hlstd{=P2,}\hlkwc{P3}\hlstd{=P3,}\hlkwc{P4}\hlstd{=P4)}
\hlstd{\}}
\hlstd{Omega_0} \hlkwb{<-} \hlkwa{function}\hlstd{(}\hlkwc{mu}\hlstd{,}\hlkwc{Sigma}\hlstd{,}\hlkwc{kurtf}\hlstd{=}\hlnum{1}\hlstd{) \{}
  \hlstd{obits} \hlkwb{<-} \hlkwd{Omega_bits}\hlstd{(}\hlkwc{mu}\hlstd{=mu,}\hlkwc{Sigma}\hlstd{=Sigma,}\hlkwc{kurtf}\hlstd{=kurtf)}
  \hlstd{obits}\hlopt{$}\hlstd{P1} \hlopt{+} \hlstd{obits}\hlopt{$}\hlstd{P2} \hlopt{+} \hlstd{obits}\hlopt{$}\hlstd{P3} \hlopt{-} \hlstd{obits}\hlopt{$}\hlstd{P4}
\hlstd{\}}
\hlcom{# construct matrices F and H and vector h }
\hlstd{FHh_values} \hlkwb{<-} \hlkwa{function}\hlstd{(}\hlkwc{mu}\hlstd{,}\hlkwc{Sigma}\hlstd{,}\hlkwc{R}\hlstd{=}\hlnum{1}\hlstd{) \{}
  \hlstd{tvals} \hlkwb{<-} \hlkwd{make_Theta}\hlstd{(mu,Sigma)}
  \hlstd{zeta_sq} \hlkwb{<-} \hlstd{tvals}\hlopt{$}\hlstd{zeta_sq}
  \hlstd{zeta} \hlkwb{<-} \hlkwd{sqrt}\hlstd{(zeta_sq)}
  \hlstd{mp} \hlkwb{<-} \hlkwd{t}\hlstd{(}\hlkwd{t}\hlstd{(}\hlopt{-}\hlstd{tvals}\hlopt{$}\hlstd{Theta_i[}\hlnum{2}\hlopt{:}\hlstd{tvals}\hlopt{$}\hlstd{pp1,}\hlnum{1}\hlstd{]))}

  \hlstd{Fmat} \hlkwb{<-} \hlstd{(}\hlnum{1} \hlopt{/} \hlstd{R}\hlopt{^}\hlnum{2}\hlstd{)} \hlopt{*} \hlstd{((}\hlkwd{ogram}\hlstd{(mu)} \hlopt{/} \hlstd{zeta)} \hlopt{-} \hlstd{(zeta} \hlopt{*} \hlstd{Sigma))}
  \hlstd{H1} \hlkwb{<-} \hlopt{-} \hlkwd{cbind}\hlstd{(}\hlkwd{cbind}\hlstd{((}\hlnum{1} \hlopt{/} \hlstd{(}\hlnum{2} \hlopt{*}\hlstd{zeta_sq))} \hlopt{*} \hlstd{mp,}
                      \hlstd{(R} \hlopt{/} \hlstd{zeta)} \hlopt{*} \hlkwd{diag}\hlstd{(tvals}\hlopt{$}\hlstd{p)),}
                \hlkwd{matrix}\hlstd{(}\hlnum{0}\hlstd{,}\hlkwc{nrow}\hlstd{=tvals}\hlopt{$}\hlstd{p,}\hlkwc{ncol}\hlstd{=tvals}\hlopt{$}\hlstd{p} \hlopt{*} \hlstd{tvals}\hlopt{$}\hlstd{pp1))}
  \hlstd{H2} \hlkwb{<-} \hlopt{-} \hlkwd{AkronA}\hlstd{(tvals}\hlopt{$}\hlstd{Theta_i)}
  \hlstd{Hmat} \hlkwb{<-} \hlstd{H1} \hlopt{%*%} \hlstd{H2} \hlopt{%*%} \hlkwd{Dupp}\hlstd{(tvals}\hlopt{$}\hlstd{pp1)}

  \hlstd{hvec} \hlkwb{<-} \hlopt{-} \hlkwd{AkronA}\hlstd{(tvals}\hlopt{$}\hlstd{Theta_i[}\hlnum{1}\hlstd{,,}\hlkwc{drop}\hlstd{=}\hlnum{FALSE}\hlstd{])} \hlopt{%*%} \hlkwd{Dupp}\hlstd{(tvals}\hlopt{$}\hlstd{pp1)}
  \hlkwd{list}\hlstd{(}\hlkwc{Fmat}\hlstd{=Fmat,} \hlkwc{Hmat}\hlstd{=Hmat,} \hlkwc{hvec}\hlstd{=hvec)}
\hlstd{\}}

\hlcom{# compute the expected bias and variance}
\hlcom{# in two ways, one directly from the identity of}
\hlcom{# H, F, h and Omega_0}
\hlcom{# the other from the theorem}
\hlcom{# }
\hlcom{# then compute relative errors of the theorem's approximation.}
\hlstd{testit} \hlkwb{<-} \hlkwa{function}\hlstd{(}\hlkwc{mu}\hlstd{,}\hlkwc{Sigma}\hlstd{,}\hlkwc{kurtf}\hlstd{=}\hlnum{1}\hlstd{,}\hlkwc{R}\hlstd{=}\hlnum{1}\hlstd{) \{}
  \hlstd{p} \hlkwb{<-} \hlkwd{length}\hlstd{(mu)}
  \hlstd{tvals} \hlkwb{<-} \hlkwd{make_Theta}\hlstd{(mu,Sigma)}

  \hlstd{OmegMat} \hlkwb{<-} \hlkwd{Omega_0}\hlstd{(mu,Sigma,}\hlkwc{kurtf}\hlstd{=kurtf)}
  \hlstd{Omega} \hlkwb{<-} \hlkwd{qoform}\hlstd{(}\hlkwc{A}\hlstd{=OmegMat,}\hlkwd{Elim}\hlstd{(p}\hlopt{+}\hlnum{1}\hlstd{))}
  \hlstd{FHh} \hlkwb{<-} \hlkwd{FHh_values}\hlstd{(mu,Sigma,}\hlkwc{R}\hlstd{=R)}

  \hlcom{# now test corollary 'true' values}
  \hlstd{HtFH} \hlkwb{<-} \hlkwd{qform}\hlstd{(}\hlkwc{A}\hlstd{=FHh}\hlopt{$}\hlstd{Fmat,}\hlkwc{x}\hlstd{=FHh}\hlopt{$}\hlstd{Hmat)}
  \hlstd{Vari} \hlkwb{<-} \hlkwd{as.numeric}\hlstd{(}\hlkwd{qoform}\hlstd{(Omega,FHh}\hlopt{$}\hlstd{hvec)} \hlopt{/} \hlstd{(}\hlnum{4} \hlopt{*} \hlstd{tvals}\hlopt{$}\hlstd{zeta_sq))}
  \hlstd{Eb1} \hlkwb{<-} \hlkwd{as.numeric}\hlstd{((}\hlnum{1}\hlopt{/}\hlnum{2}\hlstd{)} \hlopt{*} \hlkwd{matrace}\hlstd{((HtFH)} \hlopt{%*%} \hlstd{Omega))}
  \hlstd{Eb2} \hlkwb{<-} \hlstd{Vari} \hlopt{/} \hlstd{(}\hlnum{2} \hlopt{*} \hlstd{tvals}\hlopt{$}\hlstd{zeta)}
  \hlstd{Ebias} \hlkwb{<-} \hlstd{Eb1} \hlopt{+} \hlstd{Eb2}

  \hlcom{# Theorem says:}
  \hlstd{T_Vari}  \hlkwb{<-} \hlstd{(((}\hlnum{3} \hlopt{*} \hlstd{kurtf} \hlopt{-} \hlnum{1}\hlstd{)} \hlopt{/} \hlnum{4}\hlstd{)} \hlopt{*} \hlstd{tvals}\hlopt{$}\hlstd{zeta_sq} \hlopt{+} \hlnum{1}\hlstd{)}
  \hlstd{T_Eb1} \hlkwb{<-} \hlstd{((kurtf} \hlopt{*} \hlstd{tvals}\hlopt{$}\hlstd{zeta_sq} \hlopt{+} \hlnum{1}\hlstd{)} \hlopt{*}
            \hlstd{(}\hlnum{1} \hlopt{-} \hlstd{tvals}\hlopt{$}\hlstd{p))} \hlopt{/} \hlstd{(}\hlnum{2} \hlopt{*} \hlstd{tvals}\hlopt{$}\hlstd{zeta)}
  \hlstd{T_Ebias} \hlkwb{<-} \hlstd{((kurtf} \hlopt{*} \hlstd{tvals}\hlopt{$}\hlstd{zeta_sq} \hlopt{+} \hlnum{1}\hlstd{)} \hlopt{*}
              \hlstd{(}\hlnum{1} \hlopt{-} \hlstd{tvals}\hlopt{$}\hlstd{p)} \hlopt{+} \hlstd{T_Vari)} \hlopt{/} \hlstd{(}\hlnum{2} \hlopt{*} \hlstd{tvals}\hlopt{$}\hlstd{zeta)}

  \hlstd{ERR_Ebias} \hlkwb{<-} \hlstd{T_Ebias} \hlopt{-} \hlstd{Ebias}
  \hlstd{ERR_Vari} \hlkwb{<-} \hlstd{T_Vari} \hlopt{-} \hlstd{Vari}
  \hlkwd{max}\hlstd{(}\hlkwd{c}\hlstd{(}\hlkwd{max}\hlstd{(}\hlkwd{abs}\hlstd{(ERR_Ebias))} \hlopt{/} \hlstd{(}\hlkwd{abs}\hlstd{(Ebias)),}
        \hlkwd{max}\hlstd{(}\hlkwd{abs}\hlstd{(ERR_Vari))} \hlopt{/} \hlstd{(}\hlkwd{abs}\hlstd{(Vari))))}
\hlstd{\}}

\hlcom{# create a population}
\hlstd{myp} \hlkwb{<-} \hlnum{5}
\hlkwd{set.seed}\hlstd{(}\hlnum{1234}\hlstd{)}
\hlstd{mu} \hlkwb{<-} \hlkwd{matrix}\hlstd{(}\hlkwd{rnorm}\hlstd{(myp,}\hlnum{1}\hlstd{),}\hlkwc{ncol}\hlstd{=}\hlnum{1}\hlstd{)}
\hlstd{ZZ} \hlkwb{<-} \hlkwd{matrix}\hlstd{(}\hlkwd{rnorm}\hlstd{(}\hlnum{100}\hlstd{,myp),}\hlkwc{ncol}\hlstd{=myp)}
\hlstd{Sigma} \hlkwb{<-} \hlkwd{t}\hlstd{(ZZ)} \hlopt{%*%} \hlstd{ZZ}

\hlcom{# test it:}
\hlkwd{testit}\hlstd{(mu,Sigma,}\hlkwc{kurtf}\hlstd{=}\hlnum{2}\hlstd{,}\hlkwc{R}\hlstd{=}\hlnum{1}\hlstd{)}
\end{alltt}
\end{kframe}[1] 2.4e-14

\end{document}